\theoremstyle{claimstyle}
\newtheorem{subclaim}[theorem]{Subclaim}
\newcommand{\subsubparagraph}[1]{\textcolor{lipicsGray}{\textsf{\textbf{{#1}}}}}
\def\natnum{{\mathbb N}}
\newcommand{\lam}{\lambda} 
\newcommand{\sm}{\setminus}
\newcommand{\bigO}{\mathcal{O}}
\newcommand{\bbbn}{\mathbb{N}}
\newcommand{\bbbr}{\mathbb{R}}
\title{Quasi-Isometric Reductions Between Infinite Strings} 
\author{Karen Frilya Celine}{School of Computing, National University of Singapore, Singapore 117417}{karen.celine@u.nus.edu}{https://orcid.org/0000-0002-7078-5582}{} 
\author{Ziyuan Gao}{School of Science and Technology, Singapore University of Social Sciences, Singapore 599494 \and Kaplan Higher Education Academy, Singapore 228095}{ziyuan84@yahoo.com}{}{}
\author{Sanjay Jain}{School of Computing, National University of Singapore, Singapore 117417}{sanjay@comp.nus.edu.sg}{https://orcid.org/0000-0001-6798-8330}{}
\author{Ryan Lou}{School of Computing, National University of Singapore, Singapore 117417}{ryan.lou@u.nus.edu}{}{}
\author{Frank Stephan}{Department of Mathematics, National University of Singapore, Singapore 119076 \and School of Computing, National University of Singapore, Singapore 117417}{fstephan@comp.nus.edu.sg}{https://orcid.org/0000-0001-9152-1706}{}
\author{Guohua Wu}{School of Physical and Mathematical Sciences, Nanyang Technological University, Singapore 637371}{guohua@ntu.edu.sg}{}{}
\authorrunning{K. F. Celine, Z. Gao, S. Jain, R. Lou, F. Stephan, and G. Wu} 
\keywords{Quasi-isometry, recursion theory, infinite strings} 
\begin{document}

\maketitle

\begin{abstract}
This paper studies the recursion- and automata-theoretic aspects of large-scale geometries
of infinite strings, a subject initiated by Khoussainov and Takisaka (2017).
We first investigate several notions of quasi-isometric reductions between recursive infinite strings and prove various results on the equivalence classes of
such reductions. 
The main result is the construction of two infinite recursive strings $\alpha$
and $\beta$ such that $\alpha$ is strictly quasi-isometrically reducible
to $\beta$, but the reduction cannot be made recursive.
This answers an open problem posed by Khoussainov and Takisaka.
Furthermore, we also study automatic quasi-isometric reductions between automatic structures,
and show that automatic quasi-isometry may be separable from general quasi-isometry
depending on the growth of the automatic domain.
\end{abstract}

\section{Introduction}

Quasi-isometry is an important concept in geometric group theory
that has been used to solve problems in group theory.
Loosely speaking, two metric spaces are said to be quasi-isometric 
iff there is a mapping (called a {\em quasi-isometry}) from one 
metric space to the other that preserves the distance between any 
two points in the first metric space up to some multiplicative and
additive constants. Thus, for example, while the Euclidean plane is 
not isometric to $\mathbb{R}^2$ equipped with the taxicab distance,
the two spaces are quasi-isometric to each other since the Euclidean
distance between any two points does not differ from the
taxicab distance between them up to a multiplicative factor of $\sqrt{2}$.
The study of group properties---where groups are represented by
their Cayley graphs---that are invariant under
quasi-isometries is quite a prominent theme in geometric group theory;
examples of such group properties include hyperbolicity and growth rate
\cite{Gromov81}. 
 
Khoussainov and Takisaka \cite{Khoussainov17} introduced a refined
notion of quasi-isometry for infinite strings called 
{\em colour-preserving quasi-isometry}, thus enabling the
study of global patterns on strings and linking the study
of large-scale geometries with automata theory, recursion theory and model theory.
Here, an infinite string $\alpha \in \Sigma^\omega$ is represented by
the metric space $\bbbn$
where each number $i$ has a specific colour $\sigma \in \Sigma$ 
depending on the $i$-th symbol in the string $\alpha$.
This paper builds on the results from 
Khoussainov and Takisaka \cite{Khoussainov17} further,
by studying, in particular, {\em recursive} quasi-isometry between 
{\em recursive} infinite strings.

In addition to Khoussainov and Takisaka's paper \cite{Khoussainov17}
which studied quasi-isometry between infinite strings specifically,
quasi-isometries between hyperbolic metric spaces in general---an example of which
is an infinite string when viewed
as a coloured metric space---are well-studied in geometric
group theory. Isometries between computable metric spaces have also 
been studied by Melnikov \cite{Melnikov13}.

Among the various questions investigated by Khoussainov and Takisaka
was the computational complexity of the {\em quasi-isometry 
problem}: given any two infinite strings $\alpha$ and $\beta$, is
there a quasi-isometry from $\alpha$ to $\beta$? They found that
for any two quasi-isometric strings, a quasi-isometry 
that is recursive in the halting problem relative to $\alpha$ and $\beta$ 
always exists between them, and that
the quasi-isometry problem between any two recursive strings 
is $\Sigma^0_2$-complete \cite{Khoussainov22}\footnote{Note that 
\cite{Khoussainov22} is the journal version of \cite{Khoussainov17},
containing some corrections from the earlier paper.}.
In comparison, the corresponding problem for isometry with 
respect to recursive strings is $\Pi^0_1$-complete \cite{Melnikov13}.
Khoussainov and Takisaka also had the following open problem
which was mentioned in many talks and discussions:
{\em if a quasi-isometric reduction from $\alpha$ to $\beta$ exists,
does there always exist a recursive quasi-isometric reduction?}
This is a very natural question for computer science,
specifically for computability theory, since
it seeks to understand how complex such a reduction is.
We answered this question in the negative, that is,
there are cases where the reduction exists
but cannot be made recursive.
The fourth author's bachelor thesis \cite{Lou19} which contains this result
was cited by Khoussainov and Takisaka in the journal version \cite{Khoussainov22}
of their paper \cite{Khoussainov17}.

To complete the picture, 
the present work examines, in more detail, the recursion-theoretic 
aspects of quasi-isometries between infinite strings. 
We study various natural restrictions on quasi-isometric reductions between
strings: first, {\em many-one reductions}, where the quasi-isometric reduction 
is required to be {\em recursive} and {\em many-one}; second,
{\em one-one reductions}, which are injective many-one reductions;
third, {\em permutation quasi-isometric reductions}, which
are surjective one-one reductions. 

The main subjects of this work are the structural properties of the equivalence classes induced by the different types of reductions and the relationships between
these reductions. In accordance with recursion-theoretic terminology, 
we call an equivalence class induced by a reduction type a {\em degree} of
that reduction type. We show, for example, that within 
each many-one quasi-isometry degree, any pair of strings has a common upper 
bound as well as a common lower bound with respect to one-one reductions. 
Furthermore, there are two strings for which their many-one quasi-isometry
degrees have a unique least common upper bound. The main result is the 
separation of quasi-isometry from {\em recursive} quasi-isometry,
that is, we construct two recursive strings such that one is
quasi-isometrically reducible to the other but no recursive many-one
quasi-isometry exists between them. 
This main result answers the above-mentioned open problem
posed by Khoussainov and Takisaka.

In addition, we also investigate the automata-theoretic aspects of
quasi-isometries.
In this case, the question is whether automatic quasi-isometry can be separated
from general quasi-isometry.
Our results show that the answer depends on the growth of the domain.
In linear domains, quasi-isometry is equivalent to automatic quasi-isometry.
On the other hand, in superlinear but polynomial domains,
one can always find automatic colourings $\alpha$ and $\beta$ such that
$\alpha$ is quasi-isometrically reducible to $\beta$ but not automatically.
While it is known that quasi-isometry can be separated
from its automatic counterparts in some exponential domains,
it is unknown whether such separation is always possible
in all domains with exponential growth.

\section{Notation}

Any unexplained recursion-theoretic notation may be found in
\cite{Odifreddi89,Rogers67,Soare87}. The set of positive integers
will be denoted by $\bbbn$; $\bbbn \cup \{0\}$ will be denoted by
$\bbbn_0$.
The finite set $\Sigma$ will denote the alphabet used.
We assume knowledge of elementary computability
theory over different size alphabets~\cite{Calude02}.
An infinite string $\alpha \in \Sigma^\omega$ can also be
viewed as a $\Sigma$-valued function defined on $\bbbn$.  
The length of an interval $I$ is denoted by $|I|$.
For $\alpha_i \in \Sigma^*$ and $i \in \bbbn$, 
we write $(\alpha_i)_{i=1}^{\infty}$ to denote $\alpha_1\alpha_2\cdots$,
a possibly infinite string.

\section{Coloured Metric Spaces and Infinite Strings}
\label{section:coloured.metric.space}

\begin{definition}[Coloured Metric Spaces, \cite{Khoussainov17}]
\label{defn:colouredmetricspace}
A {\em coloured metric space $(M;d_M,Cl)$} consists of
the {\em underlying metric space} $(M;d_M)$ with metric $d_M$ 
and the colour function $Cl: M \to \Sigma$,
where $\Sigma$ is a finite set of colours called an {\em alphabet}.
We say that {\em $m \in M$ has colour $\sigma \in \Sigma$} if $\sigma = Cl(m)$.
\end{definition}


\begin{definition}
[Quasi-isometries Between Coloured Metric Spaces, \cite{Khoussainov17}]
\label{defn:quasiisometry}

For any $A \geq 1$ and $B \geq 0$, an {\em $(A,B)$-quasi-isometry} from a metric space $\mathcal{M}_1 = (M_1;d_1)$ to a metric space 
$\mathcal{M}_2 = (M_2;d_2)$ is a function $f:M_1 \to M_2$ 
such that for all $x,y \in M_1$, 
$\frac{1}{A}\cdot d_1(x,y) - B \leq d_2(f(x),f(y)) \leq A\cdot d_1(x,y) + B$, 
and for all $y \in M_2$, there exists an $x \in M_1$ such that
$d_2(f(x),y) \leq A$.   

Given two coloured metric spaces $\mathcal{M}_1 = (M_1;d_1,Cl_1)$
and $\mathcal{M}_2 = (M_2;d_2,\linebreak[2]Cl_2)$, a function $f:M_1 \to M_2$
is a {\em quasi-isometric reduction} from $\mathcal{M}_1$ to
$\mathcal{M}_2$ iff for some $A \geq 1$ and $B \geq 0$, 
$f$ is an $(A,B)$-quasi-isometry from $(M_1;d_1)$ to $(M_2;d_2)$ 
and $f$ is {\em colour-preserving}, that is, for all 
$x \in M_1$, $Cl_1(x) = Cl_2(f(x))$. 
\end{definition}

An infinite string $\alpha$ can then be seen as a coloured metric space
$(\bbbn;d,\alpha)$, where $d$ is the metric on $\bbbn$ defined by
$d(i,j) = |i-j|$ and $\alpha:\bbbn \to \Sigma$ is the colour function.
For any two infinite strings $\alpha$ and $\beta$, we write 
$\alpha \leq_{qi} \beta$ to mean that there is a quasi-isometric reduction
from $\alpha$ to $\beta$. The relation $\leq_{qi}$ is a preorder on
$\Sigma^{\omega}$. For any pair of distinct letters $a_1, a_2 \in \Sigma$,
$a_1^{\omega}$ and $a_2^{\omega}$ are incomparable
with respect to $\leq_{qi}$, so this relation is not total.

The following proposition gives a useful
simplification of the definition of quasi-isometry in the context of
infinite strings.

\begin{proposition}\label{prop:quasi.iso.alt.def}
Given two infinite strings $\alpha$ and $\beta$, let 
$f:\bbbn\to\bbbn$ be a colour-preserving function.
Then, $f$ is a quasi-isometric reduction from $\alpha$ to $\beta$ iff
there exists a constant $C \geq 1$ such that for all $x,y$ in the domain
of $\alpha$, the following conditions hold:
\begin{enumerate}[(a)]
\item $d(f(x),f(x+1)) \leq C$;
\item $x + C < y \Rightarrow f(x) < f(y)$.
\end{enumerate}
\end{proposition}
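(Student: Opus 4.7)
The plan is to prove both directions of the equivalence. For the forward direction, assume $f$ is an $(A,B)$-quasi-isometric reduction from $\alpha$ to $\beta$. Condition~(a) follows immediately by applying the upper bound from Definition~\ref{defn:quasiisometry} to the pair $(x,x+1)$: $d(f(x),f(x+1))\le A+B$. For~(b), I first observe that $f(n)\to\infty$, since the lower bound gives $|f(n)-f(1)|\ge (n-1)/A-B$, and with $f(n)\ge 1$ the alternative $f(n)\le f(1)-(n-1)/A+B$ becomes impossible for large $n$, forcing $f(n)\ge f(1)+(n-1)/A-B\to\infty$.

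The main obstacle is obtaining a uniform $C$ for~(b); the naive inference from the lower bound alone only forbids descents $f(y)\le f(x)$ of length up to roughly $Af(x)$, which grows with $x$ and is not uniform. My approach is a \emph{first-return} argument: suppose $x<y$ with $f(y)\le f(x)$, and let $z$ be the least integer greater than $y$ with $f(z)>f(x)$, which exists by the observation above. By minimality, $f(z-1)\le f(x)$, and the bounded-step property gives $f(z)\le f(z-1)+A+B\le f(x)+A+B$. Hence $|f(z)-f(x)|\le A+B$, and applying the quasi-isometric lower bound to $(x,z)$ yields $(z-x)/A-B\le A+B$, so $z-x\le A^2+2AB$. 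Since $y<z$, this gives $y-x<A^2+2AB$. Taking $C:=A^2+2AB$, which also dominates the $A+B$ needed for~(a), the contrapositive establishes~(b).

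For the backward direction, assume (a) and (b) hold with constant $C$. The upper bound $|f(x)-f(y)|\le C|x-y|$ follows from~(a) by the triangle inequality. Iterating~(b) shows that $y\ge x+k(C+1)$ implies $f(y)\ge f(x)+k$, yielding the lower bound $|f(x)-f(y)|\ge |x-y|/(C+1)-1$. Quasi-surjectivity follows because iterated~(b) forces $f(n)\to\infty$ while~(a) bounds consecutive step sizes by $C$: the integer intermediate value property then puts some $f(x)$ within $C$ of every $v\ge f(1)$, while any $v<f(1)$ is within $f(1)$ of $f(1)$ itself. Setting $A':=\max(C+1,f(1))$ and $B':=1$ witnesses that $f$ is an $(A',B')$-quasi-isometry, and color-preservation is inherited from the hypothesis.
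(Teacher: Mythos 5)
Your proof is correct, and the forward direction takes a genuinely different (and arguably cleaner) route than the paper. For Condition~(b), the paper argues by contradiction: fixing $C \geq A(A+2B)$, it shows that any $C' > C$ with $f(x+C') \leq f(x)$ forces $f(x+C'+k) < f(x)$ for all $k \geq 0$, which is impossible. Your first-return argument instead directly bounds the span of any cross-over: combining the step bound (a) at the pair $(z-1,z)$, where $z$ is the first index past $y$ with $f(z) > f(x)$, with the quasi-isometric lower bound at $(x,z)$ yields $z - x \leq A^2 + 2AB$, hence $y - x < A^2 + 2AB$. Both routes arrive at the same constant $C = A^2 + 2AB$, but yours dispenses with the inductive contradiction. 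In the backward direction your two distance bounds match the paper's, and you additionally verify quasi-surjectivity explicitly via the integer intermediate-value argument with $A' = \max(C+1, f(1))$; this step is needed because Proposition~\ref{prop:quasi.iso.alt.def}, unlike Definition~\ref{defn:C.reduction}, does not impose $f(1) \leq C$, and the paper's proof of the converse direction passes over it without comment. Your proposal is thus both correct and, on the quasi-surjectivity point, slightly more complete than the paper's own writeup.
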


\begin{proof}
First, suppose that $f:\bbbn\to\bbbn$ is a colour-preserving 
quasi-isometric reduction from $\alpha$ to $\beta$.
We show that there exists a constant $C \geq 1$ for which Conditions (a) and 
(b) hold for any $x,y \in \natnum$. By the definition of a quasi-isometric reduction, there exist constants $A \geq 1$ and $B \geq 0$ such that 
\begin{equation}\label{eqn:quasi.iso.ineq}
\frac{1}{A}\cdot d(x,y) - B \leq d(f(x),f(y)) \leq A\cdot d(x,y) + B.
\end{equation}
We first derive, for each of the two conditions, a choice of $C$ satisfying it. 
\begin{enumerate}[(i)]
\item Plugging $y = x+1$ into the upper bound in (\ref{eqn:quasi.iso.ineq}) 
yields $d(f(x),f(x+1)) \leq A + B$.
\item Assume for the sake of a contradiction that for all $C \geq 1$,
there are $x \in \bbbn$ and $C' > C$ such that $f(x+C') \leq f(x)$.
We show that if $C$ is chosen so that $A+B \leq \frac{1}{A}\cdot C - B$, 
then the existence of some $C' > C$ with $f(x+C') \leq f(x)$ would lead
to a contradiction. Fix such a $C$, and suppose there were indeed some 
$C'$ with $C' > C \geq 1$ and 
\begin{equation}\label{eqn:quasi.iso.lb1}
f(x+C') \leq f(x)\,.
\end{equation} 
Then,
\[
\begin{aligned}
f(x+C'+1) - f(x+C') &\leq d(f(x+C'+1),f(x+C')) \\
&\leq A+B~~(\text{by statement (i)})\\
&\leq \frac{1}{A}\cdot C - B~~(\text{by the choice of $C$}) \\
&< \frac{1}{A}\cdot C' - B~~(\text{since $C' > C$}) \\
&\leq f(x) - f(x+C')~~(\text{by (\ref{eqn:quasi.iso.ineq}) and (\ref{eqn:quasi.iso.lb1})})\,,
\end{aligned}
\] 
giving $f(x+C'+1) < f(x)$. One can repeat the preceding argument inductively,
yielding the inequality $f(x+C'+k+1) - f(x+C'+k) < f(x) - f(x+C'+k)$, or
equivalently $f(x+C'+k+1) < f(x)$, for each $k \geq 0$. But this is impossible since 
$f(x)$ is finite and $d(f(x+C'+k+1), f(x+C'+k'+1)) > 0$ whenever $|k-k'|$ is
sufficiently large. 
\end{enumerate} 
It follows from (i) and (ii) that Conditions (a) and (b) are satisfied for 
$C = A\cdot(A+2B)$.

For a proof of the converse direction, 
fix a $C$ satisfying Conditions (a) and (b). 
Suppose $x \in \bbbn$. Then, by Condition (a), $d(f(x),f(x+1)) \leq C$.
Inductively, assume that $d(f(x),f(x+n)) \leq n\cdot C$.
Then, by the inductive hypothesis and Condition (a), 
$d(f(x),f(x+n+1)) \leq d(f(x),f(x+n)) + d(f(x+n),f(x+n+1)) 
\leq n\cdot C + C = (n+1)\cdot C$ where the first inequality follows from 
the triangle inequality. Consequently, for all $x,y \in \bbbn$,
$d(f(x),f(y)) \leq d(x,y)\cdot C$.
    
Next, we establish a lower bound for $d(f(x),f(y))$.
Without loss of generality, assume $x < y$.
Write $y = x+i(C+1)+j$ for some $i \in \bbbn_0$
and $0 \leq j \leq C$. By a simple induction, one can show that
$f(x+i(C+1)) \geq f(x)+i$ and thus $d(f(x),f(x+i(C+1)) \geq i$.
Furthermore, $d(f(x+i(C+1)),f(y)) \leq C^2$. Thus,
$d(f(x),f(y)) \geq i-C^2$ and $i \geq d(x,y)/(C+1)-1$.
It follows that $d(f(x),f(y)) \geq d(x,y)/(C+1)-1-C^2$.

Thus, one can select $A=C+1$ and $B=C^2+1$ to establish
the required bounds for the quasi-isometric mapping.
\end{proof}


By Proposition \ref{prop:quasi.iso.alt.def}, we can now redefine 
quasi-isometric reduction in terms of one constant $C$,
instead of two constants $A$ and $B$ as in Definition \ref{defn:quasiisometry},
reducing the number of constants by 1.

\begin{definition}\label{defn:C.reduction}
Suppose $C \geq 1$. Given infinite strings $\alpha$ and $\beta$, a 
{\em $C$-quasi-isometry} from $\alpha$ to $\beta$ is 
a colour-preserving function $f:\bbbn\to\bbbn$
such that for all $x,y$ in the domain
of $\alpha$,
\begin{enumerate}[(a)]
\item $f(1)\leq C$ and $f(x)-C \leq f(x+1) \leq f(x)+C$;
\item $x + C < y \Rightarrow f(x) < f(y)$.
\end{enumerate}
For the rest of the paper, we shall use ``Condition (a)'' and ``Condition (b)''
to refer to the above conditions respectively,
without necessarily mentioning the definition number.
\end{definition}

A useful property of a $C$-quasi-isometry $f$ from $\alpha$ to $\beta$ 
is that any position of $\beta$ has at most $C+1$ pre-images
under $f$.

\begin{lemma}
[{\cite[Corollary II.4]{Khoussainov17}}]
\label{lem:collision}
Given two infinite strings $\alpha$ and $\beta$, suppose that 
$f$ is a $C$-quasi-isometry from $\alpha$ to $\beta$.
Then, for all $y \in \bbbn$,
$|f^{-1}(y)| \leq C+1$.
\end{lemma}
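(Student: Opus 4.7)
The plan is to derive the bound directly from Condition (b) of Definition \ref{defn:C.reduction} by looking at its contrapositive. Condition (b) states that $x + C < y \Rightarrow f(x) < f(y)$, so its contrapositive is: if $f(x) \geq f(y)$ with $x < y$, then $y - x \leq C$. In particular, whenever two points share the same image under $f$, they cannot be farther apart than $C$.

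Concretely, I would fix $y \in \bbbn$ and suppose, towards enumeration, that $f^{-1}(y) = \{x_1, x_2, \ldots, x_k\}$ with $x_1 < x_2 < \cdots < x_k$. Applying the contrapositive of Condition (b) to the pair $(x_1, x_k)$ -- noting that $f(x_1) = f(x_k) = y$, so $f(x_1) \not< f(x_k)$ -- I would conclude $x_k - x_1 \leq C$. Therefore all of $x_1, \ldots, x_k$ lie in the interval $\{x_1, x_1 + 1, \ldots, x_1 + C\}$ of $\bbbn$, which contains exactly $C + 1$ integers. Hence $k \leq C + 1$, giving $|f^{-1}(y)| \leq C + 1$.

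There is essentially no obstacle here: the lemma is an immediate consequence of the monotonicity-up-to-$C$ property encoded in Condition (b), and Condition (a) is not even required. I would keep the proof to a short paragraph. Note that the color-preserving property also plays no role; the bound is a purely geometric consequence of the quasi-isometry constant $C$.
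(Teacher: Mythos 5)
Your proof is correct and matches the paper's argument essentially verbatim: both fix an extremal element of $f^{-1}(y)$ (the paper uses $x_0 = \min f^{-1}(y)$, you use $x_1$) and invoke the contrapositive of Condition~(b) to trap the whole preimage in an interval of $C+1$ integers. One small stylistic point: writing ``suppose $f^{-1}(y) = \{x_1,\ldots,x_k\}$'' presupposes the finiteness you are trying to bound; it is cleaner to take $x_1 = \min f^{-1}(y)$ (well-defined when nonempty) and argue that every $x \in f^{-1}(y)$ satisfies $x_1 \leq x \leq x_1 + C$, exactly as the paper does.
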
 

\begin{proof}
If $y \in \bbbn$ is not in the range of $f$, then $|f^{-1}(y)| = 0$.
So assume that $f^{-1}(y)$ is not empty. Set  $x_0 = \min\{x: f(x) = y\}$.
Now for all $x \in f^{-1}(y)$, we must have $x_0 + C \geq x \geq x_0$.
The first inequality follows from Condition (b) since $f(x_0) = y = f(x)$.
The second inequality follows from the choice of $x_0$.
Hence, $f^{-1}(y) \subseteq \{x_0, \ldots, x_0 + C\}$
and so $|f^{-1}(y)| \leq C + 1$.
\end{proof}

It was proven earlier that for any infinite strings $\alpha, \beta$
and any $C$-quasi-isometry $f$ from $\alpha$ to $\beta$, there is a constant 
$D$ such that each position of $\beta$ is at most $D$ positions away 
from some image of $f$. The next lemma states that each position of
$\beta$ in the range of $f$ is at most $C$ positions away from a 
different image of $f$.

\begin{restatable}{lemma}{imagepointdensity}
\label{lem:image.point.density}
Let $\alpha$ and $\beta$ be infinite strings and
let $f$ be a $C$-quasi-isometry from $\alpha$ to $\beta$.
Then, $\min\{f(x): x \in \bbbn\} \leq C$ and for each $y \in \bbbn$,
$\min\{f(x) > f(y): x \in \bbbn\} \leq f(y) + C$.
Hence, for each $z \in \bbbn$, there is some
$x \in \bbbn$ such that $d(f(x),z) \leq C$.
\end{restatable}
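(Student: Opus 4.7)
All three statements follow from Conditions (a) and (b) of Definition~\ref{defn:C.reduction}. The overall strategy is to use (a) to control step sizes of $f$ and use (b) to force the image of $f$ to be cofinal in $\bbbn$. The first bound is immediate: Condition (a) gives $f(1) \leq C$, so $\min\{f(x): x \in \bbbn\} \leq f(1) \leq C$.

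For the second bound, fix $y \in \bbbn$ and let $S := \{x \in \bbbn : f(x) > f(y)\}$. I first need to know that $S$ is nonempty so that $\min S$ is well-defined. Iterating Condition (b), the sequence $f(y), f(y+(C+1)), f(y+2(C+1)),\ldots$ is strictly increasing in $\bbbn$ and therefore unbounded, so $S$ is nonempty. Let $x_0 := \min S$. If $x_0 > 1$, then minimality gives $f(x_0-1) \leq f(y)$, and applying Condition (a) to the step from $x_0-1$ to $x_0$ yields $f(x_0) \leq f(x_0-1) + C \leq f(y) + C$. The edge case $x_0 = 1$ is handled separately: $f(1) \leq C$ by Condition (a) and $f(y) \geq 1$ since $f$ maps into $\bbbn$, so $f(x_0) = f(1) \leq C \leq f(y) + C$.

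For the ``hence'' part, I would put the two bounds together. Writing $R := f(\bbbn)$, the previous claims say that $\min R \leq C$ and any two consecutive elements of $R$ differ by at most $C$. Given $z \in \bbbn$, if $z < \min R$, then both $z$ and $\min R$ lie in $\{1,\ldots,C\}$ and hence differ by at most $C-1$; otherwise, set $y := \max\{r \in R : r \leq z\}$, which exists because the set is nonempty and bounded by $z$. Either $y = z$ (distance $0$) or the next element $y'$ of $R$ satisfies $z < y' \leq y + C$, forcing $z - y < C$. In every case, some image of $f$ lies within $C$ of $z$. The only step requiring a moment of thought is checking that $S$ is nonempty and the $x_0 = 1$ corner case in the second claim; the rest is just triangle-inequality bookkeeping.
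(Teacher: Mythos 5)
Your proof is correct and follows essentially the same route as the paper's: read off $f(1)\leq C$ from Condition (a) for the first bound, take the least argument whose image exceeds $f(y)$, handle the $1$-vs-$>1$ split, and apply Condition (a) to the last step. The paper leaves the ``hence'' sentence implicit, whereas you spell it out; otherwise the arguments coincide.
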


\begin{proof}
By definition, $\min\{f(x): x \in \bbbn\} \leq f(1) \leq C$.
For each $y \in \bbbn$, take the smallest $y'$ such that $f(y') > f(y)$.
Such $y'$ must exist by Condition (b).
If $y' = 1$, then $f(y') \leq C \leq f(y) + C$.
Otherwise, $f(y' - 1) \leq f(y)$.
By Condition (a), $f(y') \leq f(y' - 1) + C \leq f(y) + C$.
Hence, $\min\{f(x) > f(y): x \in \bbbn\} \leq f(y') \leq f(y) + C$.
\end{proof}

\begin{restatable}{corollary}{naryseq}\label{cor:nary.seq}
Let $\Sigma = \{a_1,\ldots,a_{l}\}$ and let $\alpha,\beta$ be two 
infinite strings. Let $f$ be a $C$-quasi-isometry from $\alpha$ to $\beta$.
Suppose that there is a positive integer $K$ such that there is at  least
one occurrence of $a_i$ in any interval of positions of $\alpha$ of length $K$.
Then, there is at least one occurrence of $a_i$ in any interval of 
positions of $\beta$ of length $KC$. 
\end{restatable}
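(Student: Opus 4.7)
The plan is to show, for any interval $I = [u, u+KC-1]$ of length $KC$ in $\beta$, that there is some position $y \in \bbbn$ with $\alpha(y) = a_i$ and $f(y) \in I$; color-preservation of $f$ then gives $\beta(f(y)) = \alpha(y) = a_i$, placing $a_i$ inside $I$ as required.

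First I would partition $I$ into $K$ pairwise disjoint consecutive sub-intervals $I_1, \ldots, I_K$, each of length $C$. A direct consequence of Lemma~\ref{lem:image.point.density} is that the distinct image-values of $f$ form a strictly increasing sequence in which consecutive terms differ by at most $C$; hence every interval of length $C$ in $\bbbn$ already meets the range of $f$, and I can pick some $x_j \in \bbbn$ with $f(x_j) \in I_j$ for each $j$. Disjointness of the $I_j$ makes the $x_j$ pairwise distinct, so, after relabelling them as $y_1 < \cdots < y_K$, the interval $[y_1, y_K]$ in $\alpha$ has length at least $K$, and the hypothesis supplies a position $y \in [y_1, y_K]$ with $\alpha(y) = a_i$.

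The principal difficulty is then to verify that $f(y) \in I$. The natural route uses Condition~(b) of Definition~\ref{defn:C.reduction}: whenever $y_1 + C < y < y_K - C$, Condition~(b) forces $f(y_1) < f(y) < f(y_K)$, and since $f(y_1), f(y_K) \in I$, convexity of $I$ gives $f(y) \in I$. The boundary situation, where $y$ could sit within $C$ of $y_1$ or $y_K$, is the main obstacle; I would handle it by sharpening the choice of witnesses to the extreme elements $x_{\min}, x_{\max}$ of $f^{-1}(I)$ and combining Conditions~(a) and~(b) — using that $f(x_{\min}) \in [u, u+C-1]$ and $f(x_{\max}) \in [u + (K-1)C, u+KC-1]$ — to guarantee that the ``interior'' $[x_{\min}+C+1, x_{\max}-C-1]$ is entirely contained in $f^{-1}(I)$ and long enough for the density hypothesis on $\alpha$ to produce the required $y$.
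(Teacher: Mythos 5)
Your overall strategy---partition $I$ into $K$ sub-intervals of length $C$, pull back one point from each, and then look for an $a_i$ among the preimages---is genuinely different from the paper's, but the boundary repair you sketch does not close. Your auxiliary facts are actually provable: one can indeed show $f(x_{\min}) \in [u, u+C-1]$, $f(x_{\max}) \in [u+(K-1)C, u+KC-1]$, and $[x_{\min}+C+1,\, x_{\max}-C-1] \subseteq f^{-1}(I)$. The trouble is the quantitative step you wave at: from these facts and Condition~(a) you only get $C(x_{\max}-x_{\min}) \geq f(x_{\max})-f(x_{\min}) \geq (K-2)C+1$, hence $x_{\max}-x_{\min} \geq K-1$, so the ``interior'' has at most $x_{\max}-x_{\min}-2C-1 \geq K-2C-2$ positions. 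That is strictly less than $K$ (and can be empty when $K \leq 2C+2$), so the density hypothesis on $\alpha$ need not produce a witness there. This is not a fixable slack: the bound $KC$ in the statement is tight. Take $\alpha$ with $a_i$ exactly at positions $K, 2K, 3K, \ldots$ and $f(n)=Cn$; then for $I = [(j-1)KC+1,\, jKC]$ the preimage $f^{-1}(I)$ is exactly the $K$-element interval $[(j-1)K+1,\, jK]$, and the unique $a_i$-position in it is $jK = x_{\max}$, sitting precisely on the boundary you excised.

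The paper sidesteps the boundary problem entirely by working from $\alpha$'s side rather than pulling back $I$: it takes the \emph{least} $x$ with $\alpha(x)=a_i$ and $f(x) > y$, where $[y+1,\, y+KC]$ is the target interval. Minimality immediately gives the lower bound $f(x) \geq y+1$, and the upper bound $f(x) \leq y+KC$ follows from Condition~(a) applied along the at-most-$K$ steps from the previous $a_i$-position $x'$ (which has $f(x') \leq y$ by minimality) to $x$; the case with no earlier $a_i$ is handled by $x \leq K$ and $f(1) \leq C$. That single choice makes both endpoint inequalities come for free, whereas the preimage-interval approach inherently loses $O(C)$ on each side.
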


\begin{proof}
Consider an interval $[y+1, y+KC]$ of positions of $\beta$ of length $KC$.
By Condition (b), there is a least $x$ such that $f(x) > y$ and $\alpha(x) = a_i$.
We show that $f(x)$ lies in the interval $[y+1, y+KC]$ and so is a position of some occurrence of $a_i$ in $\beta$.
Suppose that there is no occurrence of $a_i$ before position $x$.
Then, $x \leq K$, since there must be at least one occurrence of $a_i$
in the first $K$ positions of $\alpha$.
Hence, by Condition (a), $f(x) \leq KC \leq y + KC$.
Otherwise, let $x'$ be the position of the last occurrence of $a_i$
before position $x$.
By definition of $x$, $f(x') \leq y$, and by the assumption, $x - x' \leq K$.
Then, by Condition (a), $f(x) \leq f(x') + KC \leq y + KC$.
In both cases, we have $y + 1 \leq f(x) \leq y + KC$.
\end{proof}

A quasi-isometry $f$ can fail to be order-preserving in that 
there are pairs $x,y \in \bbbn$ with $x < y$ and $f(x) > f(y)$.
Nonetheless, as Khoussainov and Takisaka noted 
\cite[Lemma II.2]{Khoussainov17}, every quasi-isometry enforces a uniform upper bound on the size of a {\em cross-over}---the difference $f(x) - f(y)$
for such a pair $x,y \in \bbbn$. This may be proven using the alternative
definition of a quasi-isometry given in 
Definition \ref{defn:C.reduction}.

\begin{lemma}
[Small Cross-Over Lemma, \protect{\cite[Lemma II.2]{Khoussainov17}}]
\label{lem:small.crossover}
Given two infinite strings $\alpha$ and $\beta$, suppose that 
$f$ is a $C$-quasi-isometry from $\alpha$ to $\beta$.
Then, for all $n,m \in \bbbn$ with $n < m$, we have $f(n)-f(m) \leq C^2$.
\end{lemma}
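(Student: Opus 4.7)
The plan is to prove the bound by splitting into two cases depending on whether $m - n$ exceeds $C$ or not. The key observation is that Conditions (a) and (b) of Definition~\ref{defn:C.reduction} pull in opposite directions: Condition (b) forces $f$ to be order-preserving once the gap between arguments exceeds $C$, so any cross-over must occur on a ``short'' interval; and Condition (a) bounds how much $f$ can change over a short interval.

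First, suppose $m - n > C$. Then Condition (b) gives $f(n) < f(m)$, so $f(n) - f(m) < 0 \leq C^2$, and the conclusion holds trivially. The remaining and only nontrivial case is when $1 \leq m - n \leq C$.

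In that case, I would apply Condition (a) inductively, just as in the forward direction of the proof of Proposition~\ref{prop:quasi.iso.alt.def}: from $f(x) - C \leq f(x+1) \leq f(x) + C$ a simple induction on $k$ yields $|f(x) - f(x+k)| \leq k \cdot C$ for every $k \in \bbbn$. Taking $x = n$ and $k = m - n$ gives
\[
f(n) - f(m) \leq |f(n) - f(m)| \leq (m-n)\cdot C \leq C \cdot C = C^2,
\]
which is exactly the claimed bound. Combining the two cases finishes the proof.

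There is no real obstacle here; the entire content of the lemma is that Condition (b) confines cross-overs to intervals of length at most $C$, and Condition (a) then bounds the drop over such an interval by $C^2$. The only thing to be slightly careful about is not to mix up the sign: the lemma is stated as a one-sided bound on $f(n) - f(m)$ (the amount by which the image can decrease), rather than as a bound on $|f(n) - f(m)|$, but the same argument yields the symmetric $|f(n) - f(m)| \leq C^2$ whenever $|n - m| \leq C$.
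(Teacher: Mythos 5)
Your proof is correct and follows exactly the same two-case decomposition as the paper's: Condition (b) handles $m-n>C$, and iterating Condition (a) at most $C$ times handles $m-n\leq C$. Nothing to add.
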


\begin{proof}
Consider any $n,m \in \bbbn$ with $n < m$.
If $n + C < m$, then by Condition (b), $f(n) - f(m) < 0$. 
If $n + C \geq m$, then $d(m,n) = m-n \leq C$, so by applying 
Condition (a) $m-n$ times, one has
$f(n) - f(m) \leq d(f(m),f(n)) \leq (m-n)\cdot C \leq C^2$.
Thus, $f(n)-f(m) \leq C^2$ holds for any choices of $m,n \in \bbbn$
with $n < m$.  
\end{proof}

\section{Recursive Quasi-Isometric Reductions}\label{section:rec.quasi.iso.red}

Khoussainov and Takisaka \cite{Khoussainov17} investigated
the structure of the partial-order $\Sigma^{\omega}_{qi}$ of the 
quasi-isometry degrees over an alphabet $\Sigma = \{a_1,\ldots,a_l\}$. 
They proved that $\Sigma^{\omega}_{qi}$ has a greatest element, 
namely the degree of $(a_1\cdots a_n)^{\omega}$, and that 
$\Sigma^{\omega}_{qi}$ contains uncountably many minimal elements. 
Furthermore, they showed that $\Sigma^{\omega}_{qi}$ includes a chain of 
the type of the integers, and that it includes an antichain. In connection with
computability theory, in particular with the arithmetical hierarchy, 
they established that the quasi-isometry relation on recursive infinite 
strings is $\Sigma^0_2$-complete \cite{Khoussainov22}. In this section,
we continue research into the recursion-theoretic aspects of quasi-isometries
on infinite strings. 
We consider the notions of many-one and one-one recursive reducibilities
first introduced by Post \cite{Post44} as relations between recursive functions,
and apply them to quasi-isometric reductions.
We also define a third type of quasi-isometric reducibility---permutation reducibility---which is bijective.
We then prove a variety of results on the degrees of such reductions.

\begin{definition}[Many-One Reducibility]\label{defn:many.one.quasi.red}
A string $\alpha$ is {\em many-one reducible}, or {\em mqi-reducible}, 
to a string $\beta$ iff there exists a quasi-isometric reduction $f$ 
from $\alpha$ to $\beta$ such that $f$ is {\em recursive}.
We call such an $f$ a {\em many-one quasi-isometry (or mqi-reduction)}, 
and write $\alpha \leq_{mqi} \beta$ to mean that $\alpha$ is many-one 
reducible to $\beta$; if, in addition, $f$ is a $C$-quasi-isometry, then
we call $f$ a {\em $C$-many-one quasi-isometry} (or {\em $C$-mqi-reduction}). 
We write $\alpha <_{mqi} \beta$ to mean that $\alpha \leq_{mqi} \beta$
and $\beta \not\leq_{mqi} \alpha$.
\end{definition}

\begin{definition}[One-One Reducibility]\label{defn:one.one.quasi.red}
A string $\alpha$ is {\em one-one reducible}, or {\em 1qi-reducible},
to a string $\beta$ iff there exists a many-one quasi-isometry $f$
from $\alpha$ to $\beta$ such that $f$ is one-one.
We call such an $f$ a {\em one-one quasi-isometry (or 1qi-reduction)}, 
and write $\alpha \leq_{1qi} \beta$ to mean that $\alpha$ is one-one reducible
to $\beta$; if, in addition, $f$ is a $C$-quasi-isometry, then we call
$f$ a {\em $C$-one-one quasi-isometry} (or {\em $C$-1qi-reduction}). 
We write $\alpha <_{1qi} \beta$ to mean that $\alpha \leq_{1qi} \beta$
and $\beta \not\leq_{1qi} \alpha$.
\end{definition}

\begin{definition}[Permutation Reducibility]\label{defn:permute.quasi.red}
A string $\alpha$ is {\em permutation reducible}, or {\em pqi-reducible},
to a string $\beta$ iff there exists a one-one quasi-isometry $f$
from $\alpha$ to $\beta$ such that $f$ is surjective.
We call such an $f$ a {\em permutation quasi-isometry (or pqi-reduction)}, 
and write $\alpha \leq_{pqi} \beta$ to mean that $\alpha$ is permutation reducible
to $\beta$; if, in addition, $f$ is a $C$-quasi-isometry, then we call
$f$ a {\em $C$-permutation quasi-isometry} (or {\em $C$-pqi-reduction}). 
We write $\alpha <_{pqi} \beta$ to mean that $\alpha \leq_{pqi} \beta$
and $\beta \not\leq_{pqi} \alpha$.
\end{definition}

Given an alphabet $\Sigma$, the relations $\leq_{mqi}$, $\leq_{1qi}$, 
$\leq_{pqi}$ and $\leq_{qi}$ are preorders on the class of infinite strings over 
$\Sigma$. Let $\equiv_{mqi}$ be the 
relation on $\Sigma^{\omega}$ such that 
$\alpha \equiv_{mqi} \beta$ iff $\alpha \leq_{mqi} \beta$ and 
$\beta \leq_{mqi} \alpha$.
Then, $\equiv_{mqi}$ is an equivalence relation on $\Sigma^{\omega}$.  
We call an equivalence class on $\Sigma^{\omega}$ 
induced by $\equiv_{mqi}$ a {\em many-one quasi-isometry degree} 
(or {\em mqi-degree}), and denote
the mqi-degree of an infinite string $\alpha$ by $[\alpha]_{mqi}$. 
Analogous definitions apply to $\equiv_{1qi}$, $[\alpha]_{1qi}$,
$\equiv_{pqi}$, $[\alpha]_{pqi}$, $\equiv_{qi}$ and $[\alpha]_{qi}$.

We denote the partial orders induced by $\leq_{pqi},
\leq_{1qi},\leq_{mqi}$ and $\leq_{qi}$ 
on the pqi-degrees, 1qi-degrees,
mqi-degrees and qi-degrees by $\Sigma^{\omega}_{pqi},
\Sigma^{\omega}_{1qi},\Sigma^{\omega}_{mqi}$ and 
$\Sigma^{\omega}_{qi}$ respectively. 
 
By definition, $\Sigma^{\omega}_{pqi}$ is a refinement
of $\Sigma^{\omega}_{1qi}$ in the sense that 
for all infinite strings $\alpha$ and $\beta$,
$[\alpha]_{pqi} \leq_{pqi} [\beta]_{pqi} \Rightarrow
[\alpha]_{1qi} \leq_{1qi} [\beta]_{1qi}$.
In a similar manner, $\Sigma^{\omega}_{1qi}$ is a refinement
of $\Sigma^{\omega}_{mqi}$, which is in turn a refinement
of $\Sigma^{\omega}_{qi}$. The first subsection
deals with the mqi-degrees, starting with the inner structure of 
each mqi-degree.

\subsection{Structure of the mqi-Degrees}\label{subsection:mqi.degrees} 

Fix any two distinct infinite strings $\beta$ and $\gamma$ 
belonging to $[\alpha]_{mqi}$. It can be shown that $\beta$
and $\gamma$ have a common upper bound as well as a common
lower bound in $[\alpha]_{mqi}$ such that these bounds are witnessed
by 1qi-reductions.

\begin{proposition}\label{prop:mqi.ub}
For any two distinct infinite strings $\beta,\gamma \in [\alpha]_{mqi}$,
there exists a $\delta \in [\alpha]_{mqi}$ such that
$\beta \leq_{1qi} \delta$ and $\gamma \leq_{1qi} \delta$.
\end{proposition}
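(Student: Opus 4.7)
The plan is to take $\delta$ to be an ``inflation'' of $\beta$ that creates enough room to accommodate $\gamma$ injectively. Since $\gamma$ and $\beta$ both lie in $[\alpha]_{mqi}$, there is a recursive $C$-quasi-isometry $h\colon\gamma\to\beta$ for some $C\geq 1$. I would set $\delta=\beta(1)^{C+1}\beta(2)^{C+1}\beta(3)^{C+1}\cdots$, so that $\delta((C+1)(k-1)+j)=\beta(k)$ for all $k\in\bbbn$ and $1\leq j\leq C+1$.

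For $\beta\leq_{1qi}\delta$, the embedding $n\mapsto (C+1)(n-1)+1$ is recursive, color-preserving, injective, strictly monotone, and $(C+1)$-Lipschitz, hence a 1qi-reduction. For $\gamma\leq_{1qi}\delta$ I would use $f(n)=(C+1)(h(n)-1)+k_n$, where $k_n\in\{1,\ldots,C+1\}$ is the rank of $n$ within $h^{-1}(h(n))$. Lemma~\ref{lem:collision} ensures $|h^{-1}(h(n))|\leq C+1$ and that all preimages of $h(n)$ lie in $[n-C,n+C]$, so $k_n$ is well defined and recursive (inspect the window $\{\max(1,n-C),\ldots,n\}$ and count the $n'$ with $h(n')=h(n)$). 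Color preservation follows from $\delta(f(n))=\beta(h(n))=\gamma(n)$. For injectivity, if $h(n)=h(m)$ then equal ranks force $n=m$; otherwise $(C+1)|h(n)-h(m)|\geq C+1$ dominates $|k_n-k_m|\leq C$, so $f(n)\neq f(m)$.

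To verify that $f$ is a quasi-isometry, I would invoke the Small Cross-Over Lemma (Lemma~\ref{lem:small.crossover}) together with Condition~(a) for $h$ to conclude $|h(n+1)-h(n)|\leq C^2$. Combined with $|k_{n+1}-k_n|\leq C$, this yields $|f(n+1)-f(n)|\leq (C+1)C^2+C$, which gives Condition~(a) for $f$. For Condition~(b), whenever $n+C<m$, Condition~(b) for $h$ gives $h(n)<h(m)$, so $f(m)-f(n)\geq (C+1)-C\geq 1>0$; hence $f$ is a quasi-isometry with constant $(C+1)C^2+C$.

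Finally, to show $\delta\in[\alpha]_{mqi}$, I would note that $\alpha\leq_{mqi}\beta\leq_{1qi}\delta$ gives $\alpha\leq_{mqi}\delta$, while the contraction $g(n)=\lceil n/(C+1)\rceil$ is a color-preserving, recursive $(C+1)$-quasi-isometry from $\delta$ to $\beta$; composing with any mqi-reduction $\beta\to\alpha$ yields $\delta\leq_{mqi}\alpha$. The main technical obstacle is verifying the quasi-isometry conditions for $f$, where the Small Cross-Over Lemma and Lemma~\ref{lem:collision} must be combined carefully to control both the Lipschitz and the monotonicity behaviour of the resulting reduction.
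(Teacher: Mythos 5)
Your proposal is correct and is essentially the same construction as the paper's proof, with the roles of $\beta$ and $\gamma$ interchanged: the paper inflates $\gamma$ and injectifies an mqi-reduction $\beta\to\gamma$, while you inflate $\beta$ and injectify an mqi-reduction $\gamma\to\beta$; either choice works since the statement is symmetric. One small remark: the invocation of the Small Cross-Over Lemma is unnecessary, since Condition~(a) for $h$ already yields the sharper bound $|h(n+1)-h(n)|\leq C$, giving the quasi-isometry constant $C^2+2C$ as in the paper rather than your $(C+1)C^2+C$.
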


\begin{proof}
Let $f$ be a $C$-mqi-reduction from $\beta$ to $\gamma$.
Let $\delta$ be the infinite string obtained from $\gamma$ by 
repeating $C+1$ times each letter of $\gamma$. Then,
$\gamma \leq_{1qi} \delta$ via a $(C+1)$-1qi-reduction $g$ 
defined by $g(n) = (n-1)\cdot(C+1)+1$ for each $n \in \bbbn$.
Furthermore, $\delta \leq_{mqi} \gamma$ via a $C$-mqi-reduction
$g'$ defined by $g'(n) = \lceil \frac{n}{C+1} \rceil$. 
Thus, $\delta \in [\alpha]_{mqi}$.

Next, one constructs a $(C^2+2C)$-1qi-reduction $f'$ from $\beta$
to $\delta$ using the function $f$. For each $y$ in the range
of $f$, map the pre-image of $y$ under $f$, which by
Lemma \ref{lem:collision} has at most $C+1$ elements,
to the set of positions of $\delta$ corresponding to the
$C+1$ copies of the letter at position $y$. Formally, define   
\[
f'(n) = \left\{\begin{array}{ll}
g(f(n))\,, & \mbox{if $f(n) \neq f(n')$ for all $n' < n$;} \\
g(f(n)) + C'\,, & \mbox{otherwise;
where $1 \leq C' < C+1$ is minimum such that} \\
~ & \mbox{$g(f(n)) + C' \neq f'(n')$ for all $n' < n$.}
\end{array}\right.
\]
We verify that $f'$ is an injective $(C^2+2C)$-quasi-isometry.
Injectiveness follows from the definition of $f'$:
in the first case, the injectiveness of $g$ 
ensures that $f'(x) \neq f'(x')$ for all $x' < x$;
in the second case, it is directly enforced that $f'(x) \neq
f(x')$ for all $x' < x$.
Since $f$ is a $C$-reduction, $x + C < y \Rightarrow
f(x) < f(y) \Rightarrow g(f(x)) < g(f(y)) \Rightarrow
f'(x) < f'(y)$, and so $f'$ satisfies Condition (b) with 
constant $C$. 
Now we show that $f'$ satisfies Condition (a) with constant 
$C^2+2C$.
By Condition (a),
$d(f(x),f(x+1)) \leq C$. Without loss of generality, assume that
$f(x) \leq f(x+1)$. By the definition of $f'$, 
$f'(x) \geq g(f(x))$ and $f'(x+1) \leq g(f(x+1)) + C$.
Since $f(x) \leq f(x+1)$, it follows that $f'(x) \leq f'(x+1)$
and so 
\[
\begin{aligned}
d(f'(x+1),f'(x)) &\leq g(f(x+1)) + C - g(f(x)) \\
&= (C+1)\cdot (f(x+1)-1) + 1 + C - (C+1)\cdot (f(x)-1) - 1 \\
&= (C+1)\cdot(f(x+1)-f(x)) + C \\
&\leq C\cdot(C+1) + C \\
&= C^2 + 2C\,.
\end{aligned}
\]   
This completes the proof.
\end{proof}

Next, we prove a lower bound counterpart of Proposition \ref{prop:mqi.ub}.

\begin{proposition}\label{prop:mqi.lb}
For any two distinct infinite strings 
$\beta,\gamma \in [\alpha]_{mqi}$,
there exists a $\delta \in [\alpha]_{mqi}$ such that
$\delta \leq_{1qi} \beta$ and $\delta \leq_{1qi} \gamma$.
\end{proposition}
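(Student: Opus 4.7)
The plan is to construct $\delta$ by thinning $\beta$ down to those positions that are minimal $f$-preimages for some fixed $C$-mqi-reduction $f$ from $\beta$ to $\gamma$. Concretely, I would fix such an $f$ and let $M=\{n\in\bbbn : f(n')\neq f(n)\text{ for every }n'<n\}$, enumerated in increasing order as $m_1<m_2<\ldots$. Since membership in $M$ is decidable by finitely many comparisons of $f$-values, $M$ is recursive. A key preliminary observation is that $m_{k+1}-m_k\leq C+1$: any $j$ with $m_k<j<m_{k+1}$ is not in $M$, so $\min f^{-1}(f(j))$ is some element of $M$ that is at most $m_k$ yet at least $j-C$ (the latter by Condition~(b) for $f$), forcing $j-m_k\leq C$. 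I would then define the recursive string $\delta$ by $\delta(k)=\beta(m_k)$.

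The two injective reductions would then be immediate. The map $k\mapsto m_k$ is a $(C+1)$-1qi-reduction from $\delta$ to $\beta$: it is recursive, injective, color-preserving by definition of $\delta$, satisfies Condition~(a) by the bound $m_{k+1}-m_k\leq C+1$, and satisfies Condition~(b) by the monotonicity of $k\mapsto m_k$. The map $k\mapsto f(m_k)$ witnesses $\delta\leq_{1qi}\gamma$: it is color-preserving since $\gamma(f(m_k))=\beta(m_k)=\delta(k)$, injective because $f$ is injective on $M$ by the very construction of $M$, and its quasi-isometry constants follow from those of $f$ together with $m_{k+1}-m_k\leq C+1$.

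The main step would be to prove $\beta\leq_{mqi}\delta$; combined with $\delta\leq_{1qi}\beta$, this places $\delta$ in $[\alpha]_{mqi}$. For each $n\in\bbbn$ the position $\mu(n):=\min f^{-1}(f(n))$ lies in $M$, is within distance $C$ of $n$ by Lemma~\ref{lem:collision}, and satisfies $\beta(\mu(n))=\gamma(f(n))=\beta(n)$ since $f$ is color-preserving. Hence setting $h(n)$ to be the index $j$ with $m_j=\mu(n)$ gives a recursive color-preserving function from $\beta$ to $\delta$. The main obstacle is checking that $h$ is a quasi-isometry; the two key bounds are (i) $\mu(n)\in[n-C,n]$, which together with Condition~(a) for $f$ yields a constant bound on $|\mu(n+1)-\mu(n)|$ and hence, via the $(C+1)$-density of $M$ in $\bbbn$, on $|h(n+1)-h(n)|$, and (ii) for $n'-n>2C$ the same bound forces $\mu(n')>\mu(n)$, hence $h(n')>h(n)$, giving Condition~(b) with constant $2C$.
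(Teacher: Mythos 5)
Your proposal is correct and follows essentially the same approach as the paper: you define $\delta$ by restricting $\beta$ to the minimal $f$-preimage positions, and then exhibit the same three maps (the embedding $k\mapsto m_k$ for $\delta\leq_{1qi}\beta$, the map $k\mapsto f(m_k)$ for $\delta\leq_{1qi}\gamma$, and the map sending $n$ to the index of $\mu(n)$ for $\beta\leq_{mqi}\delta$) that the paper uses, with the same justifications. The constants you extract are in some places a bit tighter than the paper's, but the structure and key ideas are identical.
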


\begin{proof}
Suppose $\beta = \beta_1\beta_2\ldots$, where $\beta_i \in \Sigma$.
Let $f:\bbbn\to\bbbn$ be a $C$-mqi-reduction from $\beta$ to $\gamma$.
Now define $\delta = \beta_{i_1}\beta_{i_2}\ldots$, where
$i_k$ is the minimum index such that $i_k \neq i_l$ for all $l < k$ 
and for all $j < i_k$, $f(j) \neq f(i_k)$. 
By Condition (b),
the range of $f$ is infinite and thus each $i_k$ is well-defined. 
We verify that $\delta \leq_{1qi} \beta$ and $\delta \leq_{1qi} \gamma$.

Define $f'(n) = i_n$ for all $n \in \bbbn$.
We show that $f'$ is a 1qi-reduction from $\delta$ to $\beta$.
By the choice of the $i_n$'s, $f'(n) > f'(m)$ whenever $n > m$;
in particular, $f'$ is injective and Condition (b) holds for $f'$. 
Furthermore, given any $n$, by applying Condition (b) to $f$ and 
all $n' \leq n$, it follows that $f'(n+1) \leq f'(n)+C+1$.
Hence, $f'$ also satisfies Condition (a). 

Next, define a 1qi-reduction $f''$ from $\delta$ to $\gamma$ by
$f''(n) = f(i_n)$. The injectiveness of $f''$ follows from
the choice of the $i_n$'s (though $f''$ is not necessarily
strictly monotone increasing). Using the fact
that $i_{n+1} \leq i_n+C+1$, as well as applying Condition (a)
$i_{n+1} - i_n$ times, $d(f''(n+1),f''(n)) = d(f(i_{n+1}),f(i_n))
\leq C\cdot d(i_{n+1},i_n) \leq C\cdot(C+1)$. Hence,
$f''$ satisfies Condition (a) with constant $C\cdot(C+1)$.       
Since the $i_n$'s are strictly increasing, 
$m+C < n \Rightarrow i_{m} + C < i_{n} \Rightarrow f(i_m) < f(i_n)$. 
Thus, $f''$ is a $C\cdot(C+1)$-1qi-reduction.   

Lastly, define a mqi-reduction $g$ from $\beta$ to $\delta$
by $g(n) = k$ where $k$ is the minimum integer with $f(n) = f(i_k)$.
As the $i_n$'s cover the whole range of $f$, $g$ is well-defined.
For any given $n$, suppose $g(n) = k_1$ and $g(n+1) = k_2$, so
that $f(n) = f(i_{k_1})$ and $f(n+1) = f(i_{k_2})$.
By Condition (b), 
$d(n,i_{k_1}) \leq C$ and $d(n+1,i_{k_2}) \leq C$,
and so 
\[
\begin{aligned}
d(g(n),g(n+1)) &= d(k_1,k_2) \\
&\leq d(i_{k_1},i_{k_2}) \\
&\leq d(n,i_{k_1}) + d(n,n+1) + d(n+1,i_{k_2}) \\ 
&\leq 2C+1\,.
\end{aligned}
\]   
Hence, $g$ satisfies Condition (a) with constant $2C+1$.
To verify that $g$ satisfies Condition (b) for some constant,
fix any $n$ and apply Condition (b) $C\cdot(C+1)$ times to $f$,
giving $f(n)+C\cdot(C+1) \leq f(n+C\cdot(C+1)^2)$.
Suppose $g(n) = i_{k_1}$ and $g(n+C\cdot(C+1)^2) = i_{k_2}$,
so that $f(n) = f(i_{k_1})$ and $f(n+C\cdot(C+1)^2) = f(i_{k_2})$.
Then, 
$d(f(i_{k_1}),f(i_{k_2})) = d(f(n),f(n+C\cdot(C+1)^2)) \geq C\cdot(C+1)$. 
So by applying Condition (a) $d(i_{k_1},i_{k_2})$ times to $f$, we have
$C\cdot d(i_{k_1},i_{k_2}) \geq d(f(i_{k_1}),f(i_{k_2})) 
\geq C\cdot(C+1)$. Dividing both sides of the inequality by $C$
yields $d(i_{k_1},i_{k_2}) \geq C+1$.
Applying the contrapositive of Condition (b) to $f$ then gives
$f(i_{k_2}) \geq f(i_{k_1}) \Rightarrow i_{k_2} + C \geq i_{k_1}$.
Since $d(i_{k_1},i_{k_2}) \geq C+1$, this implies that
$g(n+C\cdot(C+1)^2) \\= i_{k_2} > i_{k_1} = g(n)$.
Thus, $g$ satisfies Condition (b) with constant $C\cdot(C+1)^2-1$.
\end{proof}

\subsection{1qi-Degrees Within mqi-Degrees}

We now investigate the structural properties of 1qi-degrees within 
individual mqi-degrees. As will be seen shortly, these properties 
can vary quite a bit depending on the choice of the mqi-degree. 

\begin{proposition}\label{prop:union.asc.chain}
There exists an infinite string $\alpha$ such that $[\alpha]_{mqi}$
is the union of an infinite ascending chain of 1qi-degrees.
\end{proposition}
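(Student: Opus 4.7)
The plan is to exhibit $\alpha = ab^{\omega}$ and take $\alpha_n = a^n b^{\omega}$ for each $n \geq 1$. Each $\alpha_n$ lies in $[\alpha]_{mqi}$ via the natural stretching map $i \mapsto n(i-1)+1$ from $\alpha$ to $\alpha_n$ and the contraction $i \mapsto \lceil i/n \rceil$ from $\alpha_n$ to $\alpha$; both are easily checked to be mqi-reductions. The 1qi-degrees $[\alpha_n]_{1qi}$ will form the desired chain: for strictness, $\alpha_n \leq_{1qi} \alpha_{n+1}$ via the map sending $i \mapsto i$ for $i \leq n$ and $i \mapsto i+1$ for $i > n$, which is an injective, color-preserving, strictly increasing $2$-quasi-isometry; conversely, $\alpha_{n+1} \not\leq_{1qi} \alpha_n$ follows by a counting argument, since any color-preserving function would inject the $n+1$ many $a$-positions of $\alpha_{n+1}$ into the only $n$ many $a$-positions of $\alpha_n$.

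The main step is to show $[\alpha]_{mqi} = \bigcup_{n \geq 1} [\alpha_n]_{1qi}$. Given any $\beta \in [\alpha]_{mqi}$, consider a $C$-mqi-reduction $h: \beta \to \alpha$. Since $h$ is color-preserving and $1$ is the unique $a$-position of $\alpha$, every $a$-position of $\beta$ maps to $1$; applying Condition (b) to any two $a$-positions $x < y$ of $\beta$ then forces $y \leq x + C$, so $\beta$ has only finitely many $a$'s (and at least one, using the reverse reduction $\alpha \to \beta$). Let $k \geq 1$ be this number, and enumerate the $a$-positions of $\beta$ as $p_1 < \cdots < p_k$ and its $b$-positions as $q_1 < q_2 < \cdots$. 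I would then define explicit 1qi-reductions in both directions: the map $\beta \to \alpha_k$ sending $p_i \mapsto i$ and $q_j \mapsto k+j$, and the map $\alpha_k \to \beta$ sending $i \mapsto p_i$ for $i \leq k$ and $k+j \mapsto p_k + j$ for $j \geq 1$.

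The principal obstacle is verifying that these ad hoc maps satisfy Conditions (a) and (b). This turns out to be routine: each map is injective by construction, color-preserving by inspection, and beyond position $p_k$ each coincides with the identity up to a bounded shift, so the only nontrivial behaviour happens within the finite initial segment $[1, p_k]$. Taking $C = p_k$ therefore handles both all consecutive-step differences and, since the maps are strictly increasing on positions exceeding $p_k$, Condition (b). This yields $\beta \equiv_{1qi} \alpha_k$, and consequently $[\alpha]_{mqi}$ decomposes as the strict ascending chain $[\alpha_1]_{1qi} <_{1qi} [\alpha_2]_{1qi} <_{1qi} \cdots$ of 1qi-degrees whose union is $[\alpha]_{mqi}$, as required.
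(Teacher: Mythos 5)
Your proof is correct and takes essentially the same approach as the paper: you choose $\alpha = ab^{\omega}$ (the paper uses $10^{\omega}$) and decompose $[\alpha]_{mqi}$ as the ascending chain of 1qi-degrees $[a^n b^{\omega}]_{1qi}$, classifying each $\beta \in [\alpha]_{mqi}$ by its number of $a$'s. You merely spell out the stretching/contraction and 1qi-reduction maps more explicitly than the paper, which states these facts without detailed verification.
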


\begin{proof}
Let $\Sigma = \{0,1\}$ and let $\alpha = 10^{\omega}$.
Then, $[\alpha]_{mqi}$ consists of all infinite strings with a 
finite, positive number of occurrences of $1$.
Given any infinite string $\beta$ with $k \geq 1$ 
occurrences of $1$, $\beta$ is 1qi-equivalent to a string 
$\gamma$ in $[\alpha]_{mqi}$ iff 
$\gamma$ has exactly $k$ occurrences of $1$. 
If $1 \leq k < k'$, then each string $\beta \in [\alpha]_{mqi}$
with exactly $k$ occurrences of $1$ is 1qi-reducible
to any string $\beta' \in [\alpha]_{mqi}$ with exactly
$k'$ occurrences of $1$. Thus, $[\alpha]_{mqi}$ is the union
of an ascending chain $[\alpha]_{1qi} < [110^{\omega}]_{1qi}
< [1110^{\omega}]_{1qi} < \ldots$, where the $i$-th term of
this chain is $1^i0^{\omega}$.\end{proof}

\begin{proposition}\label{prop:union.inf.disjoint.chains}
There exists an infinite string $\alpha$ such that the poset of 1qi-degrees 
within $[\alpha]_{mqi}$ is isomorphic to $\bbbn^2$ with the componentwise 
ordering. That is, $[\alpha]_{mqi}$
is the union of infinitely many disjoint infinite ascending chains of
1qi-degrees such that every pair of these ascending chains has
incomparable elements. Also, $[\alpha]_{mqi}$ does not contain 
infinite anti-chains of 1qi-degrees. 
\end{proposition}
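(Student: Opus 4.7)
The plan is to exhibit such an $\alpha$ by working over $\Sigma = \{0, 1, 2\}$ and taking $\alpha = 120^{\omega}$. The first step is to characterize $[\alpha]_{mqi}$ as the collection of infinite strings $\beta$ in which each of the letters $1$ and $2$ appears finitely often but at least once, with $0$'s at every remaining position. This is the two-symbol analogue of Proposition \ref{prop:union.asc.chain}: given such a $\beta$, mqi-reductions to and from $\alpha$ are constructed by choosing the quasi-isometric constant large enough that all special symbols of $\beta$ (and likewise of $\alpha$) fit within a single window, so that several $1$'s -- and separately several $2$'s -- may share a common image, as permitted by Lemma \ref{lem:collision}.

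For each $\beta \in [\alpha]_{mqi}$ let $\Phi(\beta) = (k, l)$, where $k$ and $l$ are the numbers of $1$'s and of $2$'s in $\beta$ respectively. The heart of the proof is to show that $\Phi$ descends to a poset isomorphism between the 1qi-degrees inside $[\alpha]_{mqi}$ and $(\bbbn^2, \leq)$ with componentwise ordering. Necessity is immediate: since any 1qi-reduction is injective and color-preserving, a reduction witnessing $\beta_1 \leq_{1qi} \beta_2$ restricts to an injection from the $1$-positions of $\beta_1$ into those of $\beta_2$, and similarly for $2$-positions, forcing $\Phi(\beta_1) \leq \Phi(\beta_2)$ coordinatewise. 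For sufficiency, I would first show that any $\beta \in [\alpha]_{mqi}$ with $\Phi(\beta) = (k, l)$ is 1qi-equivalent to the canonical string $1^k 2^l 0^{\omega}$: letting $N$ denote the largest position of $\beta$ carrying a $1$ or a $2$, the bijection on $\bbbn$ which permutes $[1, N]$ so that the $1$'s of $\beta$ are sent in order onto $[1, k]$, the $2$'s onto $[k+1, k+l]$ and the remaining $0$'s of $[1, N]$ onto $[k+l+1, N]$, and which acts as the identity on $(N, \infty)$, is an injective $N$-quasi-isometry in both directions. Between canonical strings, whenever $k_1 \leq k_2$ and $l_1 \leq l_2$, the monotone injection that fixes $[1, k_1]$, sends $k_1 + j \mapsto k_2 + j$ for $j \in [1, l_1]$, and $k_1 + l_1 + m \mapsto k_2 + l_2 + m$ for $m \geq 1$ is a color-preserving 1qi-reduction.

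Once $\Phi$ is shown to be an isomorphism, the three structural claims follow from elementary facts about $(\bbbn^2, \leq)$: the subsets $\mathcal{A}_k = \{(k, l) : l \in \bbbn\}$ for $k \in \bbbn$ constitute infinitely many pairwise disjoint infinite ascending chains; for any $k_1 < k_2$, the pair $(k_1, l+1) \in \mathcal{A}_{k_1}$ and $(k_2, l) \in \mathcal{A}_{k_2}$ is incomparable; and Dickson's lemma rules out infinite antichains in $\bbbn^2$. The main obstacle I anticipate is the sufficiency step, specifically the careful bookkeeping needed to confirm that the permutation on $[1, N]$ which brings $\beta$ to its canonical form simultaneously satisfies Conditions (a) and (b) with a common constant (the choice $C = N$ should work, with the key observation that all positions outside $[1, N]$ are fixed), and that the subsequent composition with the monotone shift between canonical forms remains a single 1qi-reduction with a well-defined constant.
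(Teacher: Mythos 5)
Your proposal is correct and follows essentially the same path as the paper's proof: same alphabet and $\alpha = 120^\omega$, same characterization of $[\alpha]_{mqi}$, the same invariant (the pair of counts of $1$'s and $2$'s), and the same canonical representatives $1^k2^l0^\omega$. The only cosmetic difference is that you package the argument as an explicit poset isomorphism $\Phi$ onto $(\bbbn^2,\le)$ and invoke Dickson's lemma for the finiteness of antichains, whereas the paper derives the three structural consequences directly and proves the no-infinite-antichain claim by a short ad hoc pigeonhole argument that is really Dickson's lemma for $\bbbn^2$.
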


\begin{proof}
Let $\Sigma = \{0,1,2\}$ and let $\alpha = 120^{\omega}$.
Then, $[\alpha]_{mqi}$ consists of all infinite strings with
a finite, positive number of $1$'s and a finite, positive
number of $2$'s. Furthermore, $[\alpha]_{1qi}$ consists of 
all infinite strings with exactly one occurrence of $1$
and exactly one occurrence of $2$.

Based on the proof of Proposition \ref{prop:union.asc.chain},
$[\alpha]_{mqi}$ is the union, over all $k \geq 1$, 
of chains of the form 
$[12^k0^{\omega}]_{1qi} < [1^22^k0^{\omega}]_{1qi} < 
\ldots$, where the $i$-th term of each chain is $[1^i2^k0^{\omega}]_{1qi}$.
Given any two chains $\Gamma_{j} = \{[1^i2^j0^{\omega}]_{1qi}: i \in \bbbn\}$ 
and $\Gamma_{k} = \{[1^i2^k0^{\omega}]_{1qi}: i \in \bbbn\}$,
where $j < k$, the classes $[1^{2}2^j0^{\omega}]_{1qi} \in \Gamma_j$ 
and $[12^k0^{\omega}]_{1qi} \in \Gamma_k$ are incomparable with
respect to $\leq_{1qi}$.

It remains to show that any anti-chain of 1qi-degrees contained in 
$[\alpha]_{mqi}$ must be finite. Consider any anti-chain of 1qi-degrees 
containing the class $[1^i2^j0^{\omega}]_{1qi} \subseteq [\alpha]_{mqi}$.
Every element of this anti-chain that is different from 
$[1^i2^j0^{\omega}]_{1qi}$ is of the form $[1^{i'}2^{j'}0^{\omega}]_{1qi}$,
where either $i < i'$ and $j > j'$, or $i > i'$ and $j < j'$.
Thus, if the anti-chain were infinite, then it would contain at least
$2$ 1qi-degrees, $[\beta]_{1qi}$ and $[\gamma]_{1qi}$,
such that either $\beta$ has the same number of occurrences of $1$
as $\gamma$, or $\beta$ has the same number of occurrences of $2$
as $\gamma$. This is a contradiction as it would imply that
either $\beta \leq_{1qi} \gamma$ or $\gamma \leq_{1qi} \beta$.  
\end{proof}

\subsection{pqi-Reductions}

We now discuss pqi-reductions, which are the most stringent
kind of quasi-isometric reductions considered in the present
work. Pqi-reductions are 1qi-reductions that are surjective;
an example of such a reduction is the mapping
$2m-1 \mapsto 2m$, $2m \mapsto 2m-1$ from 
$(01)^{\omega}$ to $(10)^{\omega}$. We record a few elementary
properties of pqi-reductions.

\begin{lemma}\label{lem:cross.over.lb}
If $f$ is a pqi-reduction and if $x+D = f(x)$
for some $D \geq 1$ and some $x \in \bbbn$, then
there are at least $D$ positions $y > x$ such that 
$f(y) < f(x)$. 
\end{lemma}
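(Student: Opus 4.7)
The plan is to exploit bijectivity of $f$ directly, via a simple counting argument. Since $f$ is a pqi-reduction, it is both injective (as a 1qi-reduction) and surjective (by definition of pqi-reduction), hence a bijection on $\bbbn$.

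First, I would consider the set $T = \{y \in \bbbn : f(y) < f(x)\}$ of all positions (not yet restricted to $y > x$) whose image under $f$ lies strictly below $f(x) = x + D$. Because $f$ is a bijection, $T$ is exactly the preimage of $\{1, 2, \ldots, x+D-1\}$, so $|T| = x + D - 1$.

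Next, I would count how many elements of $T$ can possibly lie in the initial segment $\{1, 2, \ldots, x\}$. Since $f(x) = x + D \not< f(x)$, the point $x$ itself is not in $T$; therefore the elements of $T$ that are at most $x$ all lie in $\{1, 2, \ldots, x-1\}$, giving at most $x - 1$ such elements. Consequently, the number of $y \in T$ with $y > x$ is at least
\[
(x + D - 1) - (x - 1) \;=\; D,
\]
which is exactly the required conclusion.

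There is no real obstacle here; the argument is a pigeonhole observation built on bijectivity, and notably does not even invoke the quasi-isometry constants $C$ or Conditions (a)/(b). The only subtlety worth flagging in the write-up is making explicit the use of surjectivity (which ensures $|T| = x+D-1$ rather than merely $|T| \le x+D-1$), since injectivity alone — available already for 1qi-reductions — would not suffice to force $D$ many crossings below $f(x)$.
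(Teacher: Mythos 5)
Your argument is correct and is essentially the same counting argument as the paper's: the paper counts the at-least-$D$ elements of $\{1,\ldots,x+D-1\}\setminus\{f(1),\ldots,f(x-1)\}$ and uses surjectivity to find preimages $y>x$ for them, whereas you count on the preimage side via $T=f^{-1}(\{1,\ldots,x+D-1\})$. The two are a direct translation of one another, so no substantive difference.
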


\begin{proof}
If $x+D = f(x)$ for some $D \geq 1$, then 
$\{1, \ldots, x+D-1\} \, \backslash \, \{f(1), \ldots, f(x-1)\}$ must contain 
at least $D$ elements as the former set contains $D$ more elements than 
the latter. Thus, for $f$ to be a bijection, there must exist at
least $D$ positions $y > x$ that are mapped by 
$f$ into $\{1, \ldots, x+D-1\} \, \backslash \, \{f(1), \ldots, f(x-1)\}$. 
\end{proof}

We next observe that for any pqi-reduction $f$, 
there is a uniform upper bound on the difference
$x-f(x)$.
 
\begin{proposition}\label{prop:pqi.diff.ub}
If $f$ is a $C$-pqi-reduction, then
for all $x \in \bbbn$, $x - f(x) < 2C^2+1$.
\end{proposition}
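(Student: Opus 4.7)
My plan is to combine the bijectivity of $f$ with the Small Cross-Over Lemma (Lemma \ref{lem:small.crossover}) via a pigeonhole count. The bound is trivial when $f(x) \geq x$, so I would immediately reduce to the case where $f(x) < x$, setting $D = x - f(x) \geq 1$; the goal then becomes showing $D \leq 2C^2$ (in fact, I expect even $D \leq C^2$ to drop out).

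The first step is a counting argument that turns the deficit $D$ into many ``overshooting'' images below $x$. Since $f$ is injective, the set $\{f(1),\ldots,f(x)\}$ consists of $x$ distinct positive integers and contains the element $f(x) = x - D$. At most $f(x) = x - D$ of these values can lie in $\{1,2,\ldots,f(x)\}$, so at least $D$ of them must exceed $f(x)$. Since $f(x)$ itself does not exceed $f(x)$, these $D$ overshooting values are all of the form $f(y)$ with $y < x$ and $f(y) > f(x)$. Thus we obtain at least $D$ indices $y < x$ witnessing a cross-over at position $x$.

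The second step is to cap each cross-over using Lemma \ref{lem:small.crossover}. For each such $y < x$ with $f(y) > f(x)$, the Small Cross-Over Lemma yields $0 < f(y) - f(x) \leq C^2$, so every such $f(y)$ lies in the set $\{f(x)+1, f(x)+2, \ldots, f(x)+C^2\}$ of size $C^2$. By injectivity of $f$, the $D$ values $f(y)$ obtained in the first step are pairwise distinct, so $D \leq C^2$, which is well within the required bound $D < 2C^2 + 1$.

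I do not anticipate a serious obstacle: the proof is essentially a pigeonhole that uses both the surjectivity/injectivity (needed to force $D$-many images above $f(x)$) and the small cross-over constant $C^2$ (needed to squeeze them into a finite window). The only point to watch is to keep the case $f(x) \geq x$ separate and to remember that $f(x)$ itself is counted among $\{f(1),\ldots,f(x)\}$ when doing the pigeonhole, so that the overshooting images are genuinely indexed by $y < x$ and Lemma \ref{lem:small.crossover} applies.
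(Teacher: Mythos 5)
Your proof is correct, and it actually establishes a strictly stronger result than the paper's. Two things stand out. First, your pigeonhole argument gives the tighter bound $x - f(x) \leq C^2$ rather than $x - f(x) < 2C^2 + 1$. Second, you never invoke surjectivity: the only structural property you use is that $\{f(1),\ldots,f(x)\}$ consists of $x$ \emph{distinct} values, which is injectivity alone. So your argument applies verbatim to any $C$-1qi-reduction, not merely to $C$-pqi-reductions. The paper instead argues by contradiction, assuming $x - f(x) \geq 2C^2 + 1$ and splitting into two cases according to how many positions $z > x$ are mapped by $f$ into the interval $\{f(x)+1,\ldots,x-1\}$; both cases lean on surjectivity (every position in that interval must have a preimage) in addition to injectivity and the Small Cross-Over Lemma, and the bookkeeping produces the looser constant. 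Your direct count over the images $\{f(1),\ldots,f(x)\}$ is cleaner, avoids the case distinction entirely, and yields both a sharper constant and broader applicability — the only thing to keep in mind is that the proposition as stated is about pqi-reductions, so your extra generality is a bonus rather than something the statement demands.
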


\begin{proof}
Assume, by way of contradiction, that there is some
$x \in \bbbn$ such that $x - f(x) \geq 2C^2+1$.
First, suppose that there are at least $C^2+1$ numbers 
$z$ such that $z > x$ and 
$f(z) \in \{f(x)+1,f(x)+2,\ldots,x-1\}$. 
Then, there are at least $C^2+1$ numbers $z'$ such that 
$z' < x$ and $f(z')  > x > f(x)$, among which there is at 
least one $z'_0$ with $f(z'_0) \geq x + C^2+1$.
This would contradict the fact that by the Small Cross-Over
Lemma (Lemma \ref{lem:small.crossover}), 
$z'_0 < x \Rightarrow f(z'_0) \leq f(x)+C^2 < x+C^2$.

Second, suppose that $f$ maps at most $C^2$ numbers 
greater than $x$ into $\{f(x)+1,f(x)+2,\ldots,x-1\}$.
Then, there are at least $C^2+1$ numbers less than
$x$ that are mapped into $\{f(x)+1,f(x)+2,\ldots,x-1\}$
and in particular, there is at least one number
$y < x$ such that $f(y) \geq f(x)+C^2+1$, 
contradicting the Small Cross-over Lemma. 
Thus, for all $x \in \bbbn$, $x-f(x) < 2C^2+1$.   
\end{proof}

Lemma \ref{lem:cross.over.lb} and Proposition 
\ref{prop:pqi.diff.ub} together give a uniform
upper bound on the absolute difference between any
position number and its image under a $C$-pqi-reduction.

\begin{corollary}\label{cor:pqi.absdiff.bound}
If $f$ is a $C$-pqi-reduction, then
for all $x \in \bbbn$, $|x - f(x)| < 2C^2+1$.
\end{corollary}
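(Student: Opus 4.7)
The plan is to derive the two-sided bound by combining Proposition~\ref{prop:pqi.diff.ub} (which already handles the case $x \geq f(x)$) with Lemma~\ref{lem:cross.over.lb} (which will handle the reverse direction $f(x) \geq x$). Since $|x - f(x)| = \max\{x - f(x), f(x) - x\}$, it suffices to bound each term by $2C^2 + 1$.

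First, I would simply invoke Proposition~\ref{prop:pqi.diff.ub} to conclude $x - f(x) < 2C^2 + 1$ for every $x \in \bbbn$, handling one side of the absolute value at no extra cost.

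For the other direction, suppose $f(x) - x = D$ with $D \geq 1$, i.e., $x + D = f(x)$. Lemma~\ref{lem:cross.over.lb} then supplies at least $D$ positions $y > x$ with $f(y) < f(x)$. The key observation is that each such $y$ is severely constrained by Condition~(b) of Definition~\ref{defn:C.reduction}: its contrapositive says that whenever $f(y) \leq f(x)$ we must have $y \leq x + C$. Hence all the witnessing positions lie in $\{x+1, x+2, \ldots, x+C\}$, a set of cardinality $C$, forcing $D \leq C$. Since $C \leq 2C^2 < 2C^2 + 1$ for every $C \geq 1$, this yields $f(x) - x < 2C^2 + 1$.

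Putting the two inequalities together gives $|x - f(x)| < 2C^2 + 1$ for all $x \in \bbbn$, as required. There is no real obstacle here — the work has essentially been done by Proposition~\ref{prop:pqi.diff.ub} and Lemma~\ref{lem:cross.over.lb}; one only has to observe that Condition~(b) of a $C$-quasi-isometry automatically caps the number of ``backward crossovers'' available above position $x$ at $C$, which is a much tighter bound than the $2C^2+1$ bound we are proving.
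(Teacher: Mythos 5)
Your proof is correct and takes essentially the same route as the paper: both use Lemma~\ref{lem:cross.over.lb} together with Condition~(b) to bound $f(x)-x$ by $C$, and then combine with Proposition~\ref{prop:pqi.diff.ub} for the other side of the absolute value. The only cosmetic difference is that the paper writes the final step as $|x-f(x)| < \max\{C+1, 2C^2+1\} = 2C^2+1$, whereas you observe directly that $C < 2C^2+1$.
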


\begin{proof}
By Condition (b), there cannot be more than $C$ numbers
$y$ such that $y > x$ and $f(y) < f(x)$. Lemma 
\ref{lem:cross.over.lb} thus implies that there cannot
exist any $D > C$ such that $x + D = f(x)$, and so
$f(x) - x \leq C$. Combining the latter inequality with
that in Proposition \ref{prop:pqi.diff.ub} yields
$|x-f(x)| < \max\{C+1,2C^2+1\} = 2C^2+1$.   
\end{proof}

Given any infinite string $\alpha$, it was observed
earlier that by the definitions of pqi, 1qi and 
mqi-reductions, 
$[\alpha]_{pqi} \subseteq [\alpha]_{1qi} \subseteq 
[\alpha]_{mqi}$. In the following example, we give 
instances of strings $\alpha$ where each of the two
subset relations is proper or can be replaced with
the equals relation.

\begin{example}\label{exmp:reduction.separations}
\begin{enumerate}[(a)]
\item\label{eg1} $[\alpha]_{pqi} = [\alpha]_{1qi} = [\alpha]_{mqi}$.
Set $\alpha = 0^{\omega}$.
For any infinite string $\gamma$
such that $\gamma \leq_{mqi} 0^{\omega}$, $\gamma$
can only contain occurrences of $0$, and therefore
$[0^{\omega}]_{pqi} = [0^{\omega}]_{1qi} = [0^{\omega}]_{mqi} = 
\{0^{\omega}\}$.

\item\label{eg2} $[\alpha]_{1qi} = [\alpha]_{mqi}$ and 
$[\alpha]_{pqi} \subset [\alpha]_{1qi}$.
Set $\alpha = (01)^{\omega}$.
First, $(001)^{\omega} \leq_{1qi} (01)^{\omega}$,
as witnessed by the 1qi-reduction $3n-2 \mapsto
4n-3, 3n-1 \mapsto 4n-1, 3n \mapsto 4n$ for 
$n \in \bbbn$. We also have 
$(01)^{\omega} \leq_{1qi} (001)^{\omega}$
via the 1qi-reduction $2n-1 \mapsto 3n-2, 
2n \mapsto 3n$ for $n \in \bbbn$.
However, $(001)^{\omega} \notin [(01)^{\omega}]_{pqi}$
because the density of $0$'s and $1$'s in
the two strings are different, making it impossible
to construct a permutation reduction between
them. More formally, if there were a pqi-reduction
from $(001)^{\omega}$ to $(01)^{\omega}$, then by
Corollary \ref{cor:pqi.absdiff.bound}, there would be a constant
$D$ such that for each $n$, the first $3n$ positions of $(001)^{\omega}$ 
are mapped into the first $3n+D$ positions of $(01)^{\omega}$.
But the first $3n$ positions of $(001)^{\omega}$ contain $2n$ 
occurrences of $0$ while the first $3n+D$ positions of
$(01)^{\omega}$ contain at most $\left\lceil 1.5n+\frac{D}{2} \right\rceil$ 
occurrences of $0$, and for large enough $n$, one has
$2n > \left\lceil 1.5n+\frac{D}{2} \right\rceil$. Hence, no pqi-reduction from
$(001)^{\omega}$ to $(01)^{\omega}$ can exist, and
so $[\alpha]_{pqi} \subset [\alpha]_{1qi}$.

To see that 
$[(01)^{\omega}]_{mqi} \subseteq [(01)^{\omega}]_{1qi}$,
we first note that any string that is mqi-reducible to 
$(01)^{\omega}$ (or to any other recursive string) must be 
recursive. Thus, if $\beta \leq_{mqi} (01)^{\omega}$, then
a 1qi-reduction from $\beta$ to $(01)^{\omega}$ can be
constructed by mapping the $n$-th position of $\beta$
to the position of the matching letter in the $n$-th 
occurrence of $01$ in $(01)^{\omega}$. Next, suppose 
that $f$ is a $C$-mqi-reduction from $(01)^{\omega}$
to $\beta$. By Corollary \ref{cor:nary.seq}, $f$ maps
the positions of $(01)^{\omega}$ to a sequence
of positions of $\beta$ that contains $0$ and $1$  
every $2C$ positions. Thus, a 1qi-reduction can be
constructed from $(01)^{\omega}$ to $\beta$ by mapping, 
for each $n$, the $(2n-1)$-st and $(2n)$-th positions 
of $(01)^{\omega}$ to the positions of the first occurrence
of $0$ and first occurrence of $1$ respectively in the
interval $[2C(n-1)+1, 2Cn]$ of positions of $\beta$.
Therefore, $\beta \in [(01)^{\omega}]_{1qi}$.

\item\label{eg3} $[\alpha]_{1qi} \subset [\alpha]_{mqi}$ and 
$[\alpha]_{pqi} = [\alpha]_{1qi}$.
Set $\alpha = 10^{\omega}$.
We recall from the proof of Proposition \ref{prop:union.asc.chain}
that $[10^{\omega}]_{pqi}$ and $[10^{\omega}]_{1qi}$
consist of all binary strings with a single occurrence
of $1$, while $[10^{\omega}]_{mqi}$ consists of all 
binary strings with a finite, positive number of occurrences
of $1$. Thus, $[10^{\omega}]_{pqi} = [10^{\omega}]_{1qi}$
and $[10^{\omega}]_{pqi} \neq [10^{\omega}]_{mqi}$.

\item\label{eg4} 
$[\alpha]_{pqi} \subset [\alpha]_{1qi} \subset [\alpha]_{mqi}$.
Set $\alpha = (0^n1)_{n=1}^{\infty}$, the concatenation
of all strings $0^n1$ where $n \in \bbbn$.
Then, $\beta = (0^n11)_{n=1}^{\infty} \in [\alpha]_{mqi}$;
however, $\beta \notin [\alpha]_{1qi}$ as each pair of 
adjacent positions of $1$'s in $\beta$ must be 
mapped to distinct positions of $1$'s in 
$\alpha$, but the distance between the $n$-th and $(n+1)$-st 
occurrences of $1$ in $\alpha$ increases linearly with $n$, 
meaning that Condition (a) cannot be satisfied.

To construct an mqi-reduction from $\beta$ to $\alpha$,
map the positions of the substring $0^n11$ of $\beta$
to the positions of the substring $0^n1$ of $\alpha$
as follows: for $k \in \{1,\ldots,n\}$, the position of the 
$k$-th occurrence of $0$ in $0^n11$ is mapped to that of the $k$-th
occurrence of $0$ in $0^n1$, while the two positions of $1$'s
in $0^n11$ are mapped to the position of the single $1$ in $0^n1$.
For an mqi-reduction from $\alpha$ to $\beta$,
for each substring $0^n1$ of $\alpha$ and each substring $0^n11$
of $\beta$, the positions of $0^n$ in $0^n1$ are mapped to the 
corresponding positions of $0^n$ in $0^n11$, while the 
position of $1$ in $0^n1$ is mapped to the position of the first
occurrence of $1$ in $0^n11$. Thus, $\beta \in [\alpha]_{mqi}$.

Furthermore, $\gamma = 1(0^n1)_{n=1}^{\infty} \in [\alpha]_{1qi}$
but $\gamma \notin [\alpha]_{pqi}$. The reason for $\gamma$ not
being pqi-reducible to $\alpha$ is similar to that given in 
Example (\ref{eg2}). If such a pqi-reduction did exist, then by 
Corollary \ref{cor:pqi.absdiff.bound}, 
there would exist a constant $D$ such that
for all $n$, the first $1+\sum_{k=1}^n (k+1) = 1+\frac{n(n+3)}{2}$
positions of $\gamma$ are mapped into the first 
$1+\frac{n(n+3)}{2}+D$ positions of $\alpha$. But the first
$1+\frac{n(n+3)}{2}$ positions of $\gamma$ contain $n+1$ occurrences
of $1$ and for large enough $n$, the first $1+\frac{n(n+3)}{2}+D$ 
positions of $\alpha$ contain at most $n$ occurrences of $1$.
Hence, no pqi-reduction from $\gamma$ to $\alpha$ is possible. 

For a 1qi-reduction from $\gamma$ to $\alpha$,
map the starting position of $\gamma$, where the letter $1$
occurs, to the first occurrence of $1$ in $\alpha$. 
For subsequent positions of $\gamma$, for each $n \geq 1$, 
the set of positions of $\gamma$ where the substring
$0^n1$ occurs can be mapped in a one-to-one
fashion into the set of positions of $\alpha$ where the
substring $0^{n+1}1$ occurs. To see that $\alpha$ is 
1qi-reducible to $\gamma$, it suffices to observe that
$\alpha$ is a suffix of $\gamma$, so one can map the positions
of $\alpha$ in a one-to-one fashion to the positions
of the suffix of $\gamma$ corresponding to $\alpha$.
\end{enumerate}
\end{example} 

Proposition \ref{prop:equal.mqi.pqi.1qi.degrees} extends the first 
example in Example \ref{exmp:reduction.separations}
by characterising all recursive strings whose pqi, 1qi and 
mqi-degrees all coincide. In fact, there are only 
$|\Sigma|$ many such strings: those of the form 
$a_i^{\omega}$, where $a_i \in \Sigma$. We call the
pqi, 1qi and mqi-degrees of such strings {\em trivial}.

\begin{definition}\label{defn:trivial.degree}
The pqi, 1qi and mqi-degrees of each string
$a_i^{\omega}$, where $a_i \in \Sigma$, will be
called {\em trivial} pqi, 1qi and mqi-degrees 
respectively. 
\end{definition}   

\begin{restatable}{proposition}{trivialdegree}
\label{prop:equal.mqi.pqi.1qi.degrees}
If, for some recursive string $\alpha$, 
$[\alpha]_{pqi} = [\alpha]_{1qi} = [\alpha]_{mqi}$,
then all three degree classes are trivial.
\end{restatable}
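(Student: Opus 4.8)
The plan is to argue contrapositively: assuming $\alpha$ is recursive and \emph{not} of the form $a_i^\omega$, I will exhibit a string $\beta$ witnessing a strict separation somewhere in the chain $[\alpha]_{pqi}\subseteq[\alpha]_{1qi}\subseteq[\alpha]_{mqi}$, so that the three degrees cannot all coincide. Since $\alpha$ uses at least two distinct letters, it contains some letter $a$ that appears, and either (Case 1) some letter appears only finitely often, or (Case 2) every letter that appears does so infinitely often.

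In Case 1, I would pattern-match on Example~\ref{exmp:reduction.separations}(\ref{eg3}): let $a$ be a letter with finitely many (but at least one) occurrences in $\alpha$, and form $\beta$ from $\alpha$ by inserting one extra copy of $a$ adjacent to an existing occurrence. Then $\beta\in[\alpha]_{mqi}$ via the obvious reductions that collapse/duplicate one position, but $\beta\notin[\alpha]_{1qi}$: a $1$qi-reduction from $\beta$ to $\alpha$ (or vice versa) would have to injectively match the occurrences of $a$, and a parity/counting argument on the number of $a$'s shows this is impossible while respecting Condition~(a). Hence $[\alpha]_{1qi}\subsetneq[\alpha]_{mqi}$ in this case. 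In Case 2, I would instead imitate Example~\ref{exmp:reduction.separations}(\ref{eg2}): pick a letter $a$ occurring infinitely often and produce $\beta$ by doubling every occurrence of $a$ in $\alpha$ (replacing each $a$ by $aa$). Corollary~\ref{cor:nary.seq} guarantees the reductions needed to get $\beta\in[\alpha]_{1qi}$, but the density of $a$'s differs between $\alpha$ and $\beta$, so by Corollary~\ref{cor:pqi.absdiff.bound} (the uniform bound $|x-f(x)|<2C^2+1$) a counting argument on prefixes rules out any pqi-reduction, giving $[\alpha]_{pqi}\subsetneq[\alpha]_{1qi}$.

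The main obstacle is Case 2: I need to be sure the modified string $\beta$ still lies in $[\alpha]_{1qi}$ and not just in $[\alpha]_{mqi}$. The reduction $\alpha\leq_{1qi}\beta$ is easy (map each position of $\alpha$ to its copy in $\beta$, which is injective and order-preserving with constant $2$). For $\beta\leq_{1qi}\alpha$ one cannot simply ``undo'' the doubling injectively, since two adjacent copies of $a$ in $\beta$ must go to distinct positions of $\alpha$; here I would use that $a$ occurs infinitely often in $\alpha$ together with Corollary~\ref{cor:nary.seq} applied to an mqi-reduction, or more directly build the injection by sending the two copies of the $n$-th $a$ in $\beta$ to the $(2n-1)$-st and $(2n)$-th occurrences of $a$ in $\alpha$ and routing the intervening blocks accordingly — this keeps Condition~(a) bounded because the gaps between consecutive $a$'s in $\alpha$ are the same data (up to the doubling) as in $\beta$. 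One must double-check that when a letter appears only finitely often the first case genuinely applies and handle the borderline where $\alpha$ is eventually $a_i^\omega$ but not equal to it (still covered by Case 1 with the finitely-occurring letters in the non-periodic prefix).

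Finally I would note that in every case the strings $\beta$ constructed are recursive (they are simple recursive modifications of the recursive string $\alpha$), so they legitimately live in the relevant degrees, and that the separations $[\alpha]_{1qi}\subsetneq[\alpha]_{mqi}$ or $[\alpha]_{pqi}\subsetneq[\alpha]_{1qi}$ each contradict the hypothesis $[\alpha]_{pqi}=[\alpha]_{1qi}=[\alpha]_{mqi}$. Therefore the hypothesis forces $\alpha=a_i^\omega$ for some $a_i\in\Sigma$, i.e.\ all three degrees are trivial.
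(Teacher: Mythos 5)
Your Case~1 is fine (it is essentially the paper's first case, except that you get the separation at the 1qi level rather than the pqi level, which is equally good). The genuine gap is in Case~2. There you double every occurrence of a letter $a$ that occurs infinitely often and claim $\beta\in[\alpha]_{1qi}$, but this is false in general: if successive occurrences of $a$ in $\alpha$ have unbounded gaps (e.g.\ $\alpha=ab\,ab^2\,ab^3\cdots$, where every letter occurs infinitely often, so your Case~2 applies), then no 1qi-reduction from $\beta$ to $\alpha$ exists at all. The two adjacent copies of an $a$ in $\beta$ must be sent injectively to two \emph{distinct} occurrences of $a$ in $\alpha$ at distance at most $C$ (Condition~(a)), and by Condition~(b) the images tend to infinity, so eventually no two distinct $a$-positions are within distance $C$. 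Your proposed repair -- sending the $n$-th pair to the $(2n-1)$-st and $(2n)$-th occurrences of $a$ -- fails for the same reason (those two occurrences are themselves unboundedly far apart) and moreover cannot be made color-preserving on the intervening blocks, since the material between the $(2n)$-th and $(2n+1)$-st occurrences of $a$ in $\alpha$ need not resemble the material between the $n$-th and $(n+1)$-st occurrences in $\beta$; and Corollary~\ref{cor:nary.seq} cannot be invoked, because its hypothesis is precisely the uniform bound on gaps that you do not have. A secondary weak point: when the letters other than $a$ are sparse, the ``density of $a$'s differs'' argument against a pqi-reduction also degenerates (both densities tend to $1$); the robust obstruction is the unboundedly growing shift of the non-$a$ positions, not a prefix-frequency count of $a$'s.

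This is exactly why the paper splits your Case~2 into two subcases. If some letter $a_i$ has unbounded gaps between successive occurrences, it does \emph{not} double letters but reuses the prepended string $\beta=a_i\alpha$, which is mqi-equivalent to $\alpha$ but not pqi-reducible to it: by Corollary~\ref{cor:pqi.absdiff.bound} a pqi-reduction has bounded displacement, and a long gap forces the first $k+1$ occurrences of $a_i$ in $\beta$ into a window of $\alpha$ containing only $k$ of them. Only when every occurring letter has bounded gaps does it run a density-style argument, and even then not with your $\beta$: it first shows $\alpha\equiv_{1qi}(a_1\cdots a_l)^{\omega}$ and then separates pqi from 1qi using $(a_1a_1a_2\cdots a_l)^{\omega}$ as in Example~\ref{exmp:reduction.separations}(b). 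If you want to keep your doubled string, the statement you can actually prove is $\beta\in[\alpha]_{mqi}\setminus[\alpha]_{pqi}$ (the mqi-equivalence never needs injectivity, and the non-$a$ positions of $\beta$ are shifted forward by an unbounded amount, which kills any bounded-displacement bijection); that already contradicts $[\alpha]_{pqi}=[\alpha]_{mqi}$, but it is a different argument from the one you wrote, and the intermediate claim $\beta\in[\alpha]_{1qi}$ must be dropped.
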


\begin{proof}
Suppose $\Sigma = \{a_1,\ldots,a_{l}\}$, where without loss
of generality it may be assumed that $l \geq 2$. 
Fix any string $\alpha$ whose pqi, 1qi and mqi-degrees all 
coincide, and assume, by way of contradiction, that at least two
distinct letters occur in $\alpha$.
Now we consider the following case distinction.
\begin{description}
\setlength{\labelwidth}{-1.5cm}
\setlength{\itemindent}{-1.95cm}
\item[Case (i):] \textit{There 
is some $a_i$ that occurs a finite,
positive number of times in $\alpha$.}
Let $\beta = a_i\alpha$. Then, $\beta \notin [\alpha]_{pqi}$
as $\beta$ contains exactly one more occurrence of $a_i$ than
$\alpha$, so a colour-preserving bijection between $\alpha$
and $\beta$ cannot exist. On the other hand, $\beta \in
[\alpha]_{mqi}$.
First, $\beta \leq_{mqi} \alpha$ via a reduction that maps the
first position of $\beta$ to the position of the first occurrence
of $a_i$ in $\alpha$; the positions of the suffix $\alpha$ of 
$\beta$ are then mapped in a one-to-one fashion to the corresponding
positions of $\alpha$. Second, $\alpha \leq_{1qi} \beta$
as $\alpha$ is a suffix of $\beta$. 
  
\item[Case (ii):] \textit{Each letter in $\alpha$ occurs infinitely
often.}
\begin{description}
\setlength{\labelwidth}{-1.5cm}
\setlength{\itemindent}{-2.35cm}
\item[Subcase (a):] \textit{There is some letter $a_i$ occurring
in $\alpha$ such that there is no uniform upper bound on the
distance between successive occurrences of $a_i$.}
Consider the string $\beta = a_i\alpha$.
The same argument as in Case (i) shows that
$\beta \in [\alpha]_{mqi}$. We show that 
$\beta \not\leq_{pqi} \alpha$. If there were a pqi-reduction from 
$\beta$ to $\alpha$, then by Corollary \ref{cor:pqi.absdiff.bound}, 
there would exist a constant $D$ such that for all 
$n$, the first $n$ positions of $\beta$ are mapped into the
first $n+D$ positions of $\alpha$. Now pick $k$ large 
enough so that the distance between the $k$-th and $(k+1)$-st 
occurrences of $a_i$ in $\alpha$ is greater than $D+1$.
Let $n$ be the position number of the $(k+1)$-st occurrence of
$a_i$ in $\beta$; then $n-1$ is the position number of the
$k$-th occurrence of $a_i$ in $\alpha$. By Corollary
\ref{cor:pqi.absdiff.bound}, the positions of the first 
$k+1$ occurrences of $a_i$ in $\beta$ must be mapped into
the first $n+D$ positions of $\alpha$, but since the 
$(k+1)$-st occurrence of $a_i$ in $\alpha$ is at a position
greater than $n-1+D+1 = n+D$, it follows that such a 
mapping cannot be one-to-one. 
Hence, $\beta \notin [\alpha]_{pqi}$.
      
\item[Subcase (b):] \textit{There is some $D > 0$ such that
for every letter $a_i$ occurring in $\alpha$, the distance
between successive occurrences of $a_i$ is at most $D$.}
Let $D$ be the largest distance between successive occurrences
of the same letter in $\alpha$.
As $\alpha$ is recursive, one can construct a 1qi-reduction
from $\alpha$ to $(a_1\cdots a_l)^{\omega}$ by mapping the 
$n$-th position of $\alpha$ to the position in the $n$-th
occurrence of $a_1\cdots a_l$ where the corresponding letters
match. Furthermore, a 1qi-reduction from 
$(a_1\cdots a_l)^{\omega}$ to $\alpha$ may be constructed
as follows. Let $m$ be the first position of $\alpha$ such that
every letter in $\alpha$ occurs at least once before position $m+1$. 
By the choice of $D$, any interval of $D$ positions of $\alpha$
starting after position $m$ contains every letter. We define
a mapping from $(a_1\cdots a_l)^{\omega}$ to $\alpha$
such that the position of the first occurrence of each letter in
$(a_1\cdots a_l)^{\omega}$ is mapped to the position of the
first occurrence of the corresponding letter in $\alpha$, 
and for $k \geq 2$, for the $k$-th occurrence of $a_1\cdots a_l$,
the position of each letter is mapped to the position of the first 
occurrence of the corresponding letter in the interval 
$[m+(k-1)D+1, m+kD]$ of positions of $\alpha$. This mapping is
one-to-one and also satisfies Conditions (a) and (b) for the
constant $\max\{m+D,2D\}$, so it is indeed a 1qi-reduction.
Therefore, $[(a_1\cdots a_l)^{\omega}]_{1qi} = [\alpha]_{1qi}
= [\alpha]_{mqi}$. 

Now consider $\beta = (a_1a_1a_2\cdots a_l)^{\omega}$. A 1qi-reduction
from $\beta$ to $(a_1\cdots a_l)^{\omega}$ can be constructed
by mapping, for the $k$-th occurrence of $a_1a_1a_2\cdots a_l$,
the position of the first occurrence of $a_1$ to the position of 
$a_1$ in the $(2k-1)$-st occurrence of $a_1\cdots a_l$,
and the positions of subsequent letters to the positions of
matching letters in the $(2k)$-th occurrence of $a_1\cdots a_l$.
A 1qi-reduction from $(a_1\cdots a_l)^{\omega}$ to $\beta$
can also be constructed by mapping the positions of the
$k$-th occurrence of $(a_1\cdots a_l)$ to positions of matching
letters in the $k$-th occurrence of $(a_1a_1a_2\cdots a_l)$. 
Thus, $\beta \in [\alpha]_{1qi}$. On the other hand, a proof 
entirely similar to that in part (\ref{eg2}) of Example 
\ref{exmp:reduction.separations} shows that
$\beta \notin [\alpha]_{pqi}$. We conclude that 
$[\alpha]_{pqi} \neq [\alpha]_{1qi}$. 
\qedhere  
\end{description}
\end{description}
\end{proof}

We observe next that every non-trivial pqi degree must be infinite.

\begin{proposition}\label{prop:non.trivial.degrees.infinite}
All non-trivial pqi-degrees are infinite.
\end{proposition}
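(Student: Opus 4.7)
The plan is to show that every non-trivial pqi-degree $[\alpha]_{pqi}$ contains infinitely many strings. By Definition~\ref{defn:trivial.degree}, non-triviality amounts to $\alpha$ containing at least two distinct letters. My basic construction gadget is the transposition: for any $1 \leq i < j$, the bijection $\tau_{i,j}:\bbbn\to\bbbn$ that swaps $i$ and $j$ and fixes every other position is a recursive $(j-i+1)$-pqi-reduction --- Conditions (a) and (b) of Definition~\ref{defn:C.reduction} are routine to check, using that the largest jump induced by $\tau_{i,j}$ has size $j-i+1$ --- and since $\tau_{i,j}$ is its own inverse, the string $\alpha\circ\tau_{i,j}$ lies in $[\alpha]_{pqi}$. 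The task then reduces to exhibiting infinitely many transpositions whose applications to $\alpha$ yield pairwise distinct strings.

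I split into two cases according to the number of \emph{transition positions} of $\alpha$, meaning positions $i$ with $\alpha(i)\neq\alpha(i+1)$. If there are infinitely many such positions $i_1 < i_2 < \ldots$, I set $\beta^{(k)} := \alpha\circ\tau_{i_k,\,i_k+1}$; each $\beta^{(k)}$ is in $[\alpha]_{pqi}$, and because $\alpha(i_k)\neq\alpha(i_k+1)$, the string $\beta^{(k)}$ actually differs from $\alpha$ at exactly the two positions $i_k$ and $i_k+1$. The set of positions at which $\beta^{(k)}$ disagrees with $\alpha$ therefore determines $i_k$, so the $\beta^{(k)}$'s are pairwise distinct and furnish infinitely many members of the degree.

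Otherwise $\alpha$ has only finitely many transition positions and is therefore eventually constant: $\alpha = w\cdot a^\omega$ for some finite word $w$ and some letter $a\in\Sigma$. Non-triviality forces some letter $b\neq a$ to occur in $w$; fix one such position $i_0$. For each $k > |w|-i_0$ one has $\alpha(i_0+k)=a\neq b$, so $\beta^{(k)} := \alpha\circ\tau_{i_0,\,i_0+k}$ lies in $[\alpha]_{pqi}$ and has the letter $b$ at position $i_0+k$ and nowhere else past position $|w|$. Distinct values of $k$ therefore yield distinct $\beta^{(k)}$, giving again infinitely many members of $[\alpha]_{pqi}$. The whole argument is elementary; the only routine verification is the claim that each $\tau_{i,j}$ is a pqi-reduction, which I do not anticipate as a substantive obstacle.
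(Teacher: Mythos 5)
Your argument is correct and rests on the same key observation as the paper's proof, namely that permuting a finite set of positions of $\alpha$ preserves its pqi-degree, so one need only exhibit infinitely many distinct finite rearrangements. The paper avoids your case split by fixing a letter $a_1$ occurring infinitely often together with a distinct letter $a_2$ appearing in $\alpha$, and forming $\beta_n = a_1^n a_2 \alpha^{(n+1)}$ (a finite rearrangement) for each $n$; your transposition-based construction, while slightly longer because of the two cases, achieves the same goal.
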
 

\begin{proof}
Suppose that at least two distinct letters occur in $\alpha$.
Fix a letter, say $a_1$, that occurs infinitely often in 
$\alpha$. Let $a_2$ be a letter different from $a_1$ that
occurs in $\alpha$. For each $n \in \bbbn$, let 
$\beta_n = a_1^n a_2 \alpha^{(n+1)}$, where
$\alpha^{(n+1)}$ is obtained from $\alpha$ by removing
the first occurrence of $a_2$ as well as the first $n$
occurrences of $a_1$. Since $\beta_n$ is built from
$\alpha$ by permuting the letters occurring at a finite set of 
positions of $\alpha$, $\beta_n \in [\alpha]_{pqi}$. As the 
$\beta_n$'s are all distinct, it follows that $[\alpha]_{pqi}$ 
is indeed infinite.   
\end{proof}

We close this subsection by illustrating an application
of Proposition \ref{prop:non.trivial.degrees.infinite},
showing that if the mqi-degree of $\alpha$ contains at 
least two distinct strings such that one is 1qi-reducible 
to the other, then the first string is 1qi-reducible to 
infinitely many strings in $[\alpha]_{mqi}$. 

\begin{proposition}\label{prop:mqi.1qi.infinite}
If there exist distinct $\beta \in [\alpha]_{mqi}$ and
$\gamma \in [\alpha]_{mqi}$ such that $\beta \leq_{1qi} \gamma$,
then $\beta$ is 1qi-reducible to infinitely many strings 
in $[\alpha]_{mqi}$.
\end{proposition}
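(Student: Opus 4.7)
The plan is to leverage Proposition \ref{prop:non.trivial.degrees.infinite} together with transitivity of $\leq_{1qi}$. The main observation is that every string in the pqi-degree of $\gamma$ lies in $[\alpha]_{mqi}$ and is 1qi-above $\gamma$, hence 1qi-above $\beta$.

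\textbf{Step 1: $\gamma$ has non-trivial pqi-degree.} I would first rule out the case $\gamma = a_i^{\omega}$ for some letter $a_i \in \Sigma$. If $\gamma$ were of this form, then any 1qi-reduction $h:\beta \to \gamma$ is color-preserving, so every position of $\beta$ must carry the color $a_i$, forcing $\beta = a_i^{\omega} = \gamma$. This would contradict the assumption that $\beta$ and $\gamma$ are distinct. Hence at least two distinct letters occur in $\gamma$, so by Definition \ref{defn:trivial.degree} the pqi-degree $[\gamma]_{pqi}$ is non-trivial.

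\textbf{Step 2: Infinitely many candidates.} By Proposition \ref{prop:non.trivial.degrees.infinite}, the pqi-degree $[\gamma]_{pqi}$ is infinite. Since every pqi-reduction is in particular a many-one quasi-isometric reduction, $[\gamma]_{pqi} \subseteq [\gamma]_{mqi}$. As $\gamma \in [\alpha]_{mqi}$, we have $[\gamma]_{mqi} = [\alpha]_{mqi}$, and therefore $[\gamma]_{pqi}$ is an infinite collection of strings all lying inside $[\alpha]_{mqi}$.

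\textbf{Step 3: Transitivity of $\leq_{1qi}$ from $\beta$.} For any $\gamma' \in [\gamma]_{pqi}$, there is a pqi-reduction from $\gamma$ to $\gamma'$, which is in particular an injective mqi-reduction satisfying Conditions (a) and (b), so $\gamma \leq_{1qi} \gamma'$. Combining with the hypothesis $\beta \leq_{1qi} \gamma$ via composition of the two underlying 1qi-reductions (the composition of injective color-preserving functions satisfying the quasi-isometry bounds is again such a function, with constants obtained by multiplying the individual constants), we obtain $\beta \leq_{1qi} \gamma'$ for every $\gamma' \in [\gamma]_{pqi}$.

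The only minor point requiring care is the routine verification in Step 3 that composing two $C$-1qi-reductions yields a 1qi-reduction (for some possibly larger constant) — this follows immediately from the triangle inequality applied via Condition (a) and by iterating Condition (b). With this, Steps 1–3 together exhibit infinitely many strings $\gamma' \in [\alpha]_{mqi}$ to which $\beta$ is 1qi-reducible, completing the proof.
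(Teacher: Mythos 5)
Your proof is correct and takes essentially the same approach as the paper's: both apply Proposition~\ref{prop:non.trivial.degrees.infinite} to the non-trivial pqi-degree of $\gamma$ and then propagate $\beta \leq_{1qi} \gamma$ by transitivity to the infinitely many strings so obtained. The only cosmetic differences are that you spell out explicitly why $\gamma$ cannot be of the form $a_i^{\omega}$ and that you stay inside $[\gamma]_{pqi}$, whereas the paper passes to the (superset) $[\gamma]_{1qi}$; neither changes the substance.
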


\begin{proof}
Suppose that $\beta \leq_{1qi} \gamma$ and $\beta \neq \gamma$
for some $\beta \in [\alpha]_{mqi}$ and 
$\gamma \in [\alpha]_{mqi}$. Then, $[\alpha]_{mqi}$ is non-trivial,
so by Proposition \ref{prop:non.trivial.degrees.infinite}, 
$[\gamma]_{pqi}$ is infinite. Since $[\gamma]_{pqi} \subseteq
[\gamma]_{1qi}$, $[\gamma]_{1qi}$ is also infinite. Thus,
$\beta$ is 1qi-reducible to each of the infinitely many strings in 
$[\gamma]_{1qi}$.  
\end{proof}

\subsection{The Partial Order of All mqi-Degrees}

As discussed earlier, Khoussainov and Takisaka \cite{Khoussainov17}
observed that for any alphabet $\Sigma = \{a_1,\ldots,a_l\}$,
the partial order $\Sigma^{\omega}_{qi}$ has a greatest element
equal to $[(a_1\cdots a_l)^{\omega}]_{qi}$. Their proof also extends
to the partial order of all recursive mqi-degrees, showing that 
for each recursive string $\alpha$, $[\alpha]_{mqi} \leq_{mqi} 
[(a_1\cdots a_l)^{\omega}]_{mqi}$. 
We next prove that there is a pair of recursive mqi-degrees whose 
join is precisely the maximum recursive mqi-degree 
$[(a_1\cdots a_l)^{\omega}]_{mqi}$.

\begin{proposition}\label{prop:mqi.join}
Suppose that $\Sigma = \{a_1,\ldots,a_l\}$. Then, there exist
two distinct infinite strings $\alpha$ and $\beta$ such that
$[(a_1\cdots a_l)^{\omega}]_{mqi}$ is the unique recursive
common upper bound of $[\alpha]_{mqi}$ and of $[\beta]_{mqi}$ 
under $\leq_{mqi}$.
\end{proposition}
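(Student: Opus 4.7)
My plan is to take $\alpha = a_1^{\omega}$ and $\beta = (a_2 a_3 \cdots a_l)^{\omega}$ (assuming $l \geq 2$; the case $l = 1$ is vacuous since $\Sigma^{\omega}$ contains a single string). These two strings are distinct and I will argue that together they already ``force'' any recursive common upper bound into the maximum mqi-degree.

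First I would verify that $[(a_1\cdots a_l)^{\omega}]_{mqi}$ is indeed a common upper bound. For $\alpha$, the map sending position $n$ to the position of the $n$-th occurrence of $a_1$ in $(a_1 \cdots a_l)^{\omega}$ (namely $l(n-1)+1$) is a recursive color-preserving $l$-quasi-isometry. Similarly for $\beta$, the map sending position $n$ of $\beta$, whose letter is $a_{((n-1)\bmod(l-1))+2}$, to the matching position in the $n$-th copy of $a_1\cdots a_l$ inside $(a_1\cdots a_l)^\omega$ (which lies within distance $l$ of $ln$) is a recursive color-preserving $l$-quasi-isometry.

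The main work is showing uniqueness. Suppose $\gamma$ is any recursive common upper bound, witnessed by a $C_1$-mqi-reduction from $\alpha$ to $\gamma$ and a $C_2$-mqi-reduction from $\beta$ to $\gamma$. Applying Corollary~\ref{cor:nary.seq} to $\alpha = a_1^{\omega}$ (where $a_1$ occurs in every interval of length $K=1$) gives an occurrence of $a_1$ in every interval of $C_1$ positions of $\gamma$. Applying Corollary~\ref{cor:nary.seq} to $\beta$ (where each $a_i$ with $i\ge 2$ occurs in every interval of length $K=l-1$) gives an occurrence of each $a_i$ with $i\ge 2$ in every interval of $(l-1)C_2$ positions of $\gamma$. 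Setting $D = \max\{C_1,(l-1)C_2\}$, every letter of $\Sigma$ occurs within any block of $D$ consecutive positions of $\gamma$.

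Finally, I would upgrade this to $(a_1\cdots a_l)^{\omega} \leq_{mqi} \gamma$: since $\gamma$ is recursive, I can recursively define $f:\bbbn\to\bbbn$ by mapping the $k$-th occurrence of the block $a_1\cdots a_l$ in $(a_1\cdots a_l)^\omega$ into the interval $[(k-1)D+1,kD]$ of positions of $\gamma$, sending each letter to the first matching occurrence inside that interval. This $f$ is color-preserving by construction, and one checks that consecutive positions map within distance at most $2D$ (Condition~(a)) while positions more than $2l$ apart map into strictly later $D$-blocks of $\gamma$ (Condition~(b)), so $f$ is a $\max\{2D,2l\}$-mqi-reduction. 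Combined with the fact that $(a_1\cdots a_l)^{\omega}$ is the greatest mqi-degree (so $\gamma \leq_{mqi} (a_1\cdots a_l)^{\omega}$ automatically), this yields $\gamma \equiv_{mqi} (a_1\cdots a_l)^{\omega}$, establishing uniqueness. The only subtle point is the routine verification of Conditions~(a) and (b) for $f$ in the last step; everything else is a direct application of Corollary~\ref{cor:nary.seq}.
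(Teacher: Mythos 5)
Your proof is correct and takes essentially the same route as the paper: both pick $\alpha = a_1^\omega$ and $\beta = (a_2\cdots a_l)^\omega$, use the density constraints that the two mqi-reductions impose on $\gamma$ to show every letter recurs boundedly often in $\gamma$, and then conclude $\gamma \equiv_{mqi} (a_1\cdots a_l)^\omega$ via the maximality of that degree. Your version makes explicit (via Corollary~\ref{cor:nary.seq} and a concrete block-wise construction of the reduction into $\gamma$) what the paper leaves as a one-line remark, but the argument is the same.
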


\begin{proof}
Let $\alpha = (a_1)^{\omega}$ and $\beta = (a_2a_3\cdots a_l)^{\omega}$.
Suppose that for some recursive string $\gamma$, $\alpha \leq_{mqi} \gamma$
via a $C$-mqi-reduction. Since $a_1$ is the only letter occurring in 
$\alpha$, Condition (a) implies that there must be at least one 
occurrence of $a_1$ in $\gamma$ every $C$ positions. Similarly, if 
$\beta \leq_{mqi} \gamma$ via a $C'$-mqi-reduction, then for each 
$a_i$ with $i \geq 2$, since $a_i$ occurs every $l-1$ positions, it must 
also occur in $\gamma$ every $C'\cdot(l-1)$ positions. 
Hence, there exists a constant $C''$ such that every substring of 
$\gamma$ of length $C''$ contains at least one occurrence of $a_i$ for 
every $i \in \{1,\ldots,l\}$, and therefore $(a_1\cdots a_l)^{\omega}
\leq_{mqi} \gamma$. Since $\gamma \leq_{mqi} (a_1\cdots a_l)^{\omega}$
follows from the proof of \cite[Proposition II.1]{Khoussainov17},
one has $\gamma \in [(a_1\cdots a_l)^{\omega}]_{mqi}$, as required.      
\end{proof}

Khoussainov and Takisaka \cite{Khoussainov17} showed that the partial order 
$\Sigma^{\omega}_{qi}$ is not dense. In particular, given any 
distinct $a_i,a_j \in \Sigma$, there is no 
element $[\beta]_{qi}$ that is strictly between the minimal element
$[(a_j)^{\omega}]_{qi}$ and the ``atom'' $[a_i(a_j)^{\omega}]_{qi}$
\cite[Proposition II.1]{Khoussainov17}. 
The next theorem shows similarly that the partial order
$\Sigma^{\omega}_{mqi}$ is non-dense with respect to pairs
of mqi-degrees.
Furthermore, it implies that $\Sigma^{\omega}_{mqi}$ is neither a join-semilattice nor a meet-semilattice.

\begin{restatable}{theorem}{mqinondense}\label{thm:mqi.non.dense}
There exist two pairs $(\alpha,\beta)$ and $(\gamma,\delta)$ of 
recursive strings such that both $\alpha$ and $\beta$ are
mqi-reducible to $\gamma$ as well as mqi-reducible to $\delta$,
but there is no string $\xi$ such that
$\alpha \leq_{mqi} \xi, \beta \leq_{mqi} \xi, \xi \leq_{mqi} \gamma$
and $\xi \leq_{mqi} \delta$. 
\end{restatable}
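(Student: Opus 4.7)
The plan is to exhibit four specific recursive strings $\alpha, \beta, \gamma, \delta$ witnessing non-density directly. Intuitively, the construction must exploit the failure of $\leq_{mqi}$ to form an upper semilattice on its degrees: the pair $\{\alpha,\beta\}$ will be arranged so that, although both strings reduce to~$\gamma$ and both reduce to~$\delta$, no single string bounds them from above and simultaneously sits below both of $\gamma,\delta$.

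First I would construct $\gamma$ and $\delta$ to be $\leq_{mqi}$-incomparable in a controlled way. Using a three- or four-letter alphabet, a natural template is to let $\gamma$ have one key letter $\sigma_1$ occurring with bounded gap while another key letter $\sigma_2$ is scheduled sparsely (say at positions $n^2$), and to let $\delta$ swap the sparseness roles of $\sigma_1$ and $\sigma_2$. Incomparability follows by applying Lemma~\ref{lem:collision} to a hypothetical reduction in either direction, since the preimage-counting bound it yields is incompatible with the asymptotic density mismatch in whichever of $\sigma_1,\sigma_2$ is dense on the source side and sparse on the target side.

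Next I would define $\alpha$ and $\beta$ so that each is sparse enough in the key letters to admit an explicit recursive mqi-reduction into both $\gamma$ and $\delta$, which would be verified by writing down the reductions and checking Conditions~(a) and~(b) of Definition~\ref{defn:C.reduction}. Crucially, $\alpha$ and $\beta$ must still carry enough structural content that any common mqi-upper bound $\xi$ of $\alpha$ and $\beta$ is forced to exhibit a specific joint pattern involving both $\sigma_1$ and $\sigma_2$. Concretely, Lemma~\ref{lem:collision} applied to the reductions $\alpha\to\xi$ and $\beta\to\xi$ yields lower bounds on the counts of $\sigma_1$ and $\sigma_2$ in prefixes of~$\xi$; applied to $\xi\to\gamma$ and $\xi\to\delta$ (and combined with Corollary~\ref{cor:nary.seq} on gap structure) it yields upper bounds; and the four strings are to be calibrated so that these upper and lower bounds on~$\xi$ are mutually incompatible for large~$N$.

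The heart of the proof is the contradiction-by-interpolation step, and the main obstacle is that this calibration must be simultaneous: all four reductions $\alpha,\beta\leq_{mqi}\gamma,\delta$ must in fact hold, while the combined asymptotic windows on any hypothetical $\xi$ must be empty. Naive interleavings of~$\alpha$ and~$\beta$ usually produce an obvious interpolant that is essentially a concrete representative of a minimal common upper bound of $\alpha$ and $\beta$, so the construction must break every such interleaving by arranging the sparseness schedules in $\gamma$ and $\delta$ so that they conflict not merely with the individual letter-densities of $\alpha$ or $\beta$, but with the joint structural demands that any common upper bound of $\alpha,\beta$ necessarily inherits from both strings at once.
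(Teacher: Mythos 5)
Your high-level strategy — squeeze a hypothetical $\xi$ from below via $\alpha,\beta$ and from above via $\gamma,\delta$ until the constraints conflict — is the right shape, but the invariant you propose to use to create the conflict (asymptotic letter frequency, extracted via Lemma~\ref{lem:collision} and Corollary~\ref{cor:nary.seq}) cannot produce the required contradiction, and the ``main obstacle'' you yourself identify is in fact insurmountable inside a pure density argument.

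Concretely: you want some letter to be dense in $\alpha$ so that $\alpha\leq_{mqi}\xi$ forces it dense in $\xi$, while $\xi\leq_{mqi}\delta$ forces it sparse in $\xi$ because it is sparse in $\delta$. But those same tools apply directly to the composite $\alpha\leq_{mqi}\delta$, which you also need to hold, and Corollary~\ref{cor:nary.seq} then says that a letter occurring with bounded gaps in $\alpha$ must occur with bounded gaps in $\delta$. So the very density asymmetry you rely on to kill $\xi$ already kills one of the four required reductions. Quasi-isometry is robust under bounded stretching, so Lemma~\ref{lem:collision} and Corollary~\ref{cor:nary.seq} only control counts and gaps up to a multiplicative constant; any interleaving of $\alpha$ and $\beta$ at a fine scale has letter densities that are convex combinations of those of $\alpha$ and $\beta$, and such an interleaving then trivially sits between the pair and any $\gamma,\delta$ that both $\alpha$ and $\beta$ reduce to, as far as density can see. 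Frequency is the wrong invariant.

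The paper's proof takes $\alpha$ and $\beta$ with \emph{identical} letter statistics, distinguished only by the \emph{order} of two features inside each block: $\sigma_i=(01)^{2^{2^i}}0^i1^i$ versus $\tau_i=(01)^{2^{2^i}}1^i0^i$, with $\gamma,\delta$ ending blocks in $0^i$ and $1^i$ respectively. The reductions $\xi\leq_{mqi}\gamma$ and $\xi\leq_{mqi}\delta$ are used not to bound densities but to force that, between consecutive synchronizing markers in $\xi$, there is essentially a unique long run of $0$'s and a unique long run of $1$'s (Claims~\ref{clm:substring.zeroes.c3} and~\ref{clm:no.interm.long.string}). Then $\alpha\leq_{mqi}\xi$ forces the $0$-run to precede the $1$-run (because $\alpha$ has $0^n1^n$), while $\beta\leq_{mqi}\xi$ forces the opposite order (because $\beta$ has $1^n0^n$), using Condition~(b) to rule out order reversal — a contradiction. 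This order-based invariant is invisible to counting, is preserved under mqi-reductions in the way needed, and coexists peacefully with all four required reductions because it makes no asymmetric demands on letter frequencies. To repair your proposal you would need to replace the density calibration by an order-of-features argument of this kind, together with the accompanying lemmas that localize where images must land (analogues of Claim~\ref{clm:subinterval.map}).
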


\begin{proof}
Let $\Sigma = \{0,1\}$. Define the strings
\begin{align*}
\alpha &= \sigma_1\sigma_2\ldots,~~\text{where $\sigma_i = (01)^{2^{2^i}}0^i1^i$}\,; \\
\beta &= \tau_1\tau_2\ldots,~~\text{where $\tau_i = (01)^{2^{2^i}}1^i0^i$}\,; \\
\gamma &= \mu_1\mu_2\ldots,~~\text{where $\mu_i = (01)^{2^{2^i}}0^i$}\,; \\
\delta &= \nu_1\nu_2\ldots,~~\text{where $\nu_i = (01)^{2^{2^i}}1^i$}\,. 
\end{align*}
We first show that $\alpha \leq_{mqi} \gamma$.
For each $i \in \bbbn$, define the following intervals of positions.
\begin{align*}
K_i &= [k_i, k_i+2^{2^i+1}-1] \text{ is the interval of positions of
the substring $(01)^{2^{2^i}}$ of $\sigma_i$ in $\alpha$.} \\
R_i &= [r_i, r_i+2i-1] \text{ is the interval of positions of
the substring $0^i1^i$ of $\sigma_i$ in $\alpha$.} \\
L_i &= [l_i, l_i+2^{2^i+1}-1] \text{ is the interval of positions of
the substring $(01)^{2^{2^i}}$ of $\mu_i$ in $\gamma$.} \\
L'_i &= [l_i+2^{2^i+1}, l_i+2^{2^i+1}+i-1] \text{ is the interval of
positions of the substring $0^i$ of $\mu_i$ in $\gamma$.}
\end{align*}
Define an mqi-reduction $g$ from $\alpha$ to $\gamma$ as follows.
For $i \in \bbbn$,
\begin{align*}
g(k_i+4w+2u+x) &= l_i+2(i-1)+2w+x\,,~~0 \leq w \leq i-2, u, x \in \{0,1\}\,; \\
g(k_i+m) &= l_i+m\,,~~4i-4\leq m \leq 2^{2^i+1}-1\,; \\
g(r_i+m) &= l_i+2^{2^i+1}+m\,,~~0 \leq m \leq i-1\,; \\ 
g(r_i+i+m) &= l_{i+1}+2m+1\,,~~0 \leq m \leq i-1. 
\end{align*}
In other words, $g$ maps
the positions of the substring $(01)^{2^{2^i}}$ of $\sigma_i$ in $\alpha$ to
the last $2^{2^i+1}-2(i-1)$ positions of
the substring $(01)^{2^{2^i}}$ of $\mu_i$ in $\gamma$.
The mapping is as follows.
Each of the $i$-th to the $2(i-1)$-th pairs $01$ of $\mu_i$ is
an image of two consecutive pairs $01$ of $\sigma_i$.
Then, the last $2^{2^i+1}-4(i-1)$ positions of
the substring $(01)^{2^{2^i}}$ of $\mu_i$ in $\gamma$
are mapped from the last $2^{2^i+1}-4(i-1)$ positions of
the substring $(01)^{2^{2^i}}$ of $\sigma_i$ in $\alpha$
in a one-to-one fashion.
Furthermore, $g$ maps the positions of the substring $0^i$ of $\sigma_i$ in $\alpha$
to the positions of the substring $0^i$ of $\mu_i$ in $\gamma$ in a one-to-one fashion.
Then, the positions of the substring $1^i$ of $\sigma_i$ in $\alpha$ is mapped to
the first $i$ positions of $1$ in $\mu_{i+1}$ in $\gamma$,
ending at the $2i$-th position of
the substring $(01)^{2^{2^{i+1}}}$ of $\mu_{i+1}$ in $\gamma$.
Thus, $g$ is a $4$-mqi-reduction from $\alpha$ to $\gamma$.

A similar mqi-reduction can be
constructed from $\beta$ to $\gamma$. In this case,
the mqi-reduction maps the interval $K_i$ to the prefix of
the interval $L_i$ where the last $2i$ positions are cut off
and each of the last $i$ pairs of positions of the prefix is the image of
two consecutive pairs of positions of $K_i$.
By symmetrical constructions, one obtains mqi-reductions
from $\alpha$ to $\delta$ as well as
from $\beta$ to $\delta$.

Now assume, for the sake of contradiction, that there is a string $\xi$
and there are mqi-reductions 
$f_1$ from $\alpha$ to $\xi$, $f_2$ from $\beta$ to $\xi$,
$f_3$ from $\xi$ to $\gamma$ and $f_4$ from $\xi$ to $\delta$
with constants $C_1,C_2,C_3$ and $C_4$
respectively. Set $C = \max\{C_1,C_2,C_3,C_4\}$ and fix
some $n > 2C^7+1$. 

For $i \in \bbbn$,
let $K'_i = [k_i + C^2+2, k_i + 2^{2^{i}+1}-3-C^2]$ be the interval 
obtained from $K_i$ by removing the first and last $C^2+2$ positions.
We make the following observation.

\begin{claim}\label{clm:subinterval.map}
For all positions $m \in K'_n$, 
for $i \in \{1,2\}$ and 
$j \in \{3,4\}$, $f_j(f_i(m)) \in L_n$.
\end{claim}

\begin{claimproof}
To simplify the subsequent argument,
we assume, without loss of generality, that at least one
position of $\gamma$ lies in the intersection of 
$\bigcup_{i<n} L_i \cup L'_{i}$ and the range of $f_3\circ f_1$.  
We first note that the map $f_3 \circ f_1$ is a $C^2$-mqi-reduction
from $\alpha$ to $\gamma$.
Since the length of the interval $[1,l_n-1]$ is 
$\sum_{i=1}^{n-1} (i + 2^{2^{i}+1}) \leq (n-1)\cdot(n-1 + 2^{2^{n-1}+1})
\leq 3n\cdot2^{2^{n-1}}$, there are at most $3C^2n\cdot2^{2^{n-1}}$
positions of $K_n$ in $\alpha$ that are mapped into the interval 
$[1,l_n-1]$ of $\gamma$.
Since $|K_n| = 2^{2^n+1} > 3C^2n\cdot2^{2^{n-1}}$ and there
is, by assumption, at least one point in the range of $f_3 \circ f_1$
that lies in $\bigcup_{i<n} L_i \cup L'_{i}$, 
it follows from the fact that there cannot be gaps larger than $C^2$ 
in the range of $f_3\circ f_1$ that at least one position of $K_n$, say 
$m_0$, must be mapped under $f_3 \circ f_1$ into the interval $L_n$ of 
$\gamma$. 

Thus, $f_3\circ f_1$ cannot map any position of $K_n$ into $L_{n-1}$. 
For, if there were a least such position $m_1 \in K_n$ and $m_1 < m_0$, then by Condition (a) and using the fact that $n-1 > 2C^2$, $f_3\circ f_1$ must map 
$m_1+1$ into either $L_{n-1}$ or to one of the first $C^2$ positions of 
$L'_{n-1}$. In the former case, $f_3\circ f_1$ must map
$m_1+2$ into either $L_{n-1}$ or to one of the first $C^2$ positions
of $L'_{n-1}$ and the same argument can be iterated. In the 
latter case, the letters of $\alpha$ at positions $m_1+1$ and $m_1+2$ must be 
$0$ and $1$ respectively, which implies that $f_3\circ f_1$ must
map position $m_1+2$ into $L_{n-1}$, and the same argument can
again be iterated. Iterating the argument, it would then follow that
$f_3\circ f_1$ maps $m_0$ into $L_{n-1}$ or to one of the first
$C^2$ positions of $L'_{n-1}$, a contradiction. A similar argument
applies in the case that $m_1 > m_0$. 

Furthermore, if $f_3\circ f_1$ maps some position $m$ of $K_n$ into 
$L'_{n-1}$, then $m$ must be contained in the first $C^2+2$ positions 
of $K_n$. For, suppose that $m$ occurs after the first $C^2+2$ positions
of $K_n$, then each of the first two positions of $K_n$ is at least 
$C^2+1$ positions away from $m$. So, by Condition (b), $f_3\circ f_1$
must map the first two positions of $K_n$ to some position of 
$\gamma$ before $(f_3\circ f_1)(m)$, which lies in $L'_{n-1}$; but
this is impossible since the letter in the second position of $K_n$ is $1$ 
and $L'_{n-1}$ contains only $0$'s and, as was shown 
earlier, no position of $K_n$ is mapped into $L_{n-1}$. Therefore,
$f_3\circ f_1$ maps at most $C^2+2$ positions of 
$K_n$ in $\alpha$ into the interval $L'_{n-1}$ of $\gamma$, and these
positions must occur within the first $C^2+2$ positions of $K_n$. 
Similarly, $f_3\circ f_1$ maps at most $C^2+2$ positions of $K_n$
in $\alpha$ into the interval $L'_n$ of $\gamma$, and these positions
must occur within the last $C^2+2$ positions of $K_n$. 
Summing up, for each position $m$ in the interval 
$K'_n = [k_n + C^2+2, k_n + 2^{2^{n}+1}-3-C^2]$, $f_3(f_1(m)) \in L_n$.
Similar arguments show that $f_j(f_i(m)) \in L_n$ for 
$(i,j) \in \{(1,4),(2,3),(2,4)\}$.
\end{claimproof}

Now define the sets $H_{i} = f_1(K'_i) \cup f_2(K'_i)$ for $i \in \bbbn$. 
We show that the sets $H_n$ and $H_{n+1}$ are non-overlapping by
proving $\max(H_n) < \min(H_{n+1})$.
By Claim \ref{clm:subinterval.map} above, for all $m \in H_n$ and 
$j \in \{3, 4\}$ we have $f_j(m) \in L_n$.
Then, we have $f(\min(H_{n + 1})) - f(\max(H_n)) \geq 
\min(L_{n + 1}) - \max(L_n) = n + 1 > 2C^7 + 2 > C^2$.
So by the \hyperref[lem:small.crossover]{Small Cross-Over Lemma},
$\min(H_{n + 1}) > \max(H_n)$.

Consider the interval $[\min(H_n),\max(H_n)]$ in the domain of 
$\xi$. By Claim \ref{clm:subinterval.map}, $f_3$ (resp.~$f_4$) maps each element of 
$H_n$ into $L_n$. Fix any other position $z$ in the interval.
Then, $f_3$ cannot map $z$ into $L'_n$, which is the set of positions
in $\gamma$ of the string $0^n$. To see this, we note that if
$\ell$ and $\ell+1$ are the two largest values of $K_n$, then 
$\ell$ is at least $C^2+1$ more than the value $x$ such that 
$f_i(x) = \max(H_n)$ for some $i \in \{1,2\}$, and so by 
Condition (b), $z + C < \max(H_n)+C < f_{k}(\ell+1)$
for $k \in \{1,2\}$. Thus, $f_3(z) < f_3(f_{k}(\ell+1))$ for
$k \in \{1,2\}$.
Furthermore, by applying Condition (a) repeatedly to $f_3$ and then to $f_k$,
we have $d(f_3(f_k(\ell + 1)), f_3(f_k(\max(K'_n)))) 
\leq C \cdot d(f_k(\ell + 1), f_k(\max(K'_n))) \leq C^2 \cdot (C^2 + 2) = C^4 + 2C^2$.
Since $f_3(f_k(\max(K'_n))) \in L_n$ and we fixed $n > 2C^7+1$,
then $f_3(f_{k}(\ell+1)) 
\notin L_{n+1}$. Furthermore, the letter at position 
$f_3(f_k(\ell+1))$ of $\gamma$ is $1$. Thus, $f_3(z)$
cannot lie in $L'_n$ as there is no occurrence of $1$ 
in $L'_n$. A similar argument, using position $\ell$ rather than
position $\ell+1$, shows that $f_4(z)$ cannot lie in $L'_n$.
One can also prove similarly that none of the positions in
the interval $[\min(H_{n+1}),\max(H_{n+1})]$ is mapped by
$f_3$ or $f_4$ into the interval $L'_n$. 

Next, we consider the positions of $\xi$ between $\max(H_n)$
and $\min(H_{n+1})$.
Since none of the positions of $\xi$ in the union 
$[\min(H_{n}),\max(H_n)] \cup [\min(H_{n+1}),\max(H_{n+1})]$
is mapped by $f_3$ into $L'_{n}$ and $L'_{n}$ is an
interval of length $n > 2C^7$, Lemma \ref{lem:image.point.density}
implies that there are at least $\lfloor\frac{n}{C_3}\rfloor$
positions of $\xi$ between $H_n$ and $H_{n+1}$ which are
mapped into $L'_n$.

\begin{claim}\label{clm:substring.zeroes.c3}
The string $\xi$ contains a substring of $0$'s (resp. $1$'s) of length
$\Omega(C^4)$ between $H_n$ and $H_{n+1}$ such that all
positions of this substring are mapped by $f_3$ (resp. $f_4$) into $L'_n$.
\end{claim}

\begin{claimproof}
Let $m_1,\ldots,m_{\ell}$ be all the positions of $L'_{n}$ in the  
range of $f_3$, where $m_1 < m_2 < \ldots < m_{\ell}$.
Since $L'_{n}$ is an interval of length $n$,
Lemma \ref{lem:image.point.density} implies that
$\ell \geq \lfloor\frac{n}{C_3}\rfloor$.
Let $P = f_3^{-1}(L'_n) \sm f_3^{-1}(\{m_i: 1 \leq i \leq C_3\}
\cup \{m_i: \ell-C_3+1 \leq i \leq \ell\})$ be the set of positions
of $\xi$ which are mapped into $L'_n$ but not to any of the
first $C_3$ or the last $C_3$ positions of $L'_n \cap \text{range}(f_3)$.
By Lemma \ref{lem:collision}, 
\[
\begin{aligned}
|f_3^{-1}(\{m_i: 1 \leq i \leq C_3\}
\cup \{m_i: \ell-C_3+1 \leq i \leq \ell\})| \leq
2C^2\,,
\end{aligned}
\] 
and thus 
\[
\begin{aligned}
|P| &\geq \ell-2C^2 \geq \left\lfloor\frac{n}{C_3}\right\rfloor - 2C^2 
= \Omega(C^6),
\end{aligned}
\]   
where we have used the fact that $n > 2C^7$.
The set $P$ is split into at most $2C^2$ groups, each included in an 
interval not containing any position in 
$f_3^{-1}(\{m_i: 1 \leq i \leq C_3\}
\cup \{m_i: \ell-C_3+1 \leq i \leq \ell\})$, and so by the pigeonhole 
principle there is an interval between $H_n$ and $H_{n+1}$
containing at least $\frac{\Omega(C^6)}{2C^2} = \Omega(C^4)$ positions
that are mapped by $f_3$ into $L'_n$, but no position in this
interval is mapped to any of the first $C_3$ or the last $C_3$ positions 
of $L'_n$. Let $m'$ (resp.~$m''$) be the minimum (resp.~maximum) position 
in this interval which is mapped to a position in $L'_n$. If there
were a least position $m''' \in [m',m'']$ such that $f_3(m''') \notin L'_n$, 
then by the choice of $m'$ and $m''$, $f_3(m''') \in L_n \cup L_{n+1}$,
but this is impossible as it would imply that 
$d(f_3(m'''),f_3(m'''-1)) > C_3$, contradicting Condition (a). 
Thus, $[m',m'']$ is an interval between $H_n$ and $H_{n+1}$
of length $\Omega(C^4)$ such that $f_3([m',m'']) \subseteq L'_n$.
An analogous argument, replacing $f_3$ by $f_4$ (thereby considering
the mapping from $\xi$ to $\delta$), shows that $\xi$
contains a substring of $1$'s of length $\Omega(C^4)$
between $H_n$ and $H_{n+1}$ such that $f_4$ maps all positions
of this substring into $L'_n$.
\end{claimproof}

It is shown next that between $H_n$ and $H_{n+1}$, there cannot
exist two $\Omega(C^4)$-long substrings of $0$'s (resp.~$1$'s) 
such that an $\Omega(C^4)$-long substring of $1$'s (resp.~\linebreak[2]$0$'s) 
lies between them.

\begin{claim}\label{clm:no.interm.long.string}
There cannot exist between $H_n$ and $H_{n+1}$ two substrings 
$\delta_1 \in \{0\}^*$ and $\delta_2 \in \{0\}^*$ of $\xi$ with 
$|\delta_1| = \Omega(C^4)$ and $|\delta_2| = \Omega(C^4)$ such that
some $\delta_3 \in \{1\}^*$ with $|\delta_3| = \Omega(C^4)$
is a substring of $\xi$ between $\delta_1$ and $\delta_2$.
The same statement holds when $\{1\}^*$ is interchanged with 
$\{0\}^*$.
\end{claim}

\begin{claimproof}
Recall that for $i \in \bbbn$, 
$[r_i, r_i+2i-1]$ is the interval of positions of $\alpha$ 
(resp.~$\beta$) occupied by the substring $0^i1^i$ 
(resp.~$1^i0^i$) of $\sigma_i$ (resp.~$\tau_i$); denote
this interval by $R_i$.
Let $I_n$ denote the interval of positions of $\xi$ between (exclusive)
$H_n$ and $H_{n+1}$. We give a proof for the case where 
$\delta_1 \in \{0\}^*$, $\delta_2 \in \{0\}^*$ and 
$\delta_3 \in \{1\}^*$.
Since $\text{range}(f_1)$ cannot contain
gaps of size more than $C_1$, there are $\frac{\Omega(C^4)}{C}
= \Omega(C^3)$ positions of $\delta_1$ (resp.~$\delta_2,\delta_3$) 
that belong to $\text{range}(f_1)$.
We observe two facts: first, no position of $\alpha$ before $K_n$ or 
after $K_{n+1}$ is mapped by $f_1$ into $I_n$; second, $f_1$ maps at 
most $\bigO(C^2)$ positions in 
$(K_n \sm K'_n) \cup (K_{n+1} \sm K'_{n+1})$ into $I_n$. These two 
facts imply that $f_1$ maps $\Omega(C^3)$ positions of $R_n$ into the 
interval occupied by $\delta_1$ (resp.~$\delta_2,\delta_3$). 
But then $f_1$ would have to map two consecutive positions of
$R_n$ occupied by $0$'s to positions in $\xi$ that are at least
$|\delta_3| = \Omega(C^4)$ positions apart, contradicting Condition
(a). Hence, no such substrings $\delta_1,\delta_2$ and $\delta_3$ 
can exist.
\end{claimproof}

Based on Claims \ref{clm:substring.zeroes.c3} and \ref{clm:no.interm.long.string}, there
are exactly two maximal intervals $J_1$ and $J_2$, each
of length $\Omega(C^4)$, such that the substrings of $\xi$ 
occupied by $J_1$ and $J_2$ belong to $\{0\}^*$ and 
$\{1\}^*$ respectively. Then, $f_1$ maps $\Omega(C^3)$
positions of $[r_n, r_n+n-1]$ into $J_1$ and 
$\Omega(C^3)$ positions of $[r_n+n, r_n+2n-1]$ into
$J_2$; further, there are two positions that are
$\Omega(C^3)$ positions apart, one in
$[r_n, r_n+n-1]$ and the other in $[r_n+n, r_n+2n-1]$,
such that $f_1$ maps the first position into $J_1$
and the second position into $J_2$. This implies that
$J_1$ must precede $J_2$, for otherwise Condition (b)
would be violated. Arguing similarly with $f_2$ in place
of $f_1$ (that is, the mapping from $\beta$ to $\xi$),
it follows that $J_2$ must precede $J_1$, a contradiction.   
We conclude that the string $\xi$ cannot exist.  
\end{proof}

Example \ref{exmp:reduction.separations} established separations
between various notions of recursive quasi-reducibility: pqi,
1qi and mqi-reducibilities. It remains to separate general 
quasi-isometry from its recursive counterpart.

\begin{theorem}
\label{thm:quasi.isom.non.recursive}
There exist two recursive strings $\alpha$ and $\beta$ such
that $\alpha \leq_{qi} \beta$ but $\alpha \not\leq_{mqi} \beta$.
\end{theorem}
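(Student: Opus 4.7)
The plan is a stagewise recursive construction of $\alpha$ and $\beta$ via a priority-style finite-injury argument. Enumerate all requirements $R_{e,C}$: ``$\varphi_e$ is not a $C$-mqi-reduction from $\alpha$ to $\beta$'' over $(e,C) \in \bbbn^2$, and process one requirement per stage. Build $\alpha = \sigma_1 \sigma_2 \cdots$ and $\beta = \tau_1 \tau_2 \cdots$ as concatenations of blocks over a small finite alphabet (say $\{0,1\}$ or $\{0,1,2\}$), with block lengths growing quickly relative to the constants $C_s$ being diagonalized against at stage $s$. The key design feature is that each $\tau_s$ contains two or more disjoint ``clean slots'', each a copy of the pattern of $\sigma_s$ accommodating $\sigma_s$ as a $C_s$-quasi-isometric embedding, separated by padding of size polynomial in $C_s$.

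The diagonalization step handling $R_{e,C}$ at stage $s$ proceeds as follows. Append a block $\sigma_s$ with a fixed recognisable pattern to $\alpha$, and wait (by running $\varphi_e$ in parallel with the construction) for $\varphi_e$ to converge on all positions of the current prefix of $\alpha$, including those in $\sigma_s$, consistently with being a $C$-mqi-reduction so far. If $\varphi_e$ never converges or already violates Conditions (a), (b) of Definition \ref{defn:C.reduction}, the requirement is trivially met. Otherwise, by Lemma \ref{lem:collision} and Lemma \ref{lem:small.crossover}, $\varphi_e$'s image on the new positions is confined to a bounded window inside the still-undetermined portion of $\beta$; inside this window place a ``trap'' letter whose colour conflicts with the corresponding letter in $\sigma_s$, forcing $\varphi_e$ to fail colour-preservation. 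Crucially, this trap spoils at most one of the clean-slot candidates in $\tau_s$; the remainder of $\tau_s$ is filled so as to preserve the other clean slots untouched.

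For the positive direction, define the quasi-isometric reduction $f : \alpha \to \beta$ stagewise by mapping $\sigma_s$ into an untouched clean slot of $\tau_s$. Because the ratio $|\tau_s|/|\sigma_s|$ is bounded by a universal constant across stages, the resulting $f$ is a colour-preserving $C^*$-quasi-isometry for some uniform $C^*$, verifying $\alpha \leq_{qi} \beta$. This $f$ is not recursive: any recursive candidate is some $\varphi_e$, which is defeated by the requirement $R_{e,C}$ for any sufficiently large $C$. Consistent with \cite{Khoussainov22}, $f$ is computable in $\emptyset'$.

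The main obstacle is calibrating block sizes, padding, and slot placement so that the three goals are simultaneously met: (i)~each trap destroys the current $\varphi_e$ without contaminating all clean slots in its own block, (ii)~the chosen clean slots across stages concatenate into a globally $C^*$-quasi-isometric map with a uniform constant, and (iii)~the priority management ensures no earlier requirement is reopened by later activity, which holds automatically since each stage acts entirely inside its own fresh block. The delicate part is choosing $|\sigma_s|$ to grow fast enough (for instance, polynomially in $C_s$ and the injury-count for requirement $s$) that the bounded image window guaranteed by Lemmas \ref{lem:collision} and \ref{lem:small.crossover} lies strictly within one candidate slot, leaving at least one other slot clean; the verification that these parameters can indeed be chosen uniformly is the crux of the argument.
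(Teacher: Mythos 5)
Your proposal is a priority-style diagonalization, whereas the paper uses a Kleene-tree encoding; these are genuinely different routes, and in fact I believe the route you sketch cannot be completed as stated, for a reason that goes beyond ``calibration.''

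The crux of your plan is to append $\sigma_s$ to $\alpha$, run $\varphi_e$ until it converges on the positions of $\sigma_s$, read off the bounded image window that Lemmas~\ref{lem:collision}, \ref{lem:image.point.density} and \ref{lem:small.crossover} guarantee, and then commit a ``trap'' inside that window while writing $\tau_s$. But for $\alpha$ and $\beta$ to be \emph{recursive}, the content of $\tau_s$ must be computable from $s$ within some fixed algorithm. You therefore cannot wait for an unbounded number of steps for $\varphi_e$ to converge; you can only run it for a recursively bounded number of steps, say $h(s)$. If $\varphi_e$ is total but slow --- its running time on the positions of $\sigma_s$ exceeds $h(s)$ for every stage $s$ at which $R_{e,C}$ is assigned a follower --- then you never learn the image window in time, $\tau_s$ gets written without a trap, and $\varphi_e$ survives. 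Reassigning the requirement a fresh follower $\sigma_{s'}$ at a later stage $s'$ does not help: by the time $\varphi_e$ does converge on $\sigma_{s'}$, the construction has already committed $\beta$ far past the image window of $\sigma_{s'}$ (the quasi-isometry bounds force that window to be at a position roughly proportional to the start of $\sigma_{s'}$). And deferring the writing of $\tau_{s'}$ until $\varphi_e$ converges would destroy recursiveness of $\beta$, or --- if $\varphi_e$ diverges --- leave $\beta$ finite. The phrase ``finite injury'' does not cure this, since the obstacle is not competition between requirements but the irrevocability of the already-written prefix of $\beta$. There does not appear to be any way to rescue the stage-by-stage, ``wait-and-trap'' strategy within a recursive construction.

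The paper's proof avoids this obstacle entirely by making $\beta$'s structure \emph{static}: it encodes a fixed Kleene tree $T$ (a recursive infinite binary tree with no recursive infinite branch) into the ``selection segments'' of $\beta$, and engineers $\alpha$ with ``join,'' ``scaling,'' and ``branching'' segments so that \emph{any} $C$-quasi-isometry $f$ from $\alpha$ to $\beta$ is forced to carry a well-defined ``lead'' in each join segment; the leads, read off uniformly in $f$, stabilize into an infinite branch of $T$. Thus one constructs $\alpha$ and $\beta$ without ever consulting any $\varphi_e$, and the impossibility of a recursive $f$ follows from the impossibility of a recursive branch of $T$. This is a uniform diagonalization against all recursive candidates, performed structurally rather than stagewise, and it is what makes the recursive construction possible. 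The positive direction ($\alpha \leq_{qi} \beta$) then follows by fixing any (necessarily nonrecursive) infinite branch of $T$ and using it to steer a quasi-isometry with a uniform constant --- an analogue of your ``choose a clean slot at each stage,'' but with the slot choice driven by the branch rather than by a trap placement. If you want to salvage your proof, I would suggest replacing the ``wait for $\varphi_e$'' step with a compactness/$\Pi^0_1$-class argument of exactly this kind.
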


\begin{proof}
We begin with an overview of the construction of $\alpha$ and
$\beta$. To ensure that only non-recursive 
quasi-isometries between $\alpha$ and $\beta$ exist,
we use a tool from computability theory, which is a 
Kleene tree \cite{Kleene52}---an infinite uniformly recursive
binary tree with no infinite recursive branches (see, for example, 
\cite[\S V.5]{Odifreddi89}).
The idea of the proof is to encode a fixed Kleene tree into $\beta$,
and construct $\alpha$ such that 
for any quasi-isometry $f$ from $\alpha$ to $\beta$,
an infinite branch of the encoded Kleene tree can be computed recursively
from $f$.
Hence, $f$ cannot be recursive, as otherwise
the chosen infinite branch of the Kleene tree must be recursive,
contradicting the definition of a Kleene tree.

We now describe the construction of $\alpha$ and $\beta$
based on some fixed Kleene tree $T \subseteq \{0, 1\}^*$.
The building blocks for $\alpha$ and $\beta$ are called {\em blocks},
or more specifically, {\em $n$-blocks} for some $n \in \bbbn$.
The construction will be done in stages,
where at stage $n$, we concatenate some $n$-blocks
to the existing prefixes of $\alpha$ and $\beta$.
An {\em $n$-block} is defined to be a string of one of the following forms:
\begin{align*}
\lam_{(n,0)} &= 0^n1^n, \\
\lam_{(n,i)} &= 0^{\lfloor \frac{n+1}{2} \rfloor}1^i
0^{\lceil \frac{n+1}{2} \rceil}1^n, \text{ for } 1 \leq i \leq n-1 \text{ or} \\
\lam'_{n} &= (01)^n1^n.
\end{align*}
The strings appended to $\alpha$ and $\beta$ at stage $n$ will be called $\theta_n$ and $\zeta_n$ respectively.
Taking the limit as $n$ grows to infinity, $\alpha$ and $\beta$ 
have the following shapes:
\begin{align*}
\alpha &= \theta_1\theta_2\cdots = (\theta_n)_{n=1}^{\infty}\,, \\
\beta &= \zeta_1\zeta_2\cdots = (\zeta_n)_{n=1}^{\infty}\,
\end{align*}
where $\theta_n$ and $\zeta_n$ are made up of the same number of $n$-blocks and
$|\zeta_n| \geq |\theta_n|$.
We now define $\theta_n$ and $\zeta_n$ for $n \in \bbbn$.
Set \[\theta_1 = \zeta_1 = \lam_{(1,0)}.\]
For $n \geq 2$, each of the strings $\theta_n$ and $\zeta_n$ is composed of
three main segments: a {\em scaling} segment, a {\em branching}
segment and a {\em selection} segment. These segments are added to 
$\theta_n$ and $\zeta_n$ in the given order,
each preceded by a {\em join} segment.
Furthermore, a scaling segment is further made up of two {\em scaling parts}
joined by a join segment.
So, for $n \geq 2$, the structure of $\theta_n$ can be depicted as follows:
\[
\theta_n = \underbrace{v_{n, 1}}_{\text{Join}} 
\quad \underbrace{s_{n, 1} \;\, v_{n, 2} \;\, s_{n, 2}}_{\text{Scaling}} 
\quad \underbrace{v_{n, 3}}_{\text{Join}}
\quad \underbrace{\vphantom{v_{n, 1}} t_{n}}_{\mathclap{\text{Branching}}} 
\quad \underbrace{v_{n, 4}}_{\text{Join}}
\quad \underbrace{\vphantom{v_{n, 1}} u_{n}}_{\mathclap{\text{Selection}}}
\]
and similarly,
\[
\zeta_n = \underbrace{v'_{n, 1}}_{\text{Join}} 
\quad \underbrace{s'_{n, 1} \;\, v'_{n, 2} \;\, s'_{n, 2}}_{\text{Scaling}} 
\quad \underbrace{v'_{n, 3}}_{\text{Join}}
\quad \underbrace{\vphantom{v'_{n, 1}} t'_{n}}_{\mathclap{\text{Branching}}} 
\quad \underbrace{v'_{n, 4}}_{\text{Join}}
\quad \underbrace{\vphantom{v'_{n, 1}} u'_{n}}_{\mathclap{\text{Selection}}}.
\]
We can now define each segment of $\theta_n$ and $\zeta_n$
and briefly explain its function,
and in the process give a high level overview of the proof.

\subsubparagraph{Join segment.}
Each join segment serves as a connector between
two different segments which aren't join segments.
A join segment in $\theta_n$ or $\zeta_n$ is also called an {\em $n$-join segment}.
For $i \in \{1, 2, 3, 4\}$, the $n$-join segments $v_{n, i}$ and $v'_{n, i}$ are defined as follows
\[v_{n, i} = v'_{n, i} = (\lambda_{(n, 0)})^{3nB^n_{2i - 1}}\]
where $B^n_{2i - 1}$ is
the number of blocks in $\alpha$ before the start of $v_{n,i}$.
We ensure that the corresponding segments of $\alpha$ and
$\beta$ have the same number of blocks.
So, $B^n_{2i - 1}$ is also the number of blocks in $\beta$ before the start of $v'_{n, i}$.

Given any fixed quasi-isometric reduction $f$ from $\alpha$ to $\beta$, we define the {\em lead} $\ell$ of an $n$-join segment $v_{n, i}$ such that for all $nB^n_{2i - 1} + 1 \leq j \leq 2nB^n_{2i - 1}$, $f$ maps 
the $j$-th $\lambda_{(n, 0)}$ block of $v_{n, i}$ to
the $(j + \ell)$-th $\lambda_{(n, 0)}$ block of $v'_{(n, i)}$.
We will show later that for large enough $n$,
the lead $\ell$ is always defined and non-negative.
To explain the functions of the other segments,
we will describe how each segment affects the leads of join segments next to it.

\subsubparagraph{Selection segment.}
The selection segment plays a key role in 
the encoding of the Kleene tree $T$ into the string $\beta$.
Before we define the selection segment, we first define
\[S_n = \left\{ \sum_{m = 1}^{n - 1} b_m4^{n - 1 - m} : 
b_1 \cdots b_{n - 1} \in T \cap \{0, 1\}^{n - 1} \right\}.\]
The set $S_n$ encodes all the strings of length $n-1$ in the Kleene tree $T$,
where each element $\sum_{m = 1}^{n - 1} b_m4^{n - 1 - m} \in S_n$ 
is the number with base-$4$ representation $b_1 \cdots b_{n - 1} \in T$ possibly with leading 0's.
We can now define the {\em selection segments} $u_n$ and $u'_n$. 
Define \[u_n = \lambda_{(n, 1)} (\lambda_{(n, 0)})^{\max(S_n)}\]
and for $1 \leq i \leq \max(S_n) + 1$, let the $i$-th block of $u'_n$ be:
\begin{itemize}
\item $\lambda_{(n, 0)}$ if $i - 1 \not\in S_n$ and
\item $\lambda_{(n, 1)}$ if $i - 1 \in S_n$.
\end{itemize}
Then, the selection segment $u'_n$ of $\beta$ encodes the set $S_n$,
which in turn encodes the set of all the strings in $T$ of length $n-1$.
So, all of the selection segments of $\beta$ together
encode the fixed Kleene tree $T$.
Meanwhile, each selection segment of $\alpha$
has a single $\lambda_{(n, 1)}$ block followed by $\lambda_{(n, 0)}$ blocks.
The single $\lambda_{(n, 1)}$ block serves as a pointer
which must be mapped by $f$ to a $\lambda_{(n, 1)}$ block 
in the respective selection segment of $\beta$.
Hence, the selection segments of $\alpha$ and $\beta$ 
ensure that for sufficiently large $n$,
the lead of a quasi-isometric reduction from $\alpha$ to
$\beta$ in the $n$-join segment preceding a selection segment is a number 
in $S_n$.
Moreover, the first $(n+1)$-join segment succeeds the $n$-th selection segment and has the same lead as the previous join segment.

The other segments ensure that the number in $S_n$ is chosen 
appropriately such that
an infinite branch of the Kleene tree
can be computed from the leads of a quasi-isometric reduction 
in the join segments preceding the selection segments.
More specifically, we need to make sure that 
for large enough $n$, 
the base-4 representation $b_1 \ldots b_{n-2}$ of 
the lead of the $(n-1)$-join segment preceding a selection segment
and the base-4 representation $b'_1 \ldots b'_{n-1}$ of
the lead of the $n$-join segment preceding a selection 
segment have the same first $n-c$ digits,
where $c$ is some constant independent of $n$.
That is, $b_1 \ldots b_{n-c} = b'_1 \ldots b'_{n-c}$.

\subsubparagraph{Scaling segment.}
To achieve the objective described above, each scaling segment helps 
by making sure that the lead of the join segment following the scaling segment
is 4 times that of the previous join segment.
To do this, a scaling segment is made up of two scaling parts
joined together by a join segment,
where each scaling part
doubles the lead of the previous join segment.
Then, the {\em scaling segments} of $\theta_n$ and $\zeta_n$ can be depicted as
$s_{n, 1} v_{n, 2} s_{n, 2}$ and 
$s'_{n, 1} v'_{n, 2} s'_{n, 2}$ respectively,
where $s_{n, 1}$, $s_{n, 2}$, $s'_{n, 1}$ and $s'_{n, 2}$ are
{\em scaling parts} defined as follows:
\[s_{n, i} = s'_{n, i} = 
(\lambda_{(n, 1)})^{nB^n_{2i}}(\lambda_{(n, 2)})^{nB^n_{2i}} \ldots 
(\lambda_{(n, n - 1)})^{nB^n_{2i}} (\lambda_{(n, 0)})^{2nB^n_{2i}}\]
where $i \in \{1, 2\}$ and $B^n_{2i}$ is the number of blocks in $\alpha$ before the start of $s_{n,i}$.
The doubling of the lead follows from the properties of the $n$-blocks chosen 
to make the scaling parts, which will be proven in the later parts of the proof.

\subsubparagraph{Branching segment.}
Note that the join segment \textit{after} a branching segment
precedes a selection segment and so must have a lead which is in $S_n$.
On the other hand, the join segment \textit{before} a branching segment
succeeds a scaling segment and may not be in $S_n$.
So, the branching segment's purpose is to allow minor adjustments to the lead
so that the lead after the branching segment is in $S_n$.
Furthermore, this adjustment must be small enough so that 
for large enough $n$, 
the first $n-c$ digits of the base-4 representations $b_1 \ldots b_{n-2}$ and $b'_1 \ldots b'_{n-1}$ of 
the leads of join segments $v_{n-1, 4}$ and $v_{n, 4}$ match,
where $c$ is some constant independent of $n$.
So, we define the {\em branching segments} $t_n$ and $t'_n$ as follows:
\begin{align*}  
t_n &= (\lambda_{(n, 0)})^{2nB^n_6 + 1} \text{ and}\\
t'_n &= (\lambda_{(n, 0)})^{2nB^n_6} \lambda'_n
\end{align*}
where $B^n_6$ is the number of blocks in $\alpha$ before the start of $t_n$.
The $n$-blocks chosen to make the branching segments $t_n$ and $t'_n$
ensure that given
a $C$-quasi-isometric reduction $f$ from $\alpha$ to $\beta$ and large enough $n$,
the lead of the join segment after the branching segment
is between $\ell-C$ and $\ell+1$ inclusive, 
where $\ell$ is the lead of the previous join segment.

The above descriptions give a high-level overview of the proof
while omitting the details of how certain properties are achieved.
These details will be given in the rest of the proof.

Note also that a recursive formula for the number of blocks
in each segment of $\theta_n$ or $\zeta_n$
may be determined in terms of $n$ and $\max(S_n)$, 
although, as the present proof does not analyze time or space 
complexity issues, such a formula will not be explicitly stated. 

We will now describe each segment in detail.
In each segment, we will restate the definition of the segment
and prove properties related to the segment.
We will occasionally be informal and speak of mappings
between two sequences of blocks; it is to be understood 
that in such a situation we are really referring to mappings 
between the sequences of positions of the block sequences in
question.

\proofsubparagraph{Join segment.} 
Recall that an {\em $n$-join segment} is
a sequence of $3nB$ blocks $\lam_{(n,0)}$,
where $B$ is the number of blocks in 
the prefix of $\alpha$ (resp.~$\beta$) just before the start
of the said sequence of $\lam_{(n,0)}$ blocks. The sequence 
of the $(nB+1)$-st to the $(2nB)$-th blocks of an $n$-join segment will 
be called an {\em $n$-inner join segment}.  
An $n$-join segment and an $n$-inner join segment
will be called a join segment and an inner join
segment respectively when the choice of $n$ is clear from the 
context.

The extension $\theta_n$ is defined so that for any quasi-isometric 
reduction $f$ from $\alpha$ to $\beta$, 
if $n$ is large enough and $\theta_n,\zeta_n$ each contains at
least $K$ join segments, then $f$ maps the sequence of positions of 
the $K$-th inner join segment in $\theta_n$ into a sequence of positions
of the $K$-th join segment of $\zeta_n$ in a {\em monotonic and 
one-block-to-one-block} fashion, by which we mean that there is a constant 
$t$ such that for $1 \leq i \leq nB$, $f$ maps the $i$-th 
$\lam_{(n,0)}$ block of the inner join segment in $\theta_n$ to the 
$(i+t)$-th $\lam_{(n,0)}$ block of the join segment of 
$\zeta_n$.
Furthermore, suppose that the first $\lam_{(n,0)}$ block of the $K$-th
inner join segment in $\theta_n$ is the $k_1$-st 
block of $\theta_1\cdots\theta_n$, and that the $(t+1)$-st $\lam_{(n,0)}$ 
block of the $K$-th join segment in $\zeta_n$ is the 
$k_2$-nd block of $\zeta_1\cdots\zeta_n$. 
Then, we call the quantity $k_2-k_1$ the {\em lead}
of $f$ in the sequence of positions of the $K$-th join segment of
$\theta_n$. Thus, the lead of a quasi-isometry from $\alpha$ to
$\beta$ is defined with respect to the sequence of positions
of a given $n$-join segment when $n$ is large enough. 

The extensions $\theta_n$ and $\zeta_n$ are chosen
so that when $n$ is large enough, the lead of $f$ in each 
sequence of positions of a join segment of $\theta_n$
is nonnegative.    
Moreover, the extensions are chosen so that 
there is a constant $C'$ (depending on $f$) such that 
when $n$ is large enough,  the lead $\ell$ of $f$ in the sequence of 
positions of the last join segment of $\theta_n$ is contained 
in $S_n$ and the string in $T \cap \{0, 1\}^{n-1}$ corresponding to $\ell$ has
a common prefix of length at least $n-C'$ with
the string in $T \cap \{0, 1\}^{n-2}$ corresponding to the analogously defined
lead at the end of stage $n-1$. The idea is that by calculating 
successive values of the lead, one could then compute recursively 
in $f$ an infinite branch of the Kleene tree.  

Based on the preliminarily defined shapes of $\alpha$ 
and $\beta$, we now state a few useful properties of 
quasi-isometric reductions from $\alpha$ to $\beta$, in 
particular how they map between various types of blocks when
the block lengths are large enough. Further details of
the construction will be provided progressively.

For $n \geq 2$, define $B_n$ to be the total number of blocks in 
$\zeta_1\cdots\zeta_{n-1}$.
The first observation is that for any $C$-quasi-isometric 
reduction, when $n$ is sufficiently large and $i > nB_n$, 
the $i$-th occurrence of an $n$-block in $\alpha$ 
cannot be mapped to an $m$-block in $\beta$ with $m < n$.  

\smallskip
\begin{claim}\label{clm:main.clm1}
Let $f$ be any $C$-quasi-isometric reduction
from $\alpha$ to $\beta$.
Then, for all sufficiently large $n$ and all $i > nB_n$, no 
position of the $i$-th occurrence of an $n$-block in $\alpha$ 
is mapped by $f$ to the position of an $m$-block in $\beta$
with $m < n$. 
\end{claim}

\smallskip
\begin{claimproof}
Fix any $n > 4C+4$ and any $i > nB_n$, and let $p$ be a 
position of the $i$-th $n$-block in 
$\alpha$. There are at least $i-C-1 > nB_n-C-1$ $n$-blocks 
preceding position $p$ in $\alpha$ such that all the positions
of these $n$-blocks are more than $C$ positions away from
$p$. As $\lam_{(n,0)}$ is the shortest $n$-block and has a length
of $2n$, the total number of positions occupied by these blocks is
more than $2n(nB_n-C-1)$. By Condition (b), the
images of the positions of these $n$-blocks under $f$ precede $f(p)$. 
By Lemma \ref{lem:collision},
each position of $\beta$ has at most $C+1$ preimages,
which means that there are at least $\frac{2n(nB_n-C-1)}{C+1}$ 
positions preceding $f(p)$. In other words, 
$f(p) > \frac{2n(nB_n-C-1)}{C+1}$. Since each $k$-block with
$k < n$ has length at most $3(n-1)$ (a $\lam'_{n-1}$ block),
the number of $k$-blocks with $k < n$ that can fit
$\frac{2n(nB_n-C-1)}{C+1}$ positions must be at least
$\frac{2n(nB_n-C-1)}{3(n-1)(C+1)}$. By the choice of $n$,
\[
\begin{aligned}
\frac{2n(nB_n-C-1)}{3(n-1)(C+1)} &\geq \frac{2nB_n}{3(C+1)} - \frac{2n}{3(n-1)} \\
&> \frac{2(4C+4)B_n}{3(C+1)} - 2 \\
&\geq 2B_n - 2 \\
&\geq B_n. 
\end{aligned}
\]
Since there are exactly $B_n$ $k$-blocks with $k < n$, 
we conclude that $f(p)$ cannot occur in a $k$-block
with $k < n$.
\end{claimproof}

\smallskip
The next observation gives a localised restriction on
quasi-isometric mappings, in particular between 
$\theta_n$ and $\zeta_n$. 

\smallskip
\begin{claim}\label{clm:main.clm2}
Let $f$ be any $C$-quasi-isometric reduction
from $\alpha$ to $\beta$.
Then, for all sufficiently large $n$, no position of
a block $\lam_{(n,1)}$ is mapped by $f$ to a position
of a block $\lam_{(n,0)}$ occurring in $\beta$.
\end{claim}    

\smallskip
\begin{claimproof}
Pick any $n > 2C+2$. First, we observe that if at
least one position in the sequence $P$ of 
positions of $0^{\lfloor \frac{n+1}{2} \rfloor}$
is mapped to a position of some block $\lam_{(n,0)}$,
then $f$ maps the whole sequence $P$ into the sequence 
of the first $n$ positions of $\lam_{(n,0)}$. The reason 
is that if $f$ maps some position of $P$ to the position
of another block, then, since the $0$'s occurring in
the blocks adjacent to $\lam_{(n,0)}$ are at least
$n-1 > 2C+1$ positions away, Condition (a) would not
be satisfied. The same observation applies to the
sequence of positions of $0^{\lceil \frac{n+1}{2} \rceil}$.

Second, if the sequence of positions of 
$0^{\lfloor \frac{n+1}{2} \rfloor}$ is mapped into 
the sequence of positions of $0^n$ and the sequence of 
positions of $0^{\lceil \frac{n+1}{2} \rceil}$ is mapped
into the sequence of positions of $0$'s in the block
succeeding $\lam_{(n,0)}$, then the position of the
single $1$ in $\lam_{(n,1)}$ must be mapped 
into the sequence of positions of $n$ $1$'s in $\lam_{(n,0)}$.
But the position of this $1$ would then be at least
$\frac{n}{2} > C+1$ positions away from at least
one of the images of the $0$'s adjacent to the 
single $1$ occurring in $\lam_{(n,1)}$, contradicting
Condition (a).

Third, if the sequence of positions of 
$0^{\lfloor \frac{n+1}{2} \rfloor}$ and the sequence of 
positions of $0^{\lceil \frac{n+1}{2} \rceil}$
are both mapped into the sequence of positions of 
$0^n$ in $\lam_{(n,0)}$, then the image of at
least one position of $0^{\lceil \frac{n+1}{2} \rceil}$
that is at least $C+1$ positions after the
single $1$ in $\lam_{(n,1)}$ would precede the
image of the single $1$, contradicting Condition (b).
For a similar reason, the sequence of positions of
$0^{\lfloor \frac{n+1}{2} \rfloor}$ and the sequence of 
positions of $0^{\lceil \frac{n+1}{2} \rceil}$
cannot be both mapped into the sequence of positions of 
$0$'s in the succeeding block of $\lam_{(n,0)}$. 
\end{claimproof}

\smallskip
It is observed next that when $n$ is large enough, every
quasi-isometric mapping from an inner join segment 
of $\theta_n$ to a join segment of $\zeta_n$
is monotonic and one-block-to-one-block.
We recall that $B_n$ is the total number of blocks in 
$\zeta_1\cdots\zeta_{n-1}$.  
 
\smallskip
\begin{claim}\label{clm:main.clm3}
Let $f$ be any quasi-isometric reduction
from $\alpha$ to $\beta$. 
Then, for all sufficiently large $n$, if
$f$ maps a position of the $K$-th inner join segment 
of $\theta_n$ to a position of the $K'$-th join 
segment of $\zeta_n$, then there is a constant 
$t$ such that whenever $1 \leq i \leq nB_n$, $f$ maps the 
sequence of positions of the $i$-th $\lam_{(n,0)}$ block 
of the $K$-th inner join segment to the sequence of positions 
of the $(i+t)$-th $\lam_{(n,0)}$ block of the $K'$-th join 
segment of $\zeta_n$. 
\end{claim}
    
\begin{claimproof}
Suppose that $f$ maps a position of the $i$-th $\lam_{(n,0)}$
block of the $K$-th inner join segment of $\theta_n$ to a 
position of the $j$-th $\lam_{(n,0)}$ block of the $K'$-th 
join segment of $\zeta_n$. Using a similar argument as in
the proof of Claim \ref{clm:main.clm2}, if $n$ is large enough,
then all the positions of the $i$-th $\lam_{(n,0)}$ block
must be mapped into the sequence of positions of the
$j$-th $\lam_{(n,0)}$ block, for otherwise Condition (a) would
fail. Inductively, assume that the $(i-k')$-th 
block is mapped to the $(j-k')$-th block and 
the $(i+k')$-th $\lam_{(n,0)}$ block is mapped to the
$(j+k')$-th block whenever $0 \leq k' \leq k$.
By Condition (b), when $n$ is large enough, $f$ cannot 
map the $(i-k-1)$-st block
to the $(j-k)$-th block or any subsequent block,
and $f$ also cannot map any block after the $(i-k)$-th
block to the $(j-k-1)$-st block.
If $f$ does not map the $(i-k-1)$-st block to the
$(j-k-1)$-st block, then the $(i-k-1)$-st block must
be mapped to some sequence of positions wholly before the
$(j-k-1)$-st block. But in this case, by Condition (b), 
no block before the $(i-k-1)$-st block can be mapped
to the $(j-k-1)$-st block, and so the $(j-k-1)$-st
$\lam_{(n,0)}$ block of the $K'$-th join segment of
$\zeta_n$ would have no preimage, which, in view of 
Lemma \ref{lem:image.point.density}, is false for
large enough values of $n$. Hence, the $(i-k-1)$-st
block must be mapped to the $(j-k-1)$-st block.
A similar argument shows that the $(i+k+1)$-st
block is mapped to the $(j+k+1)$-st block (for
large enough $n$).
\end{claimproof}

\smallskip
The scaling, branching and selection segments of 
$\theta_n$ and $\zeta_n$ will now be described in
detail. Along the way, we prove further properties of 
quasi-isometries from $\alpha$ to $\beta$.

\proofsubparagraph{Scaling segment.}
Set $B'_n = B_n + 3nB_n$, that is, $B'_n$ is the number of blocks in 
$\zeta_1\cdots\zeta_{n-1}$ plus the number of blocks in the
first join segment of $\zeta_n$. 
The scaling segments of $\theta_n$ and $\zeta_n$ are composed 
of two similar parts joined by an $n$-join segment.
The first part of the scaling segment of $\theta_n$ and of $\zeta_n$ is
\[
(\lam_{(n,1)})^{nB'_n}(\lam_{(n,2)})^{nB'_n}\cdots(\lam_{(n,n-1)})^{nB'_n}
(\lam_{(n,0)})^{2nB'_n}\,.
\]
We then append a join segment to the first part of the scaling 
segment of $\theta_n$ (resp.~$\zeta_n$). Let $B''_n$ be the total 
number of blocks in the prefix of $\alpha$ built so far. The 
second part of the scaling segment of $\theta_n$ and of $\zeta_n$ is 
\[
(\lam_{(n,1)})^{nB''_n}(\lam_{(n,2)})^{nB''_n}\cdots(\lam_{(n,n-1)})^{nB''_n}
(\lam_{(n,0)})^{2nB''_n}\,.
\]
The structures of the scaling segments of $\theta_n$ and $\zeta_n$
imply that when $n$ is large enough, the lead of a quasi-isometric
reduction in the interval of positions of the first inner join segment
of $\theta_n$ is nonnegative. This property will help to control 
the value of the lead in the last inner join segment of
$\theta_n$.  

\smallskip
\begin{claim}\label{clm:main.clm4}
Let $f$ be any $C$-quasi-isometric reduction
from $\alpha$ to $\beta$. 
Then, for all sufficiently large $n$, $f$ maps
the sequence of positions of the first inner join 
segment of $\theta_n$ to the sequence of positions
of the first join segment of $\zeta_n$ in a 
monotonic and one-block-to-one-block fashion.
Further, suppose that the first $\lam_{(n,0)}$
block of the first inner join segment of $\theta_n$ 
is the $(i+1)$-st block of $\theta_1\cdots\theta_n$.
Then, there is a nonnegative constant $\ell$, called
the {\em lead} of $f$ in the sequence of positions
of the first inner join segment of $\theta_n$,
such that whenever $1 \leq k \leq nB_n$, 
$f$ maps the sequence of positions of the $(i+k)$-th block 
of $\theta_1\cdots\theta_n$ into the sequence of
positions of the $(i+k+\ell)$-th block of
$\zeta_1\cdots\zeta_n$.
\end{claim}
    
\smallskip
\begin{claimproof}
We first show that the first $\lam_{(n,0)}$ block
of the first inner join segment of $\theta_n$
is mapped to some $\lam_{(n,0)}$ block of the first
join segment of $\zeta_n$. By Claim \ref{clm:main.clm1},
$\lam_{(n,0)}$ cannot be mapped to any $k$-block
with $k < n$ for large enough $n$. By the definition of 
$B_n$, for large enough $n$, the positions of 
$\theta_1\cdots\theta_{n-1}$ are mapped to at most
$nB_n$ blocks in the first join segment of $\zeta_n$.
By Claim \ref{clm:main.clm3}, the first $nB_n$ blocks
$\lam_{(n,0)}$ of the first join segment of $\theta_n$
are mapped to at most $nB_n$ blocks in the first
join segment of $\zeta_n$. Thus, there are at
least $nB_n$ blocks $\lam_{(n,0)}$ of the first
join segment of $\zeta_n$ that have preimages 
after the first $nB_n$ blocks $\lam_{(n,0)}$
of the first join segment of $\theta_n$.
By Condition (b), the first $\lam_{(n,0)}$ block 
of the first inner join segment of $\theta_n$ must
therefore be mapped to some $\lam_{(n,0)}$ block in 
the first join segment of $\zeta_n$. 

Now we show that the first $\lam_{(n,0)}$ block
of the first inner join segment of $\theta_n$
is mapped to a $\lam_{(n,0)}$ block after the
first $nB_n$ blocks of the first join segment
of $\zeta_n$. Since, for each $i \leq n-1$, 
the number of blocks in $\zeta_i$ is equal
to the number of blocks in $\theta_i$,
this would imply that the lead of $f$ in the
first inner join segment of $\theta_n$ is
nonnegative. By Claim \ref{clm:main.clm3},
for large enough $n$,
if the first block of the first inner join
segment of $\theta_n$ is mapped to one of the 
first $nB_n$ blocks of the first join segment of
$\zeta_n$, then the sequence of positions of at 
least one $\lam_{(n,0)}$ block of the first join 
segment of $\zeta_n$ has a preimage after the first 
join segment of $\theta_n$. By Condition (b),
if $n > C^2$, then this preimage is included in the
sequence of positions of $(\lam_{(n,1)})^{nB'_n}$,
a prefix of the scaling segment. But by Claim 
\ref{clm:main.clm2}, $\lam_{(n,1)}$ cannot be mapped 
to $\lam_{(n,0)}$. Hence, the first $\lam_{(n,0)}$ block
of the first inner join segment of $\theta_n$
is mapped to a $\lam_{(n,0)}$ block after the
first $nB_n$ blocks of the first join segment
of $\zeta_n$.
\end{claimproof}

\smallskip
After the first part of the scaling segment,
when $n$ is large enough, the lead of $f$ in the 
succeeding inner join segment is double the lead 
of $f$ in the preceding inner join segment; 
after the second part, the lead is quadrupled. 
To see this, we first observe that within a scaling 
segment, when $n$ is large enough and 
$i \leq n-2$, any quasi-isometric reduction must map 
a single $\lam_{(n,i)}$ block to a single $\lam_{(n,i)}$ 
block or to a single $\lam_{(n,i+1)}$ block.

\smallskip
\begin{claim}\label{clm:main.clm5}
Let $f$ be any $C$-quasi-isometric reduction
from $\alpha$ to $\beta$. 
Then, for large enough $n$ and $i \leq n-2$, 
$f$ maps the sequence of positions of each 
$\lam_{(n,i)}$ block in a scaling segment of 
$\theta_n$ into either the sequence of positions 
of a $\lam_{(n,i)}$ block in a scaling segment 
of $\zeta_n$ or the sequence of positions of a 
$\lam_{(n,i+1)}$ block in a scaling segment of 
$\zeta_n$.
\end{claim}

\smallskip
\begin{claimproof}
According to Claim \ref{clm:main.clm4}, when $n$ is
large enough, the lead of $f$ in the preceding inner
join segment is nonnegative. In this case, none of
the blocks in the scaling segment of $\theta_n$ are
mapped to blocks before the scaling segment of 
$\zeta_n$. Furthermore, if $n > 4C+4$, then 
the positions of at most $nB'_n$ blocks in the scaling 
segment of $\zeta_n$ have preimages before the
scaling segment of $\theta_n$. By Condition (b),
only the first $nB'_n$ blocks of the scaling segment
can have preimages before the scaling segment of 
$\theta_n$. 

Now consider a $\lam_{(n,i)}$ block for some $i \leq n-2$.
When $n$ is large and $j \leq n-2$, the positions of at 
most one $\lam_{(n,i)}$ block can be mapped to a 
$\lam_{(n,j)}$ block.
To see this, we note the following subclaim.

\smallskip
\begin{subclaim}\label{subclm:lami}
Suppose that $n$ is large enough so that the 
$\lam_{(n,0)}$ block just before the first part of the 
scaling segment of $\theta_n$ is mapped into the
$t$-th block of $\zeta_1\cdots\zeta_n$, where this
$t$-th block is either the $\lam_{(n,0)}$ block just 
before the first part of the scaling segment of 
$\zeta_n$ or a $\lam_{(n,1)}$ block in the first part 
of the scaling segment of $\zeta_n$. Then, for all $s$
such that the $s$-th block of the first part of the scaling 
segment of $\theta_n$ is some $\lam_{(n,i)}$ block with 
$i \leq n-2$, this $s$-th block is mapped into the 
$(t+s)$-th block of $\zeta_1\cdots\zeta_n$.
\end{subclaim}

\smallskip
\begin{claimproof}[Proof of Subclaim \ref{subclm:lami}.]
By the choice of $B'_n$ and for large enough $n$, 
the $(t+1)$-st block of $\zeta_1\cdots\zeta_n$ is either
a $\lam_{(n,1)}$ block or a $\lam_{(n,2)}$ block.
Since a $\lam_{(n,1)}$ block cannot be mapped into a
$\lam_{(n,0)}$ block, some parts of the first 
$\lam_{(n,1)}$ block in the scaling segment must be mapped
into the $(t+1)$-st block of $\zeta_1\cdots\zeta_n$
in order to avoid causing large gaps in the range of $f$. 
However, the first $\lam_{(n,1)}$ block in the scaling 
segment cannot be mapped across two adjacent blocks of the 
shape $\lam_{(n,1)}\lam_{(n,1)}$ or
$\lam_{(n,1)}\lam_{(n,2)}$. For,
if the single $1$ were mapped into $1$ in the first 
$\lam_{(n,1)}$ block, then the succeeding
substring $0^{\lceil \frac{n+1}{2} \rceil}1^n$ would have to
be mapped into $0^{\lceil \frac{n+1}{2} \rceil}1^n$ in order
to avoid causing large gaps in the range of $f$.
For the same reason, the single $1$ cannot be mapped to 
$1^n$. Similarly, the single $1$ cannot be mapped
into $1$ in the second $\lam_{(n,1)}$ block or into
$1^2$ in the $\lam_{(n,2)}$ block (for large enough $n$).
The suffix $1^n$ of $\lam_{(n,1)}$ would then have to be 
mapped into the suffix $1^n$ of $\lam_{(n,1)}$.
It follows that the first $\lam_{(n,1)}$ block of the 
scaling segment must be wholly mapped into the 
$(t+1)$-st block of $\zeta_1\cdots\zeta_n$.

Arguing as before, the next $\lam_{(n,1)}$ or 
$\lam_{(n,2)}$ block of $\theta_n$ cannot be mapped across
two adjacent blocks of the shape $\lam_{(n,1)}\lam_{(n,1)}$,
$\lam_{(n,1)}\lam_{(n,2)}$ or $\lam_{(n,2)}\lam_{(n,2)}$.
By Condition (b), the next $\lam_{(n,1)}$ or $\lam_{(n,2)}$ 
block also cannot be entirely mapped into the $(t+1)$-st 
block of $\zeta_1\cdots\zeta_n$. Thus, the next $\lam_{(n,1)}$
or $\lam_{(n,2)}$ block must be wholly mapped into the
$(t+2)$-nd block of $\zeta_1\cdots\zeta_n$.   
Applying the preceding arguments inductively to the next
$\lam_{(n,1)}$ or $\lam_{(n,2)}$ block and then to subsequent 
$\lam_{(n,i)}$ blocks, it follows that there cannot be two 
adjacent halves of $\lam_{(n,i)}$ blocks or adjacent halves 
of a $\lam_{(n,i)}$ and a $\lam_{(n,i+1)}$ block that are 
mapped across a single $\lam_{(n,j)}$ block for each 
$j \leq n-2$. Furthermore, for $i \leq n-2$, each $s$-th 
subsequent $\lam_{(n,i)}$ block 
must be entirely mapped into the $(t+s)$-th block of 
$\zeta_1\cdots\zeta_n$, which is either a $\lam_{(n,i)}$ block 
or a $\lam_{(n,i+1)}$ block.
~\claimqedhere~(Subclaim \ref{subclm:lami}) 
\end{claimproof}

By Condition (b), $f$ cannot
map a $\lam_{(n,i)}$ block in the scaling segment
to any $\lam_{(n,j)}$ block in the same scaling segment
with $j > i+1$. The positions of a $\lam_{(n,i)}$ block
can be mapped in a one-to-one fashion to the positions
of another $\lam_{(n,i)}$ block. The positions of a 
$\lam_{(n,i)}$ block can also be mapped
one-to-one to the positions of a $\lam_{(n,i+1)}$ block,
as shown below.
\[
\begin{tikzpicture}
\node at (0,0) (lami01) {$0^{\lfloor \frac{n+1}{2} \rfloor}$};
\node at (0.8,0) (lami1i) {$1^i$};
\node at (1.5,0) (lami02) {$0^{\lceil \frac{n+1}{2} \rceil}$}; 
\node at (2.2,0) (lami1s) {$1^n$};
\node at (0,-1) (lamip101) {$0^{\lfloor \frac{n+1}{2} \rfloor}$};
\node at (0.8,-1) (lamip11i) {$1^{i+1}$};
\node at (1.7,-1) (lamip102) {$0^{\lceil \frac{n+1}{2} \rceil}$}; 
\node at (2.4,-1) (lamip11s) {$1^n$};

\draw[->,thick]
(lami01) -- (lamip101);
\draw[->,thick]
(lami1i) -- (lamip11i);
\draw[->,thick]
(lami02) -- (lamip102);
\draw[->,thick]
(lami1s) -- (lamip11s);
\end{tikzpicture}
\qedhere
\]
\end{claimproof}

\smallskip
The next claim shows that to achieve a doubling of
the lead, a single $\lam_{(n,n-1)}$ block can be mapped to
two adjacent $\lam_{(n,0)}$ blocks.   

\smallskip
\begin{claim}\label{clm:main.clm6}
Let $f$ be any $C$-quasi-isometric reduction
from $\alpha$ to $\beta$. 
Then, for large enough $n$, $f$ can map the
sequence of positions of a $\lam_{(n,n-1)}$ block 
into the sequence of positions of exactly $k$ 
blocks $\lam_{(n,0)}$ iff $k=2$. 
\end{claim}

\smallskip
\begin{claimproof}
If a $\lam_{(n,n-1)}$ block were mapped to a single 
$\lam_{(n,0)}$ block, then the positions of the 
substrings $0^{\lfloor \frac{n+1}{2} \rfloor}$
and $0^{\lceil \frac{n+1}{2} \rceil}$ must be mapped  
into the sequence of positions of $0^n$. But the 
positions of the substring $1^{n-1}$ would have to be 
mapped into the sequence of positions of $1^n$, and this 
would violate Condition (b) for large enough $n$. 
Furthermore, when $n > C+1$, a $\lam_{(n,n-1)}$ block 
cannot be mapped across more than two $\lam_{(n,0)}$ 
blocks without resulting in an interval of at least
$C+1$ positions of $\zeta_n$ having no preimage, 
contradicting Lemma \ref{lem:image.point.density}.    
A one-to-one mapping of the positions of a single 
$\lam_{(n,n-1)}$ block to the positions of two 
$\lam_{(n,0)}$ blocks is depicted in the next figure.

\[
\begin{tikzpicture}
\node at (0,0) (lamn01) {$0^{\lfloor \frac{n+1}{2} \rfloor}$};
\node at (1,0) (lamn1n) {$1^{n-1}$};
\node at (1.9,0) (lamn02) {$0^{\lceil \frac{n+1}{2} \rceil}$}; 
\node at (2.6,0) (lamn1s) {$1^n$};
\node at (0,-1) (lam010s) {$0^{n}$};
\node at (0.8,-1) (lam011s) {$1^{n}$};
\node at (1.7,-1) (lam020s) {$0^{n}$}; 
\node at (2.4,-1) (lam021s) {$1^{n}$};

\draw[->,thick]
(lamn01) -- (lam010s);
\draw[->,thick]
(lamn1n) -- (lam011s);
\draw[->,thick]
(lamn02) -- (lam020s);
\draw[->,thick]
(lamn1s) -- (lam021s);
\end{tikzpicture}
\]
Thus, when $n$ is large enough, a single $\lam_{(n,n-1)}$
block maps to exactly two $\lam_{(n,0)}$ blocks.
\end{claimproof}

\smallskip
We elaborate on why, for large enough $n$, the 
lead of $f$ is quadrupled at the end of the scaling segment. 
Suppose that $f$ has a nonnegative lead
$\ell$ at the start of the scaling segment. When $n$ is large
enough, $\ell \leq nB'_n$. By Claim \ref{clm:main.clm5},
for $1 \leq i \leq n-2$, the first $nB'_n-\ell$ blocks 
$\lam_{(n,i)}$ of the (first part of the) scaling segment of $\theta_n$ 
are mapped to the last $nB'_n-\ell$ blocks $\lam_{(n,i)}$ of the
(first part of the) scaling segment of $\zeta_n$, while the last 
$\ell$ blocks $\lam_{(n,i)}$
are mapped to the first $\ell$ blocks $\lam_{(n,i+1)}$.
Further, the first $nB'_n-\ell$ blocks $\lam_{(n,n-1)}$ are 
mapped to the last $nB'_n-\ell$ blocks $\lam_{(n,n-1)}$.
By Claim \ref{clm:main.clm6}, each of the last $\ell$ 
$\lam_{(n,n-1)}$ blocks must be mapped to two $\lam_{(n,0)}$
blocks, while the first $nB'_n-2\ell$ blocks $\lam_{(n,0)}$
of the (first part of the) scaling segment of $\theta_n$ are 
mapped in a monotonic and one-block-to-one-block fashion to the
remaining $nB'_n-2\ell$ blocks $\lam_{(n,0)}$.      
Thus, if $f$ has a lead of $\ell$ in the inner join segment 
preceding the current scaling segment, then it has a lead of 
$2\ell$ at the end of the first part of the scaling segment; 
after the second part of the scaling segment, a similar argument 
as before shows that the lead is quadrupled to $4\ell$. 
We next consider the branching segment.

\proofsubparagraph{Branching segment.} In this segment, the lead
of a $C$-quasi isometry $f$ is increased by $1$ or decreased 
by up to $C$ when $n$ is large enough. To achieve
this, suppose that, after the join segment of $\theta_n$ 
succeeding the scaling segment, there are altogether
$B$ blocks in $\zeta_1\cdots\zeta_{n-1}$ and the current
prefix of $\zeta_n$. The branching segment of $\theta_n$
is 
\[
(\lam_{(n,0)})^{2Bn+1}
\]
while the branching segment of $\zeta_n$ is
\[
(\lam_{(n,0)})^{2Bn}\lam'_n\,.
\] 

\smallskip
\begin{claim}\label{clm:main.clm7}
Let $f$ be any $C$-quasi-isometric reduction
from $\alpha$ to $\beta$. Suppose that $n$
is large enough so that the lead $\ell$
of $f$ in the sequence of positions of the 
join segment just before the branching segment
is defined and nonnegative. 
Then, for large enough $n$, the lead of $f$ in 
the sequence of positions of the join segment after
the branching segment is at least $\ell-C$ and
at most $\ell+1$. 
\end{claim} 

\smallskip
\begin{claimproof}
When $n > C$, the lead of $f$ in the sequence of 
positions of the join segment just before the branching 
segment is at most $nB$, where $B$ is the number of blocks
in the prefix of $\alpha$ (or $\beta$) built so far.
Thus, there are at most $nB$ blocks $\lam_{(n,0)}$ of the
branching segment of $\zeta_n$ that have preimages before
the start of the branching segment of $\theta_n$. 
Consequently, the first $\lam_{(n,0)}$ block of the branching 
segment of $\theta_n$ is mapped into the sequence of positions 
of the first $nB+1$ blocks $\lam_{(n,0)}$ of the branching 
segment of $\zeta_n$.

By Lemma \ref{lem:collision}, up to $C+1$
blocks $\lam_{(n,0)}$ can be mapped to a single
$\lam'_{n}$ block. Hence, the lead of $f$ from the 
previous inner join segment can be decreased by up 
to $C$ in the subsequent inner join segment.
Further, when $n$ is large enough,
a single $\lam_{(n,0)}$ block can be mapped across 
$\lam_{(n,0)}\lam'_n$. This can be done by mapping
the $0^n$ substring of the $\lam_{(n,0)}$ block in 
the branching segment of $\theta_n$ to the $0^n$ 
substring of the $\lam_{(n,0)}$ block in the branching
segment of $\zeta_n$, and then mapping $1^n$ to
$1^n(01)^n1^n$ by mapping the first third of $1$'s
to the first occurrence of $1^n$, the second third of 
$1$'s to $(01)^n$ and the last third of $1$'s to the 
second occurrence of $1^n$. The mapping is illustrated
as follows.
\[
\begin{tikzpicture}
\node at (0,0) (lamn0s) {$0^{n}$};
\node at (0.5,0) (lamn1s) {$1^{n}$};
\node at (0,-1) (lamn20s) {$0^{n}$}; 
\node at (0.5,-1) (lamn21s) {$1^n$};
\node at (1.2,-1) (lamnp1) {$(01)^{n}$};
\node at (1.9,-1) (lamnp2) {$1^{n}$};
\
\draw[->,thick]
(lamn0s) -- (lamn20s);
\draw[->,thick]
(lamn1s) -- (lamn21s);
\draw[->,thick]
(lamn1s) -- (lamnp1);
\draw[->,thick]
(lamn1s) -- (lamnp2);
\end{tikzpicture}
\]
The proof of Claim \ref{clm:main.clm3} shows that for
large enough $n$, one and only one $\lam_{(n,0)}$ block 
can be mapped to a $\lam_{(n,0)}$ block.
Thus, the remaining $\lam_{(n,0)}$ blocks in the branching
segment of $\theta_n$ are mapped in a monotonic
and one-block-to-one-block fashion to the rest of the 
$\lam_{(n,0)}$ blocks in the branching segment of 
$\zeta_n$.
\end{claimproof}

\smallskip
After appending an $n$-join segment to the branching segment,
the selection segment is the final main segment to be added.

\proofsubparagraph{Selection segment.} This segment filters out 
quasi-isometries whose lead in the previous inner join segment
is not an element of $S_n$. The selection segment for
$\theta_n$ is
\[
\lam_{(n,1)}(\lam_{(n,0)})^{\max(S_n)}
\] 
while the selection segment for $\zeta_n$ is a concatenation
\[
\lam_{(n,0)}\cdots\lam_{(n,1)}\cdots\lam_{(n,0)}\cdots
\lam_{(n,0)}\lam_{(n,1)}
\] 
of $\max(S_n)+1$ $n$-blocks such that the $i$-th block is
$\lam_{(n,0)}$ if $i-1 \notin S_n$ and is $\lam_{(n,1)}$ if
$i-1 \in S_n$. 

\smallskip
\begin{claim}\label{clm:main.clm8}
Let $f$ be any $C$-quasi-isometric reduction
from $\alpha$ to $\beta$. Suppose that $n$
is large enough so that the lead $\ell$
of $f$ in the sequence of positions of the 
join segment just before the selection segment
is defined and nonnegative. 
Then, for large enough $n$, $f$ maps the
sequence of positions of the $\lam_{(n,1)}$ 
block in the selection segment of $\theta_n$
into the sequence of positions of exactly
one of the $\lam_{(n,1)}$ blocks in the 
selection segment of $\zeta_n$. In particular,
$\ell \in S_n$.
\end{claim} 

\smallskip
\begin{claimproof}
By Claim \ref{clm:main.clm2}, for large enough $n$, 
$f$ cannot map the $\lam_{(n,1)}$ block of the 
selection segment of
$\theta_n$ to any position in the first join segment
of $\zeta_{n+1}$. Thus, $\ell \leq \max(S_n)$. By
Claim \ref{clm:main.clm2} again, $f$ must map the
$\lam_{(n,1)}$ block of the selection segment
of $\theta_n$ to some $\lam_{(n,1)}$ block in the
selection segment of $\zeta_n$, say the $c$-th block
of the selection segment of $\zeta_n$, for large 
enough $n$. By Condition (b), if $n > C$, then $f$ 
does not map any $\lam_{(n,0)}$ block of the selection 
segment of $\theta_n$ to a position before the image of the 
$\lam_{(n,1)}$ block under $f$. This implies that
$\ell = c-1 \in S_n$.
\end{claimproof}

Putting everything together, the structures of $\theta_n$ and 
$\zeta_n$ for $n \geq 2$ look as follows:
\begin{equation}
\begin{aligned}\label{eqn:theta.zeta}
\theta_n 
&= v_{n, 1} \;\, s_{n, 1} \;\, v_{n, 2} \;\, s_{n, 2} \;\, 
v_{n, 3} \;\, t_n \;\, v_{n, 4} \;\, u_n\\
&= (\lam_{(n,0)})^{3nB^n_1} 
&&\leftarrow v_{n, 1}  &&\text{ (Join) }\\
&~~~~ (\lam_{(n,1)})^{nB^n_2}(\lam_{(n,2)})^{nB^n_2}\cdots(\lam_{(n,n-1)})^{nB^n_2}(\lam_{(n,0)})^{2nB^n_2} 
&&\leftarrow s_{n, 1}  &&\text{ (Scaling) }\\
&~~~~ (\lam_{(n,0)})^{3nB^n_3} 
&&\leftarrow v_{n, 2}  &&\text{ (Join) }\\ 
&~~~~ (\lam_{(n,1)})^{nB^n_4}(\lam_{(n,2)})^{nB^n_4}\cdots(\lam_{(n,n-1)})^{nB^n_4}
(\lam_{(n,0)})^{2nB^n_4} 
&&\leftarrow s_{n, 2}  &&\text{ (Scaling) }\\ 
&~~~~ (\lam_{(n,0)})^{3nB^n_5} 
&&\leftarrow v_{n, 3}   &&\text{ (Join) }\\ 
&~~~~ (\lam_{(n,0)})^{2nB^n_6+1} 
&&\leftarrow t_n  &&\text{ (Branching) }\\ 
&~~~~ (\lam_{(n,0)})^{3nB^n_7} 
&&\leftarrow v_{n, 4}  &&\text{ (Join) }\\ 
&~~~~ \lam_{(n,1)}(\lam_{(n,0)})^{\max(S_n)} 
&&\leftarrow u_n   &&\text{ (Selection) } \\ 
\\
\zeta_n 
&= v'_{n, 1} \;\, s'_{n, 1} \;\, v'_{n, 2} \;\, s'_{n, 2} \;\, 
v'_{n, 3} \;\, t'_n \;\, v'_{n, 4} \;\, u'_n\\
&= (\lam_{(n,0)})^{3nB^n_1} 
&&\leftarrow v'_{n, 1}  &&\text{ (Join) }\\
&~~~~ (\lam_{(n,1)})^{nB^n_2}(\lam_{(n,2)})^{nB^n_2}\cdots(\lam_{(n,n-1)})^{nB^n_2}(\lam_{(n,0)})^{2nB^n_2} 
&&\leftarrow s'_{n, 1}  &&\text{ (Scaling) }\\
&~~~~ (\lam_{(n,0)})^{3nB^n_3} 
&&\leftarrow v'_{n, 2}  &&\text{ (Join) }\\ 
&~~~~ (\lam_{(n,1)})^{nB^n_4}(\lam_{(n,2)})^{nB^n_4}\cdots(\lam_{(n,n-1)})^{nB^n_4}
(\lam_{(n,0)})^{2nB^n_4} 
&&\leftarrow s'_{n, 2}  &&\text{ (Scaling) }\\ 
&~~~~ (\lam_{(n,0)})^{3nB^n_5} 
&&\leftarrow v'_{n, 3}  &&\text{ (Join) }\\ 
&~~~~ (\lam_{(n,0)})^{2nB^n_6}\lam'_n 
&&\leftarrow t'_n  &&\text{ (Branching) }\\ 
&~~~~ (\lam_{(n,0)})^{3nB^n_7} 
&&\leftarrow v'_{n, 4}  &&\text{ (Join) }\\ 
&~~~~ \lam_{(n,0)}\cdots\lam_{(n,1)}\cdots\lam_{(n,0)}\cdots
\lam_{(n,0)}\lam_{(n,1)} 
&&\leftarrow u'_n   &&\text{ (Selection) } 
\end{aligned}
\end{equation}
where $B^n_i$ is the number of blocks in $\alpha$ 
(or $\beta$) preceding the segment where the 
parameter $B^n_i$ is first used.

Suppose that $f$ is a $C$-quasi-isometry from $\alpha$
to $\beta$. As explained earlier, if $n$ is sufficiently
large, then the lead $\ell$ of $f$ in the first join segment
of $\theta_{n+1}$ is contained in $S_{n}$. The lead is then 
quadrupled after the scaling segment. Further, by Claim 
\ref{clm:main.clm7}, the lead $\ell'$ of $f$ in the final join 
segment of $\theta_{n+1}$ is at least $4\ell-C$, at most $4\ell+1$ 
and is contained in $S_{n+1}$. To establish that 
$\alpha \not\leq_{mqi} \beta$, we prove that there is a 
constant $c$ such that the coefficients of the expressions for 
$\ell$ and $\ell'$ as linear combinations of powers of $4$
agree except on the smallest $c$ powers of $4$. Thus, the
branch on the Kleene tree represented by $\ell$ can be
properly extended by a branch represented by a prefix of
$\ell'$, so by repeatedly calculating the leads of $f$ for 
increasing values of $n$, one may construct an infinite branch 
of the tree. 

\smallskip
\begin{claim}\label{clm:main.clm9}
There exists a nonnegative constant $c$ such that if $n > c$,
$\ell = \sum_{m=1}^n b_m\cdot 4^{n-m} \in S_{n+1}$
and $\ell' = \sum_{m=1}^{n+1} b'_m\cdot 4^{n+1-m} \in S_{n+2}$,
where $b_m \in \{0,1\}$ and $b'_m \in \{0,1\}$, then
for all $m \in \{1,\ldots,n-c\}$, $b_{m} = b'_{m}$. 
\end{claim}

\smallskip
\begin{claimproof}
Let $c$ be a positive constant such that
$C < 4^c$. We show that for all 
$m \in \{1,\ldots,n-c\}$, $b_m = b'_m$.
As shown earlier, $4\ell-C \leq \ell' \leq 4\ell+1$.
Substituting the expressions for $\ell$ and $\ell'$
as linear combinations of powers of $4$ into the  
inequality $4\ell-C \leq \ell'$, 
\[
\begin{aligned}
\sum_{m=1}^{n} b_m4^{n-m+1} - 4^c < \sum_{m=1}^{n} b_m4^{n-m+1} - C
\leq \sum_{m=1}^{n+1} b_m'4^{n-m+1}\,.
\end{aligned}
\] 
Dividing both sides of the inequality by $4^{c+1}$,
\[
\begin{aligned}
\sum_{m=1}^{n} b_m4^{n-m-c} - \frac{1}{4} < \sum_{m=1}^{n+1} b'_m4^{n-m-c}\,.
\end{aligned}
\] 
Splitting the sums on both sides according to whether the powers of $4$
are nonnegative or negative,
\[
\begin{aligned}
\sum_{m=1}^{n-c} b_m4^{n-m-c} + \sum_{m=n-c+1}^{n} b_m4^{n-m-c} - 
\frac{1}{4} < \sum_{m=1}^{n-c} b'_m4^{n-m-c} + \sum_{m=n-c+1}^{n+1} b'_m4^{n-m-c}\,.
\end{aligned}
\] 
Rearranging,
\[
\begin{aligned}
\sum_{m=1}^{n-c} b_m4^{n-m-c} + \sum_{m=n-c+1}^{n} b_m4^{n-m-c} - 
\frac{1}{4} - \sum_{m=n-c+1}^{n+1} b'_m4^{n-m-c} < \sum_{m=1}^{n-c} b'_m4^{n-m-c}\,.
\end{aligned}
\] 
Since 
\[
\begin{aligned}
\sum_{m=n-c+1}^{n} b_m4^{n-m-c} - \frac{1}{4} - \sum_{m=n-c+1}^{n+1} b'_m4^{n-m-c}
\geq -\frac{1}{4} - \sum_{m=1}^{\infty} 4^{-m} = -\frac{7}{12}\,,
\end{aligned}
\]
it follows that
\[
\begin{aligned}
\sum_{m=1}^{n-c} b'_m4^{n-m-c} > \sum_{m=1}^{n-c} b_m4^{n-m-c} - \frac{7}{12}\,.
\end{aligned}
\]
Taking the ceiling on both sides,
\begin{align}
\sum_{m=1}^{n-c} b'_m4^{n-m-c} &= \left\lceil \sum_{m=1}^{n-c} b'_m4^{n-m-c} \right\rceil \nonumber\\
&\geq \left\lceil \sum_{m=1}^{n-c} b_m4^{n-m-c} - \frac{7}{12} \right\rceil \nonumber\\ 
&= \sum_{m=1}^{n-c} b_m4^{n-m-c} \label{eqn:constant.sep1}\,. 
\end{align}
From the inequality $\ell' \leq 4\ell+1$, 
\[
\begin{aligned}
\sum_{m=1}^{n-c} b'_m4^{n-m+1} + \sum_{m=n-c+1}^{n+1} b'_m4^{n-m+1}
\leq \sum_{m=1}^{n} b_m4^{n-m+1} + 1\,.
\end{aligned}
\]
Dividing both sides by $4^{c+1}$ and applying the floor function,

\begin{align}
\sum_{m=1}^{n-c} b'_m4^{n-m-c} &= 
\left\lfloor\sum_{m=1}^{n-c} b'_m4^{n-m-c} + \sum_{m=n-c+1}^{n+1} b'_m4^{n-m-c}\right\rfloor \nonumber\\
&\leq \left\lfloor\sum_{m=1}^{n} b_m4^{n-m-c} + \frac{1}{4^{c+1}} \right\rfloor \nonumber\\
&= \sum_{m=1}^{n-c} b_m4^{n-m-c} \label{eqn:constant.sep2}\,.
\end{align}
Inequalities (\ref{eqn:constant.sep1}) and (\ref{eqn:constant.sep2}) together
give that for each $m \in \{1,\ldots,n-c\}$, $b'_m = b_m$. 
\end{claimproof}

\smallskip
\begin{claim}\label{clm:main.clm10}
$\alpha \not\leq_{mqi} \beta$.
\end{claim}

\smallskip
\begin{claimproof}
Suppose that $f$ is a $C$-quasi-isometry from $\alpha$
to $\beta$. We show that an infinite branch of the 
Kleene tree can be determined recursively in $f$. 
Let $c$ be the constant in the statement
of Claim \ref{clm:main.clm9}. Fix $n_0 > c$ large enough
so that whenever $n \geq n_0$, $f$ has nonnegative lead 
in the last join segment of $\theta_{n}$. 
Then, calculate the sequence $(\ell_n)_{n=n_0}^{\infty}$
of leads of $f$ in the last join segment of $\theta_n$
for $n \geq n_0$, expressing each $\ell_n$ in the shape
$\sum_{m=1}^n b^n_m4^{n-m}$, where $b^n_m \in \{0,1\}$. 
By Claim \ref{clm:main.clm9},
$b^{n}_m = b^{n+1}_m$ for all $m \in \{1,\ldots,n-c\}$,
and so by Claim \ref{clm:main.clm8},
\[
b^{n_0}_1\cdots b^{n_0}_{n_0-c} b^{n_0+1}_{n_0-c+1}
b^{n_0+2}_{n_0-c+2} \cdots b^{n_0+k}_{n_0-c+k} \cdots
\]
is an infinite branch of the Kleene tree. But since the 
tree has no infinite recursive branches, $f$ must be 
nonrecursive.
\end{claimproof}

\smallskip
To finish the proof, we observe that a quasi-isometric
reduction from $\alpha$ to $\beta$ can be obtained from
an infinite branch of the Kleene tree.

\smallskip
\begin{claim}\label{clm:main.clm11}
$\alpha \leq_{qi} \beta$.
\end{claim}

\smallskip
\begin{claimproof}
Fix an infinite branch $\mathcal{B}(1)\mathcal{B}(2)\cdots$ 
of the Kleene tree. For each $n \in \bbbn$, we describe a 
mapping from $\theta_n$ to $\zeta_n$, using the structures
depicted in Equation (\ref{eqn:theta.zeta}) as a reference. 
This will give a quasi-isometric reduction from $\alpha$ to 
$\beta$.
The mappings to be defined are strictly increasing.

First,
since $\theta_1 = \zeta_1 = \lambda_{(n, 0)}$,
we can map $\theta_1$ to $\zeta_1$ in a strictly increasing fashion. 
The lead in the next segment is 0.
For $n \geq 2$, map each join segment $v_{n, i}$ 
of $\theta_n$ to the corresponding join segment $v'_{n, i}$ of 
$\zeta_n$, shifted by the lead at the current step.
That is, suppose the lead is $\ell_1$ and
the number of blocks in $\alpha$ before this segment is $B^n_{2i - 1}$.
Then, for $1 \leq i \leq 3nB^n_{2i - 1} - \ell_1$,
map the $i$-th block of the join segment of $\theta_n$ into
the $(i + \ell_1)$-th block of the corresponding join segment of $\zeta_n$.
Map the last $\ell_1$ blocks $\lambda_{(n, 0)}$ of the join segment into
the first $\ell_1$ blocks of the following segment in $\zeta_n$---which
must be $\lambda_{(n, 0)}$ or $\lambda_{(n, 1)}$ blocks.
Note that a $\lambda_{(n, 0)}$ block can be mapped into
a $\lambda_{(n, 1)}$ block.
\[
\begin{tikzpicture}
\node at (0,0) (lamn0s) {$0^{n}$};
\node at (0.7,0) (lamn1s) {$1^{n}$};
\node at (0,-1) (lamn101) {$0^{\lfloor \frac{n+1}{2} \rfloor}$}; 
\node at (0.7,-1) (lamn11) {$1$};
\node at (1.4,-1) (lamn102) {$0^{\lceil \frac{n+1}{2} \rceil}$};
\node at (2.1,-1) (lamn11s) {$1^{n}$};
\
\draw[->,thick]
(lamn0s) -- (lamn101);
\draw[->,thick]
(lamn0s) -- (lamn102);
\draw[->,thick]
(lamn1s) -- (lamn11s);
\end{tikzpicture}
\]
The lead in the next segment is $\ell_1$.

Next we describe the mapping for scaling part $s_{n, i}$ with lead $\ell_2$.
For each $k \leq n-1$,
map the first $nB^n_{2i}-\ell_2$ blocks $\lam_{(n,k)}$ of $s_{n, i}$
to the corresponding last 
$nB^n_{2i}-\ell_2$ blocks $\lam_{(n,k)}$ of $s'_{n, i}$.
Then, for each $k \leq n-2$, map the last $\ell_2$ 
blocks $\lam_{(n,k)}$ of $s_{n, i}$ to the first $\ell_2$ blocks $\lam_{(n,k+1)}$ of $s'_{n, i}$.
Observe that each block $\lambda_{(n, k)}$ can be mapped to
a block $\lambda_{(n, k + 1)}$ in a strictly increasing manner.
Further, map each of the last $\ell_2$ blocks $\lam_{(n, n - 1)}$
to exactly two blocks $\lam_{(n, 0)}$ of $s'_{n, i}$.
Map the first $2nB^n_{2i} - 2\ell_2$ blocks $\lam_{(n, 0)}$ to
the remaining $2nB^n_{2i} - 2\ell_2$ blocks $\lam_{(n, 0)}$ of $s'_{n, i}$.
Map the last $2\ell_2$ blocks $\lam_{(n, 0)}$ to
the first $2\ell_2$ blocks of the following join segment in $\zeta_n$.
The lead of the next join segment is $2\ell_2$.

For the branching segment, suppose that the current lead is
$\ell_3$. If $\mathcal{B}(n-1) = 1$, 
map the $(2nB^n_6 - \ell_3)$-th $\lam_{(n,0)}$ block to 
the concatenation $\lam_{(n,0)}\lam'_n$ of two blocks in $t'_n v_{n, 4}$.
Otherwise, map the $(2nB^n_6 - \ell_3 + 1)$-st $\lam_{(n,0)}$
block to the $\lam'_n$ block in $t'_n$.
Map the rest of the $\lam_{(n,0)}$ blocks 
such that $f$ is strictly increasing.
Then, the lead of the next join segment is $\ell_3 + \mathcal{B}(n-1)$.

For the selection segment, suppose that the current 
lead is $\ell_4$. Map the $\lam_{(n,1)}$ block to the
$(\ell_4+1)$-st block. By induction, $\ell_4 \in S_n$
and so the $(\ell_4+1)$-st block in the selection segment
of $\zeta_n$ is $\lam_{(n,1)}$.
Recall that a $\lam_{(n,0)}$
block can be mapped in a strictly increasing fashion to a $\lam_{(n,1)}$
block.
Thus, the remaining $\lam_{(n,0)}$
blocks can be mapped to the subsequent $\lam_{(n,0)}$
or $\lam_{(n,1)}$ blocks in a strictly increasing manner.   
\end{claimproof}

From Claims \ref{clm:main.clm10} and \ref{clm:main.clm11}, we conclude that
$\alpha \leq_{qi} \beta$ but $\alpha \not\leq_{mqi} \beta$.
Hence, mqi-reducibility is strictly a stronger notion 
than general quasi-isometric reducibility.
\end{proof}

\section{Automatic Quasi-Isometric Reductions}

Automatic structures were introduced independently 
by Hodgson in 1983~\cite{Hodgson83} and
by Khoussainov and Nerode in 1995~\cite{Khoussainov95}.
Since then, automatic structures have been well studied,
with many surveys written on this
topic~\cite{Gradel20,Khoussainov07,Rubin08,Stephan15}.

The results in the previous section focus mainly on 
quasi-isometric reductions which are recursive.
In this section, we extend the previous results to the area of automata theory
by studying quasi-isometric reductions which are automatic.
We first define formally the notion of automatic quasi-isometric reductions,
by replacing the recursive infinite strings
in Definition~\ref{defn:C.reduction}
with an isomorphic automatic structure.

We let the domain be some regular set $D \subseteq \Sigma^*$
for some finite alphabet $\Sigma$
and let $<_{llex}$ be the length lexicographic ordering.
We define the successor function $succ : D \to D$ such that
$succ(x) = \min_{llex}\{y \in D : x <_{llex} y\}$.
Addition by a constant is then defined using the successor function
such that for any $x \in D$ and $k \in \bbbn_0$, $x \oplus k \coloneq succ^k(x)$.
Similarly, $x \ominus k \coloneq succ^{-k}(x)$ is defined
whenever there are at least $k$ elements of $D$
that are length-lexicographically less than $x$.
Then, given a constant $k \in \bbbn_0$,
the successor function and addition by constant $k$
can be performed automatically.

From the previous paragraph, $(D, <_{llex}, succ)$ is
an automatic structure that is isomorphic to $(\bbbn, <, succ)$,
where $succ$ denotes the successor function in the respective domain.
Then, quasi-isometry results proven earlier for infinite strings
also apply to the corresponding automatic structures
by replacing $(\bbbn, <, succ)$ accordingly with
$(D, <_{llex}, succ)$.
In particular, Proposition~\ref{prop:quasi.iso.alt.def} holds
when we replace $\bbbn, <$ and $succ$ with
$D, <_{llex}$ and $succ$ respectively.
Hence, we can redefine the notion of quasi-isometric reducibility
between two automatic colourings of an automatic domain as follows.

\begin{definition}\label{defn:auto.C.reduction}
Let $C \in \bbbn$, $D \subseteq \Sigma^*$ be regular and
$\alpha, \beta$ be automatic colourings of $D$, that is,
automatic functions from $D$ to some finite sets.
A {\em $C$-quasi-isometric reduction} from $\alpha$ to $\beta$ is
a colour-preserving function
$f : D \to D$ such that for all $x, y \in D$,
\begin{enumerate}[(a)]
\item $f(\min_{llex} D) \leq_{llex} \min_{llex} D \oplus C$ and
$f(x) \ominus C \leq_{llex} f(x \oplus 1) \leq_{llex} f(x) \oplus C$; and
\item $x \oplus C <_{llex} y \Rightarrow f(x) <_{llex} f(y)$.
\end{enumerate}
\end{definition}

Where appropriate, we may drop the constant $C$ and simply call $f$
a quasi-isometric reduction, or a quasi-isometry, from $\alpha$ to $\beta$.
We are particularly interested in the case where $f$ is automatic.

\begin{definition}[Automatic Quasi-Isometric Reducibility]
Let $D \subseteq \Sigma^*$ be a regular set.
An automatic colouring $\alpha$ of $D$ is
{\em automatically quasi-isometrically reducible},
or {\em aqi-reducible}, to another automatic colouring $\beta$ of $D$
iff there exists a quasi-isometric reduction $f$ from $\alpha$ to $\beta$
such that $f$ is automatic.
\end{definition}

Note that Definition~\ref{defn:auto.C.reduction} uses
the alternate definition of quasi-isometry
given in Definition~\ref{defn:C.reduction},
which is a simpler but equivalent version of
Definition~\ref{defn:quasiisometry}.
The advantage of using this simpler definition is that
it allows automatic quasi-isometry to be defined
for a broader range of domains.
For example, the underlying metric space $(D, d_{llex})$
of the automatic structure $(D, <_{llex}, succ)$
is not always automatic as
subtraction is not automatic for many regular languages.

For our results, we define the following property about domain $D$.

\begin{definition}[\cite{Incitti01}]
Given an infinite regular set $D \subseteq \Sigma^*$,
we define {\em the growth of $D$} as
the function $growth_D : \bbbn_0 \to \bbbn_0$ such that
$growth_D(n) = |\{\sigma \in D : |\sigma| \leq n\}|$.
We say that $D$ has:
\begin{itemize}
\item {\em linear growth} if $growth_D(n) = \Theta(n)$;
\item {\em superlinear growth} if $growth_D(n) = \omega(n)$;
\item {\em polynomial growth} if $growth_D(n) = \Theta(n^c)$
for some $c \in \bbbr$ such that $c \geq 0$;
\item {\em exponential growth} if $growth_D(n) = \Omega(c^n)$
for some $c \in \bbbr$ such that $c > 1$.
\end{itemize}
\end{definition}

Results in automata theory show that if $D$ is regular,
then the growth is either polynomial or exponential, with nothing in between
\cite{Bridson02,Ibarra86,Incitti01,Latteux84,Raz97,Szilard92,Trofimov81}.
Furthermore, if the growth is polynomial, 
then it must be $\Theta(n^c)$ for some $c \in \bbbn$~\cite{Szilard92}.

Besides the growth, which counts the number of strings {\em up to} length $n$,
another useful property of a regular language is
the number of strings at {\em exactly} length $n$.

\begin{lemma}[\protect{\cite[Lemma 1]{Szilard92}}]
\label{lem:growth.lower.bound}
Suppose that $D$ contains $uv_1^*w_1 \ldots v_c^*w_c$ as a subset,
where for each $1 \leq i \leq c$, $v_i$ is a non-empty string and
the first character after $v_i^*$, if it exists, is not the same as
the first character of $v_i$.
If $n = |uw_1 \ldots w_c| + k|v_1|\ldots|v_c|$ for some $k \in \bbbn$,
then the number of strings in $D$ of length $n$ is $\Omega(n^{c-1})$.
\end{lemma}

\begin{lemma}[\protect{\cite[Lemma 4]{Szilard92}}]
\label{lem:growth.upper.bound}
Suppose that a language $D$ is
a finite union of the regular expressions of the form
$uv_1^*w_1 \ldots v_j^*w_j$ where $j \leq c$.
Then, the number of strings in $D$ of length $n$ is $O(n^{c-1})$.
\end{lemma}

We will now present our main results for this section,
which shows that whether automatic quasi-isometric reducibility exists
between two automatic colourings depends on the growth of the domain
(among other factors).

For the next theorem, we define $\beta: D \to \Gamma$ to be
{\em eventually periodic} if and only if
there are $k \in \bbbn$ and $h \in D$ such that for all $x \geq_{llex} h$,
$\beta(x \oplus k) = \beta(x)$.

\begin{theorem}\label{thm:auto.eventually.periodic}
Let $D \subseteq \Sigma^*$ be an infinite regular set,
and let $\alpha, \beta$ be automatic colourings of $D$
such that $\beta$ is eventually periodic.
The following are equivalent:
\begin{enumerate}[(a)]
\item Every colour in $\alpha$ also occurs in $\beta$ and every colour which occurs infinitely often in $\alpha$ also occurs infinitely often in $\beta$;
\item $\alpha$ is aqi-reducible to $\beta$;
\item $\alpha$ is quasi-isometrically reducible to $\beta$.
\end{enumerate}
\end{theorem}

\begin{proof}
$(a) \Rightarrow (b)$: 
We define the quasi-isometric reduction $f$ from $\alpha$ to $\beta$ such that:
\[
f(x) =
\begin{cases}
\min_{llex}\{y \in D: \alpha(x) = \beta(y)\}
& \text{if } \alpha(x) \text{ occurs only finitely often in } \alpha; \\
\min_{llex}\{y \in D: \alpha(x) = \beta(y) \text{ and } x <_{llex} y\}
& \text{if }\alpha(x) \text{ occurs infinitely often in } \alpha. \\
\end{cases}
\]
Note that $f$ is automatic since it is first-order defined
from automatic parameters $D$, $\alpha$, $\beta$ and $<_{llex}$.
Furthermore, $f$ is clearly colour-preserving.
It remains to show that $f$ satisfies Conditions (a) and (b) of
Definition~\ref{defn:auto.C.reduction}.

Let $x_0 = \min_{llex} D$.
Since $\beta$ is eventually periodic, there are $k, h \in \bbbn$ such that
for all $y \geq_{llex} x_0 \oplus h$, $\beta(y \oplus k) = \beta(y)$.
Furthermore, there exists $\ell \geq h$ such that for all $x \geq x_0 \oplus \ell$,
$\alpha(x)$ occurs infinitely often in $\alpha$.
Choose $C' = \max(\{t \in \bbbn_0 : x_0 \oplus t = f(x)$ for some
$x <_{llex} x_0 \oplus \ell\} \cup \{k, \ell\})$.

We first show that for any $x \in D$,
$x \leq_{llex} f(x) \oplus C'$ and $f(x) \leq_{lleq} x \oplus C'$.
If $x <_{llex} x_0 \oplus \ell$, then
$x \leq_{llex} f(x) \oplus C'$ and $f(x) \leq_{lleq} x \oplus C'$ by definition of $C'$.
Otherwise, then $\alpha(x)$ occurs infinitely often and
for all $x' \geq x$, $\beta(x' \oplus k) = \beta(x')$.
Hence, $x <_{llex} f(x) \leq_{llex} x \oplus k$.
Then, $x \leq_{llex} f(x) \oplus C'$ and $f(x) \leq_{llex} x \oplus C'$.

We now show that $f$ is a $C$-quasi-isometric reduction with $C = 2C' + 1$.
The first half of Condition (a) of Definition~\ref{defn:auto.C.reduction}
follows directly from our claim that $f(x) \leq_{llex} x \oplus C'$ for all $x \in D$.
To prove the second half, we have
\[
f(x) \leq_{llex} x \oplus C'
<_{llex} x \oplus 1 \oplus C'
\leq_{llex} f(x \oplus 1) \oplus 2C',
\]
and so $f(x \oplus 1) \geq_{llex} f(x) \ominus C$.
And similarly,
\[
f(x \oplus 1) \leq_{llex} x \oplus 1 \oplus C'
<_{llex} f(x) \oplus 2C' \oplus 1
= f(x) \oplus C.
\]
To prove Condition (b) of Definition~\ref{defn:auto.C.reduction},
suppose that $x \oplus C <_{llex} y$.
Note that
\[
f(x) \oplus C' \leq_{llex} x \oplus 2C' <_{llex} y \leq_{llex} f(y) \oplus C'.
\]
Then, $f(x) <_{llex} f(y)$.

$(b) \Rightarrow (c)$:
This follows from the definition of automatic quasi-isometric reducibility.

$(c) \Rightarrow (a)$:
Let $f$ be a $C$-quasi-isometric reduction from $\alpha$ to $\beta$.
Since $f$ is colour-preserving, then clearly,
every colour in $\alpha$ also occurs in $\beta$.
Furthermore, by Lemma~\ref{lem:collision}, $f$ is finite-to-one.
Hence, every colour which occurs infinitely often in $\alpha$
also occurs infinitely often in $\beta$.
\end{proof}

We next show that eventual periodicity is related to linear growth by the following proposition.

\begin{proposition}
Suppose an infinite regular set $D \subseteq \Sigma^*$ has linear growth.
Then, any automatic colouring $\beta: D \to \Gamma$ is eventually periodic.
\end{proposition}

\begin{proof}
Since $D$ has linear growth, then by Lemmas~\ref{lem:growth.lower.bound} and \ref{lem:growth.upper.bound}, it must be of the form $D = \bigcup_{i=1}^t u_iv_i^*w_i \cup D_{fin}$ where $D_{fin}$ is a finite set, $u_i, w_i \in \Sigma^*$ and $v_i \in \Sigma^+$.
Without loss of generality, we can assume that $|u_1| = \ldots = |u_t|$ and $|v_1| = \ldots = |v_t|$, since $v_i$ can be replaced by $v_i^{|v_1|\ldots|v_{i-1}||v_{i+1}|\ldots|v_t|}$ and rotated as needed so that $|u_1| = \ldots = |u_t|$.

Now consider the set $D_c$ of all the strings $x \in D$ with $\beta(x) = c$.
Since $\beta$ is regular, then so is $D_c$.
Thus, $D_c$ satisfies the pumping lemma for any large enough pumping constant $k$.
We consider the following version of the pumping lemma: for any string $x \in D_c$ of length at least $3k$, there is a representation $x = uvw$ such that $|u|, |w| \geq k$, $0 < |v| \leq k$ and $uv^*w \subseteq D_c$.
We choose a pumping constant $k$ such that $k \geq |u_iv_i|, |v_iw_i|$ for any $1 \leq i \leq t$.
Then, any $x$ of length at least $3k$ must be equal to $u_iv_i^jw_i$ for some $1 \leq i \leq t$ and $j \in \bbbn$.
Furthermore, as $D_c \subseteq D$, the representation $x = uvw$ must satisfy that $u_i$ is a prefix of $u$, $w_i$ is a suffix of $w$ and $v = v_i^{k'}$ for some $k' \leq k$.
Hence, the string $u_iv_i^{j+k!}w_i$ must be a pumped word of $u_iv_i^jw_i$, and so $u_iv_i^{j+k!}w_i \in D_c$ as well.

Next, we claim that there is a constant $m$ depending only on $k$ such that $u_iv_i^{j+k!}w_i = u_iv_i^jw_i \oplus m$ for all $i$ and large enough $j$.
First, observe that if $u_iv_i^jw_i$ is the $h$-th string in $D$ of its length, then $u_iv_i^{j+k!}w_i$ is also the $h$-th string in $D$ of its length.
Furthermore, the number of strings in $D$ of length $|u_iv_i^jw_i|$ to $|u_iv_i^{j+k!}w_i|-1$ is a constant $m$ depending only on $k$.
Hence, it follows that $u_iv_i^{j+k!}w_i = u_iv_i^jw_i \oplus m$ for all $i$ and large enough $j$.

Therefore, we can conclude that for large enough $x$, if $x \in D_c$, then $x \oplus m$ is also in $D_c$.
In other words, if $\beta(x) = c$, then $\beta(x \oplus m) = \beta(x) = c$. Note that this is true for all colours $c \in \Gamma$. Hence, for each colour $c \in \Gamma$, there is a string $x_c \in D$ and a constant $m_c \in \bbbn$ such that for all $x \geq_{llex} x_c$ with $\beta(x) = c$, we have that $\beta(x) = \beta(x \oplus m_c) = c$.
Then, by letting $M = \prod_{c \in \Gamma} m_c$ and $x_0 = \max_{llex} \{x_c : c \in \Gamma\}$, we have that for all $x \geq_{llex} x_0$, $\beta(x) = \beta(x \oplus M)$.
So, $\beta$ is eventually periodic.
\end{proof}

Hence, we have the following results.

\begin{corollary}\label{cor:auto.linear.growth}
Suppose an infinite regular set $D \subseteq \Sigma^*$ has linear growth,
and let $\alpha, \beta$ be automatic colourings of $D$.
The following are equivalent:
\begin{enumerate}[(a)]
\item Every colour in $\alpha$ also occurs in $\beta$ and every colour which occurs infinitely often in $\alpha$ also occurs infinitely often in $\beta$;
\item $\alpha$ is aqi-reducible to $\beta$;
\item $\alpha$ is quasi-isometrically reducible to $\beta$.
\end{enumerate}
\end{corollary}

The above result does not hold if $D$ has superlinear growth.
We first show a counterexample where Condition (a) holds,
but Condition (c) does not hold, and so Condition (b) does not hold.

\begin{theorem}
Suppose an infinite regular set $D \subseteq \Sigma^*$ has superlinear growth.
There exist automatic colourings $\alpha, \beta$ of $D$ such that the following statements are true:
\begin{enumerate}[(a)]
\item Every colour in $\alpha$ also occurs in $\beta$ and every colour which occurs infinitely often in $\alpha$ also occurs infinitely often in $\beta$;
\item $\alpha$ is {\em not} quasi-isometrically reducible to $\beta$.
\end{enumerate}
\end{theorem}

\begin{proof}
We define $\alpha$ and $\beta$ as follows.
Let $\alpha(x) = 1$ if
$x$ is the length-lexicographically minimum string of its length
and $\alpha(x) = 0$ otherwise.
Let $\beta(x) = 1 - \alpha(x)$.
The range of $\alpha$ and $\beta$ is the same, which is $\{0, 1\}$,
and both $0$ and $1$ occur infinitely often in both $\alpha$ and $\beta$.
Hence, $\alpha$ and $\beta$ satisfy statement (a).

We now show that $\alpha$ is not quasi-isometrically reducible to $\beta$
using Corollary~\ref{cor:nary.seq}.
We first claim that there exists $K \in \bbbn$ such that
for any $x \in  D$, there is some $x' \in D$ such that
$x \leq_{llex} x' <_{llex} x \oplus K$ and $\alpha(x') = 0$.
Let $S = \alpha^{-1}(0)$.
Since $\alpha$ is automatic, then $S$ must be regular.
Furthermore, since $D$ has superlinear growth, then by definition of $\alpha$, 
$S$ must be infinite.
So, by the pumping lemma,
$S$ must have a subset of the form $uv^*w$
for some strings $u, w \in \Sigma^*$ and $v \in \Sigma^+$.
Let $K = |uvw| + 2$.
For each $x \in D$, there is an element $x'' \in S$ such that
$|x| < |x''| \leq |x| + |uvw|$.
Then, $x'' \geq_{llex} x$.
If we also have that $x'' <_{llex} x \oplus K$, then we are done.
Otherwise, $x'' \geq_{llex} x \oplus K$ and so $|x| \leq |x \oplus K \ominus 1| \leq |x''| \leq |x| + K - 2$.
Note that for any length $n$,
there is at most one string $y \in D$ of length $n$
such that $\alpha(y) \neq 0$.
Since $S = \alpha^{-1}(0)$,
there are at most $K - 1$ elements of $D \setminus S$
of length $|x|$ to $|x| + K - 2$.
Hence, there are at most $K - 1$ elements of $D \setminus S$
which are length-lexicographically between $x$ and $x \oplus K \ominus 1$ inclusive.
Since there are $K$ elements in $D$ which are
length-lexicographically between $x$ and $x \oplus K \ominus 1$ inclusive,
one of them must be in $S$.
Therefore, for any $x \in  D$, there is some $x' \in S$ such that
$x \leq_{llex} x' <_{llex} x \oplus K$.
Moreover, by definition of $S$, $\alpha(x') = 0$.

Now suppose that there exists a $C$-quasi-isometry $f$ from $\alpha$ to $\beta$
for some $C$.
Since the quasi-isometry defined by
the automatic structure $(D, <_{llex}, succ)$ is isomorphic to
the quasi-isometry between strings defined by $(\bbbn, <, succ)$,
Corollary~\ref{cor:nary.seq} implies that for any $x \in  D$,
there is some $y \leq_{llex} y' < y \oplus KC$ such that $\beta(y') = 0$.
On the other hand, this implies that
there are at most $KC$ strings of each length,
which implies that there are at most $KCn$ strings of length up to $n-1$.
This contradicts the assumption that $D$ has superlinear growth.
Hence, $\alpha$ is not quasi-isometrically reducible to $\beta$.
\end{proof}

Furthermore, if the growth of $D$ is also polynomial (on top of being superlinear),
we can show that 
there exist automatic colourings $\alpha, \beta$ such that
$\alpha$ is quasi-isometrically reducible to $\beta$ but not automatically.
That is, we can separate the notion of
quasi-isometric reducibility from its automatic counterpart.

\begin{theorem}
Suppose an infinite regular set $D \subseteq \Sigma^*$ has superlinear but polynomial growth.
There exist automatic colourings $\alpha, \beta$ of $D$ such that the following statements are true:
\begin{enumerate}[(a)]
\item $\alpha$ is {\em not} aqi-reducible to $\beta$;
\item $\alpha$ is quasi-isometrically reducible to $\beta$.
\end{enumerate}
\end{theorem}

\begin{proof}
Suppose that $D$ has polynomial growth with degree $c \geq 2$.
Then, by Lemmas~\ref{lem:growth.lower.bound} and \ref{lem:growth.upper.bound},
$D$ must have a subset of the form
$u v_1^* w_1 v_2^* w_2 \ldots v_c^* w_c$
where for each $1 \leq i \leq c$,
$v_i$ is a non-empty string
and the first character after $v_i^*$, if it exists,
is not the same as the first character of $v_i$.

Define $t = v_1^{|v_2| \cdot \ldots \cdot |v_c|}$.
Then, $u t^* w_1 \ldots w_c$ is a subset of $D$
and $|t| = |v_1| \cdot \ldots \cdot |v_c| > 0$.
Let $w = w_1 \ldots w_c$ and
$\alpha, \beta : D \to \{0, 1\}$ be defined such that:
\begin{itemize}
\item $\alpha(x) = 1$ if and only if $x \in u (t^2)^* w$;
\item $\beta(x) = 1$ if and only if $x \in u (t^4)^* w$.
\end{itemize}
We first prove statement (a) and show that
$\alpha$ is not aqi-reducible to $\beta$.
Suppose that $f$ is an automatic $C$-quasi-isometry from $\alpha$ to $\beta$.
By Lemma~\ref{lem:growth.lower.bound},
there are $\Omega(n^{c-1})$ strings in $D$ of length $n = |uw| + (2i+1)|t|$
for any $i \in \bbbn_0$.
Since $|ut^{2i}w| < n < |ut^{2(i+1)}w|$,
there are $\Omega(n^{c-1})$ strings in $D$ which are
length-lexicographically between $ut^{2i}w$ and $ut^{2(i+1)}w$.
Note that $c \geq 2$, and so there exists some $i_0$ such that
for all $i \geq i_0$, $u t^{2i} w \oplus C <_{llex} u t^{2(i+1)} w$.
Then, $f(u t^{2i} w) <_{llex} f(u t^{2(i+1)} w)$.

Define $g : \bbbn_0 \to \bbbn_0$ such that
$f(u t^{2i} w) = u t^{4g(i)} w$ for any $i \in \bbbn_0$.
Such $g$ is well-defined by the definitions of $\alpha$ and $\beta$,
as well as the colour-preserving property of $f$.
Furthermore, since $f$ is automatic, then so is $g$.
From the previous paragraph,
we know that for all $i \geq i_0$, $g(i+1) \geq g(i) + 1$.
Then, for any $i \in \bbbn_0$, $g(i_0+i) \geq g(i_0) + i$.
Therefore, for any $x \in \bbbn_0$,
\[
f(u t^{2(i_0+i)} w) = u t^{4(g(i_0+i))} w \geq_{llex} u t^{4(g(i_0)+i)} w.
\]
This contradicts the assumption that $f$ is automatic.
Hence, an automatic quasi-isometry from $\alpha$ to $\beta$ cannot exist.

We now prove statement (b) and show that
$\alpha$ is quasi-isometrically reducible to $\beta$.

As shown earlier, Lemma~\ref{lem:growth.lower.bound} implies that
there are $\Omega(i^{c-1})$ strings in $D$ which are length-lexicographically
between $ut^{2i}w$ and $ut^{2(i+1)}w$ for any $i \in \bbbn_0$.
Furthermore, by Lemma~\ref{lem:growth.upper.bound},
there are at most $(2|t|+1)O(i^{c-1}) = O(i^{c-1})$ strings in $D$
which are length-lexicographically between $ut^{2i}w$ and $ut^{2(i+1)}w$.
Hence, there exist some $d_1, d_2 \in \bbbr^+$ such that
for all large enough $i$, we have
\begin{equation}\label{eqn:auto.poly.alpha.distribution}
d_1 i^{c-1} \leq |\{x \in D: ut^{2i}w <_{llex} x <_{llex} ut^{2(i+1)}w\}|
\leq d_2 i^{c-1}.
\end{equation}
By a similar argument, there exist some $d_3, d_4 \in \bbbr$ such that
for all large enough $i$, we have
\begin{equation}\label{eqn:auto.poly.beta.distribution}
d_3 i^{c-1} \leq |\{y \in D: ut^{4i}w <_{llex} y <_{llex} ut^{4(i+1)}w\}|
\leq d_4 i^{c-1}.
\end{equation}
Then, there is some $i_0 \in \bbbn$ such that
$(\ref{eqn:auto.poly.alpha.distribution})$ and
$(\ref{eqn:auto.poly.beta.distribution})$ are both true
for all $i \geq i_0$.

We define $f: D \to D$ as follows. For any $i \geq i_0$:
\begin{itemize}
\item $f(ut^{2i}w) = ut^{4i}w$;
\item $f(ut^{2i}w \oplus 1), f(ut^{2i}w \oplus 2), \ldots, f(ut^{2(i+1)}w \ominus 1)$
are distributed evenly between $ut^{4i}w$ and $ut^{4(i+1)}w$
in a non-decreasing manner.
\end{itemize}
Furthermore, for any $x <_{llex} ut^{2i_0}w$,
$f(x)$ is the length-lexicographically minimum string in $D$
such that $\beta(x) = \alpha(x)$.

Let $C = \max\{\lceil d_4/d_1 \rceil, \lceil d_2/d_3 \rceil, j_0\}$
where $ut^{4i_0}w = \min_{llex} D \oplus j_0$.
We show that $f$ is a $C$-quasi-isometry.
First, we show that for any $x <_{llex} ut^{2i_0}w$, $f(x) <_{llex} ut^{4i_0}w$.
Note that $i_0 \geq 1$.
Then, $uw <_{llex} ut^{3i_0}w <_{llex} ut^{4i_0}w$.
Moreover, $\beta(uw) = 1$ and $\beta(ut^{3i_0}w) = 0$.
So, for any $x <_{llex} ut^{2i_0}w$,
the length-lexicographically minimum $z \in D$ such that $\beta(z) = \alpha(x)$
satisfies $z <_{llex} ut^{4i_0}w$.
Hence, $f(x) = z <_{llex} ut^{4i_0}w$.

Then, by definition of $C$,
$f(\min_{llex} D) <_{llex} ut^{4i_0}w \leq_{llex} \min_{llex} D \oplus C$
since $\min_{llex} D <_{llex} ut^{2i_0}w$.
Similarly, if $x <_{llex} ut^{2i_0}w$, then by definition of $C$,
$f(x) \ominus C \leq_{llex} f(x \oplus 1) \leq_{llex} f(x) \oplus C$.
Now suppose that $x \geq_{llex} ut^{2i_0}w$.
Clearly, $f(x \oplus 1) \geq_{llex} f(x) \geq_{llex} f(x) \ominus C$.
Let $i$ be the largest $i'$ such that $ut^{2i'}w \leq_{llex} x$.
Note that $i \geq i_0$.
So, from (\ref{eqn:auto.poly.alpha.distribution}) and
(\ref{eqn:auto.poly.beta.distribution}), the ratio
\[
\frac{|\{y \in D: ut^{4i}w <_{llex} y <_{llex} ut^{4(i+1)}w\}|}
{|\{x \in D: ut^{2i}w <_{llex} x <_{llex} ut^{2(i+1)}w\}|} \leq \frac{d_4}{d_1}.
\]
Hence, by definition of $f$,
$f(x \oplus 1) \leq_{llex} f(x) \oplus \lceil d_4/d_1 \rceil
\leq_{llex} f(x) \oplus C$.
Therefore, $f$ satisfies Condition (a) of
Definition~\ref{defn:auto.C.reduction}.

We now show that $f$ satisfies Condition (b) of
Definition~\ref{defn:auto.C.reduction}.
Suppose that $x \oplus C <_{llex} y$.
By definition of $C$, $y \geq_{llex} ut^{2i_0}w$.
First, consider the case where $x <_{llex} ut^{2i_0}w$,
so that $f(x) <_{llex} ut^{4i_0}w$.
Observe that $f$ is non-decreasing from $ut^{2i_0}w$ onwards,
and so $f(y) \geq_{llex} f(ut^{2i_0}w) = ut^{4i_0}w$.
Then, we have $f(x) <_{llex} ut^{4i_0}w \leq_{llex} f(y)$.

Now we consider the case that $x \geq_{llex} ut^{2i_0}w$.
Let $i$ be the largest $i'$ such that $ut^{2i'}w \leq_{llex} x$.
Clearly, $i \geq i_0$.
If $x = ut^{2i}w$ or $ut^{2(i+1)}w \leq_{llex} y$,
then clearly, $f(x) <_{llex} y$.
So, we can assume that $ut^{2i}w <_{llex} x <_{llex} y <_{llex} ut^{2(i+1)}w$.
By (\ref{eqn:auto.poly.alpha.distribution}) and
(\ref{eqn:auto.poly.beta.distribution}), the ratio
\[
\frac{|\{x \in D: ut^{2i}w <_{llex} x <_{llex} ut^{2(i+1)}w\}|}
{|\{y \in D: ut^{4i}w <_{llex} y <_{llex} ut^{4(i+1)}w\}|} \leq \frac{d_2}{d_3}.
\]
Hence, there are at most $\lceil d_2/d_3 \rceil \leq C$ consecutive strings
$x', x' \oplus 1, \ldots, x' \oplus j$ between $ut^{2i}w$ and $ut^{2(i+1)}w$
which are mapped to the same $y'$ between $ut^{4i}w$ and $ut^{4(i+1)}w$.
So, since $y >_{llex} x \oplus C$, then $f(y) >_{llex} f(x)$.
\end{proof}

On the other hand, if the growth of $D$ is not polynomial, then it must be exponential.
We can separate the notions of quasi-isometric reducibility
from its automatic counterparts for some domains $D$ with exponential growth.

\begin{example}
\label{exmp:auto.exponential.growth}
Let $D \subseteq \{0, 1\}^*$ be defined as follows:
$D$ contains all $\sigma \in \{0, 1\}^*$ of even length,
and all $\tau \in 0^*1^*$ of odd length.
There exist automatic colourings $\alpha, \beta$ of $D$ such that the following statements are true:
\begin{enumerate}[(a)]
\item $\alpha$ is {\em not} aqi-reducible to $\beta$;
\item $\alpha$ is quasi-isometrically reducible to $\beta$.
\end{enumerate}

We define $\alpha, \beta : D \to \{0, 1\}$ as follows:
\begin{itemize}
\item $\alpha(\sigma) = 1$ if and only if $\sigma \in 1^*$ and
$|\sigma|$ is odd.
\item $\beta(\sigma) = 1$ if and only if $\sigma \in 0^*$ and
$|\sigma|$ is odd.
\end{itemize}
We first prove that $\alpha$ is not aqi-reducible to $\beta$.
Suppose, for the sake of contradiction, that
$f$ is a $C$-quasi-isometric reduction from $\alpha$ to $\beta$
which is automatic.
Since $f$ is colour-preserving,
then for any $n \in \bbbn_0$,
$f(1^{2n+1}) = 0^{2m+1}$ for some $m \in \bbbn_0$.
In particular, observe that $f(1^{2n+1}) = 0^{\Theta(n)}$.
Now consider $f(0^{2n+1})$ for large enough $n \in \bbbn_0$.
Note that $0^{2n+1} \oplus (2n+1) = 1^{2n+1}$.
By Condition (b) of Definition~\ref{defn:auto.C.reduction},
$f(0^{2n+1}) \oplus \lfloor (2n+1)/(C+1) \rfloor \leq f(1^{2n+1})$.
On the other hand, by Condition (a) of Definition~\ref{defn:auto.C.reduction},
$f(1^{2n+1}) \leq_{llex} f(0^{2n+1}) \oplus (2n+1)C$.
Hence, $f(0^{2n+1}) = f(1^{2n+1}) \ominus \Theta(n)
= 0^{\Theta(n)} \ominus \Theta(n)$.
In other words, $f(0^{2n+1}) = 1^m 0 \sigma$ for some $\sigma \in \{0, 1\}^*$
such that $|\sigma| = \Theta(\log n)$ and $m + |\sigma| = \Theta(n)$.

Now consider the set $S = f(\{0^{2n+1}:n \in \bbbn_0\})$.
Since $f$ is automatic, then the set $S$ is regular,
and thus satisfies the pumping lemma.
Take some $f(0^{2n'+1})$ for some large enough $n'$.
As shown earlier, $f(0^{2n'+1}) = 1^{m'} 0 \sigma'$
for some  $m' \in \bbbn_0$ and $\sigma' \in \{0, 1\}^*$ such that $|\sigma'| = \Theta(\log n')$.
So, by the pumping lemma,
there are some $u,w \in \{0, 1\}^*$ and $v \in \{0, 1\}^+$
such that $\sigma' = uvw$ and $1^{m'} 0 uv^*w \subseteq S$.
On the other hand, for all large enough $n$,
$f(0^{2n+1})$ has the form $1^m 0 \sigma$
where $|\sigma| = \Theta(\log n)$ and $m + |\sigma| = \Theta(n)$.
This contradicts that $S$ contains all words $1^{m'} 0 uv^iw$
where $i \in \bbbn_0$.
Hence, $\alpha$ is not aqi-reducible to $\beta$.

It remains to show that $\alpha$ is quasi-isometrically reducible to $\beta$.
Let $f: D \to D$ be defined as follows.
Firstly, $f(\epsilon) = f(0) = \epsilon$.
For any $n \in \bbbn_0$:
\begin{itemize}
  \item $f(1^{2n+1}) = 0^{2n+1}$;
  \item $f(1^{2n+1} \oplus 1), f(1^{2n+1} \oplus 2), \ldots,
  f(1^{2(n+1)+1} \ominus 1)$
  are distributed evenly between $0^{2n+1}$ and $0^{2(n+1)+1}$
  in a non-decreasing manner.
\end{itemize}
We shall show that
$f$ is a $1$-quasi-isometric reduction from $\alpha$ to $\beta$.

Note that $\min_{llex} D = \epsilon$.
Clearly, $f(\epsilon) = \epsilon \leq_{llex} \epsilon \oplus 1$.
Furthermore, $f$ is non-decreasing and so
for any $x \in D$, $f(x) \ominus 1 \leq_{llex} f(x) \leq_{llex} f(x \oplus 1)$.
To prove the remaining inequality
in Condition (a) of Definition~\ref{defn:auto.C.reduction},
note that for any $n \in \bbbn_0$,
$0^{2(n+1)+1} = 0^{2n+1} \oplus (2n+2+2^{2n+2})$ and
$1^{2(n+1)+1} = 1^{2n+1} \oplus (2n+4+2^{2n+2})$.
Hence, there are $2n+1+2^{2n+2}$ strings length-lexicographically
between $0^{2n+1}$ and $0^{2(n+1)+1}$,
and $2n+3+2^{2n+2}$ strings between $1^{2n+1}$ and $1^{2(n+1)+1}$.
Since there are fewer strings between $0^{2n+1}$ and $0^{2(n+1)+1}$
than between $1^{2n+1}$ and $1^{2(n+1)+1}$,
by definition of $f$, for any $x \geq_{llex} 1$,
$f(x \oplus 1) = f(x)$ or $f(x) \oplus 1$.
So, $f(x \oplus 1) \leq f(x) \oplus 1$ for all $x \in D$.
Therefore, $f$ satisfies Condition (a) of
Definition~\ref{defn:auto.C.reduction} with $C = 1$.

To prove that $f$ satisfies Condition (b) of
Definition~\ref{defn:auto.C.reduction},
note that for any $n \in \bbbn_0$, $2n+1+2^{2n+2} \geq 5$
and so the number of strings between $1^{2n+1}$ and $1^{2(n+1)+1}$ is less than
two times the number of strings between $0^{2n+1}$ and $0^{2(n+1)+1}$.
Hence, each $y \in D$ is the image of at most $2$ strings $x \in D$.
Since $f$ is non-decreasing, then
$f(x) <_{llex} f(y)$ whenever $x \oplus 1 <_{llex} y$.

Hence, $f$ satisfies Conditions (a) and (b) of
Definition~\ref{defn:auto.C.reduction},
and is a $1$-quasi-isometric reduction from $\alpha$ to $\beta$.
\end{example}

The question remains open on whether there always exist,
for every regular domain $D$ with exponential growth,
automatic colourings $\alpha, \beta$ of $D$ such that
there is a quasi-isometric reduction from $\alpha$ to $\beta$,
but no such reduction is automatic.
In particular, the construction of $\alpha$ and $\beta$
in Example~\ref{exmp:auto.exponential.growth} exploits the fact that
the subset of all strings in $D$ with odd length has quadratic growth,
and $D$ has all the strings with even length (and thus exponential growth).
Since arbitrary domains with exponential growth may not have such nice properties,
our technique may not be easily generalised to all exponential domains.

\section{Conclusions and Future Investigations}

The present paper introduced finer-grained notions of 
quasi-isometries between infinite strings, in particular 
requiring the reductions to be recursive. We showed that
permutation quasi-isometric reductions are provably more
restrictive than one-one quasi-isometric reductions,
which are in turn provably more restrictive than 
many-one quasi-isometric reductions.
One result was that general many-one quasi-isometries
are strictly more powerful than recursive many-one 
quasi-isometries, which answers
Khoussainov and Takisaka's open problem.

This work also presented some results on the structures of the
permutation, one-one and many-one quasi-isometric degrees.
It was shown, for example, that there are two infinite strings
whose many-one quasi-isometric degrees have a unique common
upper bound. It was also proven that the partial order
$\Sigma^{\omega}_{mqi}$ is non-dense with respect to 
pairs of mqi-degrees. We conclude with the simple observation
that the class of mqi-degrees does not form a lattice; in 
particular, the mqi-degrees $[0^{\omega}]_{mqi}$ and
$[1^{\omega}]_{mqi}$ do not have a common lower bound.  

Furthermore, this paper also studied quasi-isometries which are automatic.
The main results show that automatic quasi-isometry can be separated from
general quasi-isometry depending on the growth of the domain.



\bibliography{quasi-isom-strings}
\nocite{Kucera85}
\nocite{LiV19}

\appendix
\end{document}